\newtheorem{theo}{Theorem}
\newtheorem{lemma}[theo]{Lemma}
\newtheorem{prop}[theo]{Proposition}
\newtheorem{coro}[theo]{Corollary}
\newtheorem{remark}[theo]{Remark}
\newtheorem*{rem}{Remark}
\newtheorem*{define}{Definition}
\newtheorem*{assumptions}{Assumptions}
\newtheorem*{question}{Question}
\newcommand{\CC}{{\mathbb C}}
\newcommand{\NN}{{\mathbb N}}
\newcommand{\RR}{{\mathbb R}}
\newcommand{\ZZ}{{\mathbb Z}}
\newcommand{\GG}{{\mathbb G}}
\newcommand{\PP}{{\mathbb P}}
\newcommand{\FF}{{\mathbb F}}
\newcommand{\Aa}{{\mathcal A}}
\newcommand{\Pp}{{\mathcal P}}
\newcommand{\EE}{{\mathbb E}}
\newcommand{\Bb}{{\mathcal B}}
\newcommand{\Dd}{{\mathcal D}}
\newcommand{\Ff}{{\mathcal F}}
\newcommand{\Ee}{{\mathcal E}}
\newcommand{\Gg}{{\mathcal G}}
\newcommand{\Ww}{{\mathcal W}}
\newcommand{\Vv}{{\mathcal V}}
\newcommand{\Ss}{{\mathcal S}}
\newcommand{\Oo}{{\mathcal O}}
\newcommand{\Tr}{\mbox{\rm Tr}}
\newcommand{\Tt}{{\mathcal T}}
\newcommand{\Rr}{{\mathcal R}}
\newcommand{\Nn}{{\mathcal N}}
\newcommand{\Cc}{{\mathcal C}}
\newcommand{\Jj}{{\mathcal J}}
\newcommand{\Qq}{{\mathcal Q}}
\newcommand{\Yy}{{\mathcal Y}}
\newcommand{\Xx}{{\mathcal X}}
\newcommand{\one}{{\bf 1}}
\newcommand{\nul}{{\bf 0}}
\newcommand{\diag}{{\rm diag}}
\newcommand{\GL}{{\rm GL}}
\newcommand{\Mat}{{\rm Mat}}
\newcommand{\HSp}{{\rm HSp}}
\newcommand{\bgmat}{\left(\begin{matrix}}
\newcommand{\fnmat}{\end{matrix}\right)}
\newcommand{\pmat}[1]{\begin{pmatrix} #1 \end{pmatrix}}
\newcommand{\smat}[1]{\left(\begin{smallmatrix} #1 \end{smallmatrix}\right)}
\newcommand{\wt}{\widetilde}
\newcommand{\wh}{\widehat}
\newcommand{\ov}{\overline}
\newcommand{\zb}{\mathbf{z}}
\DeclareMathOperator{\re}{{\rm Re}}
\DeclareMathOperator{\im}{{\rm Im}}
\newcommand{\qtx}[1]{\quad \text{#1} \quad}
\numberwithin{equation}{section}
\numberwithin{theo}{section}
\title{A central limit theorem for products of random matrices
and GOE statistics for the Anderson model on long boxes}
\author{Christian Sadel$^1$\thanks{The work of C.S. was supported by NSERC Discovery grant 92997-2010 RGPIN and 
by the People Programme (Marie Curie Actions) of the EU 7th Framework Programme FP7/2007-2013, REA grant 291734.} , B\'alint Vir\'ag$^2$\\[.2cm]
{\small $^1$Department of Mathematics, University of British Columbia, Vancouver, BC V6T 1Z2, Canada} \\
{\small $^1$Institute of Science and Technology, 3400 Klosterneuburg, Austria} \\
{\small $^2$Department Mathematics, University of Toronto, ON M5S 2E4, Canada}}
\begin{document}

\maketitle

\setcounter{tocdepth}{2}

\begin{abstract}
 We consider products of random matrices that are small, independent identically distributed
 perturbations of a fixed matrix $\Tt_0$.
 Focusing on the eigenvalues of $\Tt_0$ of a particular size we obtain a limit to a SDE in a critical scaling.
 Previous results required $\Tt_0$ to be a (conjugated) unitary matrix so it could not have eigenvalues of different modulus.
 From the result we can also obtain a limit SDE for the Markov process given by the action of the random products on the flag manifold.
 Applying the result to random Schr\"odinger operators we can improve some result by Valko and Virag showing GOE statistics for the rescaled eigenvalue process of a sequence of Anderson models on long boxes.
 In particular we solve a problem posed in their work.
\end{abstract}

\tableofcontents

\section{Introduction and results \label{sec-intro}}

The goal of this paper is to study scaling limits of random matrix products
$$\Tt_{\lambda,n} \Tt_{\lambda,n-1}\cdots \Tt_{\lambda,1}$$
with $\lambda\to 0$ where the $\Tt_{\lambda,n}$ are perturbations of a fixed $d\times d$ matrix of the form
\begin{equation} \label{eq-def-Tt}
  \Tt_{\lambda,n} = \Tt_0 + \lambda \Vv_{\lambda,n} + \lambda^2 \Ww_\lambda\;.
\end{equation}
Here, for every $\lambda$, $\Vv_{\lambda,n}$ is a family of
independent, identically distributed random matrices
with $\EE(\Vv_{\lambda,n})=\nul$ and
$\Ww_{\lambda}$ is a deterministic matrix, both of order one.
In the simplest case, $d=1$, $\Tt_0=1$, $\Ww_\lambda=0$, and $\Vv_{\lambda,n}=\Vv_n$ are independent centered random variables
with variance one. Then, the classical Donsker's central limit theorem applied to the logarithm of $(1)$ shows that the product (for here denoted $\Xx_{\lambda, n}=\Tt_{\lambda,n}\Tt_{\lambda,n-1}\cdots \Tt_{\lambda,1}$) satisfies
$$
(\Xx_{\lambda;t/\lambda^2}, t\ge 0)\; \Longrightarrow \;(e^{B(t)-t/2},\; t\ge 0)
$$
as $\lambda \to 0$, where $B(t)$ is a standard Brownian motion. Compared to the simplest case, the general case has extra interesting features.
\begin{itemize}
\item The matrix $\Tt_0$ can have eigenvalues of different absolute values, so the product can grow exponentially at different rates in different directions.
\item The matrix $\Tt_0$ can have complex eigenvalues of the same absolute value that act as rotations; this can produce an averaging effect for the added drift and noise terms.
\end{itemize}

The main question that we resolve in this paper is the following.
\begin{question}
If the matrix $\Tt_0$ have an eigenvalue of large absolute value, can one still understand the fine evolution of the product in the directions belonging to smaller eigenvalues?
\end{question}
Matrix products of
this kind are used in the study of quasi-1-dimensional random Schr\"odinger operators, and the large eigenvalues are related to so-called hyperbolic channels. Indeed, the main motivating example is this case, which we will introduce in Section~\ref{sub-RS}.
When $\Tt_0$ is (the multiple of) a unitary matrix this type of result has been
established in that context, \cite{BR,SS2,VV1,VV2} and the limiting process is described by a
stochastic differential equation (SDE).
In \cite{KVV,VV1} the SDE limit was used to study the limiting eigenvalue statistics of such random
Schr\"odinger operators in a critical scaling $\lambda^2 n=t$
(at bandedges one has a different scaling as mentioned in Appendix~\ref{sub-Jordan}).
We can extend this result and obtain a limit for the rescaled eigenvalue process
in the presence of hyperbolic channels as well (cf. Theorem~\ref{theo-EV}).
In particular, we solve Problem~3 raised in \cite{VV1} and obtain limiting GOE statistics
for the Anderson model on sequences of long boxes (cf. Theorem~\ref{theo-GOE}) with appropriate scalings.
We essentially reduce the proof to a situation where it is left to analyze the same family of SDEs as in \cite{VV1}. Deriving the GOE statistics then
relies on the work of Erd\"os, Schlein, Yau and Yin \cite{ESYY, EYY}, but we do not repeat 
these steps that are done in \cite{VV1}. The main result is stated in Section~\ref{sub-RS}, further details are given in Section~\ref{sec-RSO}.

In the considered scaling limit $\lambda^2 n=t$ in Theorem~\ref{theo-EV} localization effects and Poisson statistics are not seen and the description through an SDE arises. 
We obtain that the hyperbolic channels only shift the eigenvalues but do not affect the local statistics.
In fact, fixing the width and base energy, the local eigenvalue statistics only depends on the number of so called elliptic channels. This can be seen as some universality statement.
Increasing the number of elliptic channels and choosing appropriate sequences of models, the GOE statistics arises.

As a byproduct of this work we solve some conjecture from \cite{Sa1} showing that
there is an SDE limit for the reduced transfer matrices in the presence of hyperbolic channels (cf. Remark~\ref{rem-red-T}) .

\vspace{.2cm}

Random matrix ensembles such as the Gaussian Orthogonal Ensemble (GOE) were introduced by Wigner \cite{Wi} to model the observed repulsion between eigenvalues in large nuclei. 
The local statistics is given by the ${\rm Sine}_1$ kernel, see e.g. \cite{Me}. 
This type of repulsion statistics is expected for many randomly disordered systems of the same symmetry class (time reversal symmetry).
that have delocalized eigenfunctions. This is referred to as {\bf universality}. 
Most models with rigorously proved universal bulk behavior are themselves ensembles of random matrices, e.g. \cite{DG, ESY, Joh, TV}.
Recently, T. Shcherbina proved universal GUE statistics (Gaussian Unitary Ensemble) for random block band matrix ensembles that 
in some sense interpolate between the classical matrix ensembles and Anderson models \cite{Shc}.

The Anderson model was introduced by P. W. Anderson to describe disordered media like randomly doped semi-conductors
\cite{And}. It is given by the Laplacian and a random independent identically distributed potential and has significantly less randomness than the matrix ensemble models.
For large disorder and at the edge of the spectrum, the Anderson model in $\ZZ^d$ or $\RR^d$ 
localizes \cite{FS, DLS, SW, CKM, AM, Aiz, Klo}
and has Poisson statistics \cite{Mi, Wa, GK}. 
For small disorder in the bulk of the spectrum, localization and Poisson statistics appears in one and quasi-one dimensional systems \cite{GMP, KuS, CKM, Lac, KLS} (except if prevented by a symmetry \cite{SS3})
and is expected (but not proved) in 2 dimensions.
Delocalization for the Anderson model was first rigorously proved on regular trees (Bethe lattices) \cite{K1} and had been extended to several 
infinite-dimensional tree-like graphs \cite{K1, ASW, FHS, AW, KLW, FHH, KS, Sa2, Sa3, Sha}.
Recently it was shown that there is a transition from localization to delocalization on normalized antitrees at exactly 2-dimensional growth rate \cite{Sa4}.
For 3 and higher dimensions one expects delocalized eigenfunctions (absolutely continuous spectrum) for small disorder and the eigenvalue statistics of large boxes should approximate GOE by universality.
However, proving any of these statements in $\ZZ^d$ or $\RR^d$ for $d\geq 3$ remains a great mathematical challenge.

\vspace{.2cm}

The papers \cite{BR, VV1, VV2} are restricted to the subset of the important cases where all eigenvalues of $\Tt_0$ had the same absolute value (and no Jordan blocks). 
The novelty of this work is to handle eigenvalues of different absolute value for $\Tt_0$, the application to Schr\"odinger operators 
comes from applying Theorem~\ref{theo-SDE} to the transfer matrices. 

Let us briefly explain why this is not a trivial extension.
If $\Tt_0$ (or $A\Tt_0 A^{-1}$ for some matrix $A$) is unitary one simply has to remove the free evolution from the random products. 
To illustrate this, let for now $\Xx_{\lambda,n}=\Tt_{\lambda,n} \Tt_{\lambda,n-1}\cdots \Tt_{\lambda,1}$. Then, 
$\Tt_0^{-n} \Xx_{\lambda,n}=(\one+\lambda \Tt_0^{-n} \Vv_{\lambda,n} \Tt_0^{(n-1)} + \lambda^2 \Tt_0^{-n} \Ww_\lambda \Tt_0^{(n-1)}) (\Tt_0^{-(n-1)}\Xx_{\lambda,n-1})$.
The conjugations like $\Tt_0^{-n} (\Vv_{\lambda,n}\Tt_0^{-1}) \Tt_0^n$ simply lead to an averaging effect over the compact group generated by the unitary $\Tt_0$ in the limit for the drift and diffusion terms.
Adopting techniques by Strook and Varadhan \cite{SV} and Ethier and Kurtz \cite{EK} to this situation (cf. Proposition~23 in \cite{VV2}) 
one directly obtains an SDE limit for $\Tt_0^{-n} \Xx_{\lambda,n}$ in the scaling $\lambda^2 n= t$.
If $\Tt_0$ has eigenvalues of different sizes then generically some entries of $\Tt_0^{-n} \Ww_\lambda \Tt^{n-1}$ and the variance of some entries of
$\Tt_0^n \Vv_{\lambda,n}\Tt_0^{n-1}$ will grow exponentially in $n$. This destroys any hope of a limiting process.
Instead one may then consider a process $U^{-n} \Xx_{\lambda,n}$  where $U$ is a unitary just counteracting the fast rotations. 
But then one still has different directions growing at different exponential rates 
even for the free evolution, and simply projecting to some subspace, $PU^{-n}\Tt_0^{-n} \Xx_{\lambda,n}$, does not work either!

The trick lies in finding a projection which cuts off the exponential growth of the free evolution {\bf and} does not screw up the convergence of the
random evolution to some drift and diffusion terms. 
The correct way to handle the exponential growing directions is choosing a Schur complement. The exponentially decreasing directions will tend to zero and not matter and the directions of size $1$
will lead to a limit. The exponential growing directions have some non-trivial effect and lead to an additional drift term. 
As the Schur complement itself is not a Markov process, it will be better to consider it as part of a quotient of $\Xx_{\lambda,n}$ modulo a certain subgroup of $\GL(d,\CC)$.
Then one still needs several estimates to handle the appearing inverses in the Schur complement and the error terms before one can apply Proposition~\ref{p_turboEK}
which is some modification of Proposition~23 in \cite{VV2}. 

\vspace{.2cm}

In the Appendix we will state some further general corollaries or applications of Theorem~\ref{theo-SDE}. 
Although we cannot take an SDE limit of the entire
matrix as indicated above, it will be possible to describe the limit of its action on Grassmannians and flag manifolds (cf. Appendix~\ref{sec-corr}). 
For this limit the correlations for SDE limits corresponding to different sizes of eigenvalues of $\Tt_0$ are important.
The limit processes live in certain submanifolds that are stable under the free, non-random
dynamics of $\Tt_0$. This result is related to the numerical calculations in \cite{RS} who considered the action of the transfer matrices on the so called Lagrangian Planes, or Lagrangian Grassmannians 
(which is some invariant subspace of a Grassmannian).
The limiting submanifold corresponds to the 'freezing' of some phases related to the hyperbolic channels. In the scaling limit, only a motion of the part corresponding to the so called elliptic channels can be seen
and it is described by a SDE.

In Appendix~\ref{sub-Jordan} we also study the case of non-diagonalizable Jordan blocks.  These can be dealt with by a $\lambda$-dependent basis change
which leads to a different critical scaling. In the Schr\"odinger case such Jordan blocks appear at band-edges and
we give an example for a Jordan block of size $2d$ for general $d$.


\subsection{General SDE limits \label{sub-results}}

Without loss of generality we focus on the eigenvalues with absolute value $1$ and assume that $\Tt_0$ has no Jordan blocks for eigenvalues of size $1$.
Next, we conjugate the matrices $\Tt_{\lambda,n}$ to get $\Tt_0$ in Jordan form. We may write it as a block diagonal matrix of dimension $d_0+d_1+d_2$ of the form
\begin{equation}\label{eq-T_0}
\Tt_0=
\begin{pmatrix}
 \Gamma_0 &  &  \\  &     U & \\ & &  \Gamma_2^{-1}
\end{pmatrix}\;
\end{equation}
where $U$ is a unitary, and $\Gamma_0$ and $\Gamma_2$ have spectral radius smaller than $1$.
The block $\Gamma_0$ corresponds to the exponential decaying directions and the block $\Gamma_2^{-1}$ to the
exponential growing directions of $\Tt_0$.

The only way the matrix product $\Tt_{\lambda,n} \cdots \Tt_{\lambda,1} $ can have a continuous limiting evolution is if we compensate for the macroscopic rotations
given by $U$ (as in \cite{BR,VV1,VV2}).
Hence define
\begin{equation} \label{eq-def-Xx}
 \Xx_{\lambda,n} := \Rr^{-n} \Tt_{\lambda,n}\Tt_{\lambda,n-1} \cdots \Tt_{\lambda,1} \Xx_0
 \quad\text{where}\quad
 \Rr=\pmat{\one_{d_0} & &  \\ & U &  \\  &  & \one_{d_2}}\,
 \end{equation}
where $\Xx_0$ is some initial condition and  $\one_d$ is the identity matrix of dimension $d$.

In most of the following calculations
we will use a subdivision in blocks of size
$d_0+d_1$ and $d_2$. Let us define the Schur complement $X_{\lambda,n}$
of size $(d_0+d_1) \times (d_0+d_1)$ by
\begin{equation}
X_{\lambda,n} = A_{\lambda,n}-B_{\lambda,n}D_{\lambda,n}^{-1} C_{\lambda,n}\,,\quad
\text{where}\quad
 \Xx_{\lambda,n}=\begin{pmatrix}
                  A_{\lambda,n} & B_{\lambda,n} \\ C_{\lambda,n} & D_{\lambda,n}
                 \end{pmatrix}
\label{eq-def-X}\,.
\end{equation}
If $\Xx_{\lambda,n}$ and $D_{\lambda,n}$ are both invertible, then
\begin{equation} \label{eq-def-X-2}
 X_{\lambda,n} = (\Pp^* \Xx_{\lambda,n}^{-1} \Pp)^{-1}\quad
\end{equation}
where $\mathcal P$ is the projection to the first $d_0+d_1$ coordinates.
Note that invertibility of $D_{\lambda,n}$ is required to define $X_{\lambda,n}$.
Therefore, we demand the starting value $D_0$ to be invertible, where
\begin{equation*}
 \Xx_0=\begin{pmatrix}
        A_0 & B_0 \\ C_0 & D_0
       \end{pmatrix}\;,\quad\text{and we define}\quad
 X_0=A_0-B_0 D_0^{-1} C_0\,.
\end{equation*}

The first important observation, explained in Section~\ref{sec-evol}, is that the pair
\begin{equation}\label{eq-def-Z}
 (X_{\lambda,n}, Z_{\lambda,n})\quad\text{where}\quad
 Z_{\lambda,n}=B_{\lambda,n} D_{\lambda,n}^{-1}
\end{equation}
is a Markov process. Therefore, it will be more convenient to study this pair.

\vspace{.2cm}

We need the following assumptions.
\begin{assumptions} \label{assumption}
We assume that for
some constants $\epsilon>0$, $\lambda_0>0$ one has
\begin{align}
\label{eq-cond-Vv-m}
 &\sup_{0<\lambda< \lambda_0} \EE(\|\Vv_{\lambda,n}\|^{6+\epsilon}) < \infty\;.
\end{align}
Furthermore we assume that the limits of first and second moments
\begin{align}
\label{eq-cond-Vv-lim}
 \lim_{\lambda\to 0} \Ww_{\lambda}\;,\qquad
 \lim_{\lambda\to 0} \EE((\Vv_{\lambda,n})_{i,j} (\Vv_{\lambda,n})_{k,l}) \;, \qquad
 &\lim_{\lambda\to 0} \EE((\Vv^*_{\lambda,n})_{i,j} (\Vv_{\lambda,n})_{l,k}) \qquad \text{exist.}
\end{align}
\end{assumptions}

\newcommand{\vl}[2]{V_{\lambda,{#1}{#2} }}
\newcommand{\wl}[2]{W_{\lambda,{#1}{#2} }}

\vspace{.2cm}

In order to state the main theorem, we need to subdivide $\Vv_{\lambda,n},\,\Ww_\lambda$ in blocks of sizes $d_0,\, d_1,\,d_2$.
We denote the $d_j \times d_k$ blocks by
\begin{equation}\label{eq-def-Vv}
\vl{j}{k}, \qquad \text{and} \qquad \wl{j}{k} \qquad \text{respectively.}
\end{equation}
The covariances of the $d_1\times d_1$ block $\vl11$ will be important.
A useful way to encode covariances of centered matrix-valued random variables $A$ and $B$ is
to consider the matrix-valued linear functions
$M\mapsto \EE(A^\top M B)$ and $M\mapsto \EE(A^* M B)$. Choosing matrices $M$ with one entry $1$ and all other entries zero one can read off $\EE(A_{ij} B_{kl})$ and $\EE(\ov{A}_{ij} B_{kl})$ directly.
Let us therefore define
\begin{equation}\label{eq:h}
 h(M):=\lim_{\lambda\to 0} \EE(\vl11^\top M \vl11)\;,\qquad
 \wh{h}(M):=\lim_{\lambda\to 0} \EE(\vl11^* M \vl11)\;.
\end{equation}
Furthermore the lowest order drift term of the limit will come from the lowest order Schur-complement and hence contain some influence from
the exponentially growing directions.
Therefore, let
\begin{equation}\label{eq-def-W}
 W:=\lim_{\lambda\to 0} \wl11-\EE( \vl12\Gamma_2\vl21)\;.
\end{equation}
By the assumption \eqref{eq-cond-Vv-lim} above these limits exist.

\begin{theo}\label{theo-SDE}
 Let the assumptions \eqref{eq-cond-Vv-m} and \eqref{eq-cond-Vv-lim} stand.
 Then, for $t>0$ we have convergence in law,
 $Z_{\frac1{\sqrt{n}},\lfloor tn \rfloor} \; \Longrightarrow \; \nul$ and
 \begin{equation}\notag
 X_{\frac{1}{\sqrt{n}},\lfloor tn \rfloor} \; \Longrightarrow \; X_t=
 \pmat{\nul_{d_0} & \\  & \Lambda_t} X_0 \qtx{for} n\to\infty\;.
 \end{equation}
 $\Lambda_t$ is a $d_1\times d_1$ matrix valued process and the solution of
\begin{equation}\notag
 d\Lambda_{t}=V \Lambda_t \,dt + d\Bb_t\,\Lambda_t\,,\quad
 \Lambda_0={\bf 1}\,.
\end{equation}
and $\Bb_t$ is a complex matrix Brownian motion (i.e.\ $\Bb_t$ is Gaussian) with covariances
\begin{equation} \label{eq-variances}
\EE(\Bb_t^\top M \Bb_t)=
g(M) t\,,\quad
\EE(\Bb_t^* M \Bb_t)=\wh g(M) t
\end{equation}
where
\begin{align} \label{eq-def-V}
 V &=\int_{\langle U\rangle}
   u W U^*  u^* \,d u  \\
 \label{eq-def-g}
 g(M) &= \int_{\langle U\rangle} \ov u \,\ov U \,h( u ^\top M  u )\,U^*  u ^*\,d u  \;\\
 \label{eq-def-hg}
 \wh g(M) &=
 \int_{\langle U\rangle}  u  U\, \wh h( u ^* M  u )\, U^*  u ^*\,d u \;.
\end{align}
Here, $\langle U\rangle$ denotes the compact abelean group generated by the unitary $U$,
i.e. the closure of the set of all powers of $U$, and $d u $ denotes the Haar measure on $\langle U \rangle$.
\end{theo}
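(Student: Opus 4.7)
The plan is to work with the Markov pair $(X_{\lambda,n}, Z_{\lambda,n})$ from \eqref{eq-def-X} and \eqref{eq-def-Z}. A direct computation (presumably recorded in Section~\ref{sec-evol}) using the conjugated transfer matrix $\Rr^{-n}\Tt_{\lambda,n}\Rr^{n-1}$, written in the $(d_0+d_1, d_2)$ block form $\smat{G_n & E_n \\ F_n & H_n}$, yields the autonomous recursion $Z_n = (G_n Z_{n-1} + E_n)(F_n Z_{n-1} + H_n)^{-1}$ together with $X_n = (G_n - Z_n F_n)\,X_{n-1}$. Thus $Z$ is Markov on its own and $X$ is Markov once $Z$ is given. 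Here $G_n = G_0 + \lambda G_n^{(1)} + \lambda^2 G_n^{(2)}$ with $G_0 = \diag(\Gamma_0, \one_{d_1})$, the off-diagonal blocks $E_n, F_n$ are of order $\lambda$, and $H_n = \Gamma_2^{-1} + O(\lambda)$.

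First I would establish uniform control of $Z_n$. Linearizing in $\lambda$, the rescaled variable $\tilde Z_n := \lambda^{-1}Z_n$ satisfies $\tilde Z_n \approx G_0 \tilde Z_{n-1}\Gamma_2 + E_n^{(1)}\Gamma_2$, which is contracting since $\rho(\Gamma_2) < 1$ and $\|G_0\|\le 1$. Iterating the recursion, and using \eqref{eq-cond-Vv-m}, one gets uniform moment bounds $\EE\|Z_n\|^p = O(\lambda^p)$; in particular $Z_{\lambda,n}\Rightarrow\nul$ and $D_{\lambda,n}$ stays invertible with high probability, so the Schur complement is well-defined along the trajectory.

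Next I would identify the drift and diffusion of $X$. The top $d_0$ rows of $X_n$ are multiplied at each step by $\Gamma_0 + O(\lambda)$ and so decay geometrically; this is the source of the $\nul_{d_0}$ block in the limit. For the bottom $d_1$ rows, which carry $\Lambda_t$, the leading noise is the left multiplication $\lambda\,U^{-n}\vl11 U^{n-1}$, and the $O(\lambda^2)$ drift combines two pieces: the $\wl11$ contribution from $G_n^{(2)}$, and the expectation of $(Z_n F_n)^{(1,1)}$. Inserting the stationary expansion $\tilde Z_n^{(1,2)} \approx \sum_{k\ge 0} U^{-(n-k)}\vl12_{n-k}\,\Gamma_2^{k+1}$ into $Z_n F_n$ and taking expectations, independence kills every term with $k \ge 1$, leaving $\lambda^2 U^{-n}\EE[\vl12\Gamma_2\vl21]U^{n-1}$. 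Combined with the $\wl11$ term this gives the $d_1\times d_1$ drift $\lambda^2 U^{-n} W U^{n-1}$ with $W$ as in \eqref{eq-def-W}.

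Finally I would feed these coefficients into Proposition~\ref{p_turboEK}. In the critical scaling $\lambda^2 = 1/n$, $k = \lfloor tn\rfloor$, the conjugators $U^{\pm k}$ equidistribute in the compact abelian group $\langle U\rangle$, so Riemann sums collapse to Haar integrals: $\tfrac1n\sum_{k=1}^{tn} U^{-k}WU^{k-1} \to t\int_{\langle U\rangle} uWU^* u^*\,du = tV$, and the bilinear forms $\EE[(U^{-k}\vl11 U^{k-1})^\top M\, (U^{-k}\vl11 U^{k-1})]$ (and the conjugate version) average to $g(M)$ and $\wh g(M)$, matching \eqref{eq-def-V}--\eqref{eq-def-hg}. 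This identifies the limiting SDE for the bottom block as $d\Lambda_t = V\Lambda_t\,dt + d\Bb_t\,\Lambda_t$. The main obstacle will be the careful bookkeeping of error terms: the $O(\lambda^3)$ residue in the recursion, the transient of $Z$ from the initial condition, and the cross-correlations between $Z_n F_n$ and the martingale increment from $\lambda G_n^{(1)}$. The $6+\epsilon$ moment assumption in \eqref{eq-cond-Vv-m} appears tailored precisely to keep these remainders tight in the scaling and to control the inverse $(F_n Z_{n-1}+H_n)^{-1}$ inside the $Z$-recursion.
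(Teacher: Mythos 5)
Your plan follows essentially the same route as the paper: the Markov pair $(X_{\lambda,n},Z_{\lambda,n})$, the Schur-complement recursion, the uniform smallness of $Z$, the drift $W=\wl11-\EE(\vl12\Gamma_2\vl21)$ arising from the same-step correlation inside $Z_nF_n$ together with $\wl11$, the Haar-measure averaging of the $U^{\pm n}$ conjugations, and the application of Proposition~\ref{p_turboEK}. The only implementation differences are that the paper controls $Z$ and the inverse $(\td+\tc Z)^{-1}$ by first truncating the noise at $\|\Yy_{\lambda,n}\|<K\lambda^{s-1}$ (giving deterministic bounds rather than your moment bounds) and additionally stops the process when $\|P_1X\|$ is large to meet the Lipschitz hypothesis of the proposition --- both are details you correctly flag as remaining bookkeeping.
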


\begin{rem} \label{rem-SDE}
\begin{enumerate}[{\rm (i)}]
\item The analogous theorem in the situation $d_2=0$ (no exponential growing directions)
holds. In this case the matrices $B_{\lambda,n}, C_{\lambda,n}$ and $D_{\lambda,n}$ do not
exist and one simply has $X_{\lambda,n}=A_{\lambda,n}=\Xx_{\lambda,n}$. For this case one can actually simplify
some of the estimates done for the proof, as one does not need to work with the process
$B_{\lambda,n}D_{\lambda,n}^{-1}$ and no inverse is required.
\item In the case where $d_0=0$, i.e. no exponential decaying directions, the Theorem
  also works fine. In this case one simply has $X_t=\Lambda_t X_{11}$.
\item The Theorem does not hold for $t=0$ and indeed it looks contradictory for small $t$.
However, the exponentially decaying directions go to zero exponentially fast so that one obtains
\begin{equation*}
 X_{\frac1{\sqrt{n}},\lfloor n^\alpha \rfloor} \,\Longrightarrow\, \pmat{\nul & \\  & \one_{d_1}} \,X_0
\end{equation*}
for sufficiently small $\alpha$ which gives the initial conditions for the limiting process.
\item \label{rem1-rot} When defining the process $\Xx_{\lambda,n}$
one may want to subtract some of the oscillating terms in the growing and decaying directions as well, i.e.
one may want to replace $\Rr$ in \eqref{eq-def-Xx} by a unitary of the form
$\hat\Rr=\smat{U_0&&\\\ &U&\\\ &&U_2}$ written in blocks of sizes $d_0, d_1, d_2$, respectively.
Then let $$\hat\Xx_{\lambda,n}=\hat\Rr^{-n}\Rr^n \Xx_{\lambda,n}=
\pmat{\hat A_{\lambda,n} & \hat B_{\lambda,n} \\ \hat C_{\lambda,n} & \hat D_{\lambda,n}}$$
and define
the corresponding Schur complement $\hat X_{\lambda,n}=\hat A_{\lambda,n} - \hat B_{\lambda,n}
\hat D_{\lambda,n}^{-1}\hat C_{\lambda,n} $ as well has $\hat Z_{\lambda,n}= \hat B_{\lambda,n} \hat D_{\lambda,n}^{-1}$.
Simple algebra shows that
$$
\hat X_{\lambda,n} = \pmat{U_0^{-n} & \\  & \one} X_{\lambda,n}\;,\qquad
 \hat Z_{\lambda,n} = \pmat{U_0^{-n} & \\  & \one} Z_{\lambda,n} U_2^n\;.
$$
Hence, it is easy to see that for $n\to \infty$,
$$
\hat Z_{\frac1{\sqrt{n}},\lfloor nt \rfloor} \Rightarrow 0\;,\qquad
\hat X_{\frac1{\sqrt{n}},\lfloor nt \rfloor} \Rightarrow X_t
$$
where $X_t$ is the exact same process as in Theorem~\ref{theo-SDE}.
\item When $\Tt_0$ has eigenvalues of absolute value $c$ different from $1$, and $\Tt_0$ is diagonalized (or in Jordan form)
so that the corresponding eigenspace are also the span of coordinate vectors and have no Jordan blocks, then
we can apply Theorem~\ref{theo-SDE} to the products of $\Tt_{\lambda,n}/c$. Moreover, considering products of direct sums 
$\Tt_{\lambda,n}/c \oplus \Tt_{\lambda,n}/c'$ the application of Theorem~\ref{theo-SDE} gives correlations of these SDEs, cf. Theorem~\ref{theo-SDE2}.

\item The stated SDE limit can be seen as limiting processes of equivalence classes of $\Rr^{-n} \Xx_{\lambda,n}$ modulo a certain group $\GG$.
This means the following: For a specific subgroup $\GG\subset \GL(d,\CC)$ we define two matrices to be equivalent, $M\sim M'$, if and only if
$M=M' G$ for some $G\in\GG$. This equivalence relation defines the quotient $\Mat(d,\CC)/\GG$. 
One may ask for which subgroups $\GG$ of $\GL(d,\CC)$ one has some normalization $\widetilde\Rr$ such that the process 
$\widetilde\Rr^{-n} \Tt_{1/\sqrt{n}, n} \cdots \Tt_{1/\sqrt{n},1}\,\Xx_{0} \;/\; \GG$
has a distributional limit. 
As we will show in Appendix~\ref{sec-corr}, for invertible and diagonalizable $\Tt_0$
we get such a limit on the so-called flag manifold (cf. Theorem~\ref{th-flag}) and whenever $\GG$ is algebraic and cocompact (cf.  Corollary~\ref{cor-flag}).

\item Finally, one might pose the question whether one can also obtain some similar result in the presence of Jordan blocks.
In fact, combining this work with techniques of \cite{SS1} one can obtain a limit for a process obtained with a $\lambda$-dependent basis change.
In terms of Schr\"odinger operators these situations occur on band-edges, cf. Appendix~\ref{sub-Jordan}.

\end{enumerate}
\end{rem}

The proof is structured in the following way. Section~\ref{app} states an abstract theorem for convergence of Markov processes to SDE limits that we will use.
In Section~\ref{sec-evol} we will develop the evolution equations for the process $(X_{\lambda,n}, Z_{\lambda,n})$, together with
some crucial estimates. In Section~\ref{sec-lim} we will then obtain the limiting stochastic differential equations
as in Theorem~\ref{theo-SDE}. The reader interested in the proofs can continue with Section~\ref{app}.

Applications to random Schr\"odinger operators are given in the following Subsection and in Section~\ref{sec-RSO} which also contains the proofs.


\subsection{The GOE limit for random Schr\"odinger operators \label{sub-RS}}

\newcommand{\am}{\mathbb Z_{n,d} }
Let $\am$ be the adjacency matrix of the $n\times d$ grid, and let $V$ be a diagonal matrix with i.i.d. random entries of the same dimension.
A fundamental question in the theory of random Schr\"odinger operators is how the eigenvalues of
\begin{equation}\label{eq:rsch}
\am + \lambda V
\end{equation}
are distributed. Predictions from the physics literature suggest that in certain scaling regimes that correspond to the delocalized regime, random matrix behavior should appear. More precisely,
the random set of eigenvalues, in a window centered at some energy $E$ and scaled properly, should converge to the Sine$_1$ point process. The latter process is the large-$n$ limit of the random set of eigenvalues of the $n\times n$ Gaussian orthogonal ensemble near 0.

A version of such predictions were proved rigorously \cite{VV1} for subsequences of $n_i\gg d_i\to\infty$, $\lambda_i^2 n_i\to 0$  but only near energies $E_i$ tending to zero. In a modified model where  the edges in the $d$ direction get weight $r<1$ the proof of \cite{VV1} works for almost all energies in the range $(-2+2r,2-2r)$.
Proving such claims for almost all energies of the original model \eqref{eq:rsch} presented a challenge, the main motivation for the present paper. 
For better comparison with \cite{VV1} let us re-introduce the weight $r$.
It is natural to think of operators like \eqref{eq:rsch} as acting on a sequence $\psi=(\psi_1,\ldots,\psi_n)$ of $d$-vectors. 
So given the weight $r$, let us define the $nd\times nd$ matrix $H_{\lambda,n,d}$ by
\begin{equation}\label{eq-H-lnd}
 (H_{\lambda,n,d} \psi)_k = \psi_{k+1} + \psi_{k-1} + (r\ZZ_d + \lambda V_k )\psi_k
\end{equation}
with the notational convention that $\psi_0=\psi_{n+1}=0$.
Here, $\ZZ_d$ is the adjacency matrix of the connected graph of a path with $d$ vertices
and the $V_k$ are i.i.d. real diagonal matrices, i.e.,
\begin{equation}\label{eq-GOE-cond1}
 \mathbb Z_d=\pmat{\nul  & \one &  & \\
 \one & \ddots & \ddots & \\
 & \ddots & \ddots & \one \\
  & & \one & \nul}\;,\qquad
 V_1=\pmat{v_1 & & \\
 & \ddots & \\
  & & v_d}
\end{equation}
with
\begin{equation} \label{eq-GOE-cond2}
\EE(v_j)=0,\quad\EE(|v_j|^{6+\varepsilon})<\infty,\qquad\EE(v_i v_j)=\delta_{ij}.
\end{equation}
Then we obtain the following:
\begin{theo}\label{theo-GOE}
 For any fixed $r>0$ and almost every energy $E\in (-2-2r,2+2r)$ there exist sequences  $n_k \gg d_k \to \infty$,\, $\sigma_k^2:=\lambda_k^2 n_k \to 0,\,$ 
 and normalizing factors $\nu_k\sim d_k n_k / \sigma_k$ such that
 the process of eigenvalues of
 $$\nu_k \left( H_{\lambda_k,n_k,d_k} - E\right)$$
 converges to the ${\rm Sine}_1$ process.
 In particular the level statistics corresponds to GOE statistics in this limit.
\end{theo}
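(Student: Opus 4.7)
The plan is to reduce Theorem~\ref{theo-GOE} to the setup already analysed in \cite{VV1} by feeding it the SDE produced by Theorem~\ref{theo-SDE}. The eigenvalue equation $H_{\lambda,n,d}\psi=E\psi$ is equivalent to iterating the $2d\times 2d$ transfer matrices
$$
T_{\lambda,k}(E) \;=\; \pmat{E\one_d - r\ZZ_d - \lambda V_k & -\one_d \\ \one_d & \nul},
$$
so the spectral analysis is controlled by the product $T_{\lambda,n}(E)\cdots T_{\lambda,1}(E)$ acting on an initial condition. Diagonalising $\ZZ_d$, whose eigenvalues are $2\cos(j\pi/(d+1))$, turns the free matrix $\Tt_0(E)$ into a direct sum of $2\times 2$ blocks with characteristic polynomials $z^2-(E-2r\cos(j\pi/(d+1)))z+1$. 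Each such block is either \emph{elliptic} (a pair of complex-conjugate unit-modulus eigenvalues, when $|E-2r\cos(j\pi/(d+1))|<2$) or \emph{hyperbolic} (one eigenvalue in $(0,1)$ and its reciprocal). A fixed permutation of coordinates then puts $\Tt_0(E)$ into the block-diagonal form \eqref{eq-T_0} with $d_0=d_2$ equal to the number of hyperbolic channels and $d_1$ equal to twice the number of elliptic channels.

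With this preparation, I would apply Theorem~\ref{theo-SDE} to the product of transfer matrices in the critical scaling $\lambda^2 n=\sigma^2$. The theorem collapses the Schur complement $X_{\lambda,n}$ onto the elliptic subspace and produces the SDE
$$
d\Lambda_t \;=\; V\Lambda_t\,dt + d\Bb_t\,\Lambda_t,
$$
where the hyperbolic channels feed into the drift only through the $\EE(\vl12\Gamma_2\vl21)$ correction in \eqref{eq-def-W} and the averaging in \eqref{eq-def-V}. Using Remark~\ref{rem-SDE}(\ref{rem1-rot}) to rotate out the free phases in each elliptic block, one checks that the resulting SDE agrees, up to a deterministic conjugation and a shift in the base energy, with the SDE driving the purely elliptic Schr\"odinger model of \cite{VV1} with the same number $d_1/2$ of elliptic channels. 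This is exactly the reduced-transfer-matrix SDE limit promised in Remark~\ref{rem-red-T} and resolves the conjecture of \cite{Sa1}. In other words, hyperbolic channels only shift eigenvalues; they leave the local statistics unchanged.

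From here I would follow \cite{VV1} verbatim. Oscillation theory for the reduced transfer matrices translates eigenvalues of $H_{\lambda,n,d}$ near $E$ into integer windings of a multi-component Pr\"ufer-type phase driven by $\Lambda_t$, and the SDE limit identifies the point process of eigenvalues rescaled by $\nu_k\sim d_kn_k/\sigma_k$ (the mean winding rate per unit energy over the box) with the limiting process from \cite{VV1}. Since the $V_k$ are real symmetric, this process is identified with the ${\rm Sine}_1$ process; the passage from the SDE-level limit to true GOE local statistics is the content of \cite{ESYY,EYY} as already invoked in \cite{VV1}, and need not be reproduced.

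The hard part, and the reason the statement only gives existence of subsequences for a full-measure set of energies, is the joint selection of $(n_k,d_k,\lambda_k,E)$. One needs (a) the transverse-mode count $d_1(E,d_k)\to\infty$ at a rate sufficient for the ${\rm Sine}_1$ limit, which rests on equidistribution of $\{j\pi/(d_k+1)\}$; (b) uniform control of the hyperbolic Schur-complement estimates underlying Theorem~\ref{theo-SDE} on energy windows of width $1/\nu_k$, forcing $n_k\gg d_k$ and $\sigma_k\to 0$ slowly; and (c) avoidance of $E$ for which a transverse mode sits on a bandedge or for which the elliptic phases fail a Diophantine condition that makes $\langle U\rangle$ a sufficiently large torus (so that the covariances $g,\wh g$ in \eqref{eq-def-g}--\eqref{eq-def-hg} coincide with those of \cite{VV1}). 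The excluded set in (c) is Lebesgue-null in $(-2-2r,2+2r)$, which is precisely the a.e.-energy clause in the statement and allows (a) and (b) to be realised along appropriate sparse subsequences.
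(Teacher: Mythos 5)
Your overall strategy is the one the paper uses: diagonalize $\ZZ_d$, split into elliptic and hyperbolic channels, apply Theorem~\ref{theo-SDE} to the conjugated transfer matrices in the scaling $\lambda^2 n=\sigma^2$, check that the hyperbolic channels enter only through a scalar drift $Q=q\one$ (so they merely shift $\varepsilon$), match the resulting SDE with the one in \cite{VV1}, and then defer to \cite{VV1} and \cite{ESYY,EYY} for the ${\rm Sine}_1$ limit as the number of elliptic channels grows. Your point (c) about excluding a null set of energies is exactly the chaoticity condition of Lemma~\ref{lem-chaotic}.

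Two genuine gaps remain. First, your explanation of why $n_k$ must run along a subsequence (your item (b), ``uniform control of the Schur-complement estimates on energy windows'') is not the real reason: the estimates of Theorem~\ref{theo-SDE} hold for all large $n$. The subsequence is forced by the boundary condition at the right endpoint, which involves the oscillatory factor $Z^{n+1}$; one must choose $n_k$ with $Z^{n_k+1}\to\one$ (Theorem~\ref{theo-EV}), and moreover with $\sigma_k^{-1}\|Z^{n_k+1}-\one\|\to 0$ so that the boundary error is negligible after the $1/\sigma_k$ rescaling (Lemma~\ref{lem-conv-RM}). Without this Diophantine-type selection of $n_k$ relative to $\sigma_k$ the rescaled eigenvalue process has no limit at all. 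Second, the step that actually produces a finite random matrix whose spectrum can be compared with GOE is the second (``noise explosion'') limit $\sigma\to 0$: setting $\widehat\Lambda^{\varepsilon,\sigma}_t=\sigma^{-1}(\Lambda^{\varepsilon\sigma,\sigma}_t-\one)$ linearizes the SDE, the limit is an explicitly solvable Brownian motion with drift, and the eigenvalue process becomes ${\rm spec}(\re(\Bb_1-\Aa_1))$, a $d_e\times d_e$ generalized Wigner matrix of the form $(d+1)^{-1/2}(K+b\one)$; only then does $d_e\to\infty$ and \cite{ESYY} give ${\rm Sine}_1$, via a diagonal sequence in $(k,d)$. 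Your normalization $\nu_k\sim d_kn_k/\sigma_k$ implicitly contains this $1/\sigma_k$ blow-up, but the linearization itself, and the covariance computation ($\EE(|(V_e)_{ij}|^2)=\langle|O_{i}|^2,|O_{j}|^2\rangle$ giving the $3/2$ versus $1$ structure needed for \cite{ESYY} to apply) are steps you would still have to carry out; they are not literally ``verbatim'' from \cite{VV1} because the basis change and the hyperbolic drift must be tracked through them. Finally, a minor point: the eigenvalue process is extracted in the paper not by Pr\"ufer-phase winding but as the zero process in $\varepsilon$ of an analytic determinant built from the Schur complement and the boundary data, using analytic versions of the limit processes.
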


Theorem \ref{theo-GOE} resolves Problem~3 posed in {\rm \cite{VV1}}. There one has $r<1$ and $E\in (-2+2r, 2-2r)$  or $r=1$ and a sequence of energies converging to $0$. 
(Note that this interval is smaller than the one in Theorem~\ref{theo-GOE} and in fact empty for $r\geq 1$). Theorem~\ref{theo-GOE} applies to the exact Anderson model $r=1$ with any fixed energy in the interior $(-4,4)$ of the spectrum of the discrete two-dimensional Laplacian. It also applies in the case $r>1$. This is because hyperbolic channels can now be handled for the SDE limit. 
The exact definition of 'elliptic' and 'hyperbolic' channels will be given in Section~\ref{sec-RSO}.
Overcoming this difficulty was the main motivation for this work.

Essentially, only the elliptic channels play a role in the eigenvalue process limit. It is thus important to have a sequence with a growing number of elliptic channels going to infinity.
Indeed, one can obtain GOE statistics even for a sequence of energies $E_k$ approaching the edge of the spectrum $|E|=2+2r$. For this, one needs that the sequence $d_k$ grows fast enough, such that the number of elliptic channels at energy $E_k$ grows.

Further details will be given in Section~\ref{sec-RSO} where we will also consider an SDE description for the eigenvalue processes of operators on strips with a fixed width in the critical scaling. 
The operators considered are slightly more general than \eqref{eq-H-lnd}.



\section{A limit theorem for Markov processes \label{app}}

The key idea is to use a variation of Proposition~23 in \cite{VV2} to obtain the convergence to the limiting process.

\begin{prop}\label{p_turboEK}
Fix $T>0$, and for each $m\ge 1$ consider a Markov chain
$$
(X^m_n\in \RR^d,\, n =1\ldots  \lfloor mT
\rfloor).
$$
as well as a sequence of ``good'' subsets $G_m$ of $\RR^d$.
Let $Y^m_n(x)$ be distributed as the increment
$X^m_{\ell+1}-x$ given $X^m_n=x\in G_m$. We define
$$
b^m(t,x)= m \EE[ Y_{\lfloor mt\rfloor}^m(x)],\qquad
a^m(t,x)=m \EE[ Y_{\lfloor mt\rfloor}^m(x) Y_{\lfloor
mt\rfloor}^m(x)^\textup{T}].
$$
Let $d'\le d$, and let $\tilde x$ denote the first $d'$ coordinates
of $x$. These are the coordinates that will be relevant in the limit.
Also let $\tilde b^m$ denote the first $d'$ coordinates of $b^m$ and $\tilde a^m$ be the
upper left $d'\times d'$ sub-matrix of $a^m$.

Furthermore, let $f$ be a function $f:\ZZ_+\to\ZZ_+$ with $f(m)=o(m)$, i.e $\lim_{m\to\infty} f(m) / m = 0$
Suppose that as $m\to \infty $ for $x,y\in G_n$ we have
\begin{align}
& |\tilde a^m(t,x)-\tilde a^m(t,y)|+|\tilde b^m(t,x)-\tilde b^m(t,y)| \le
c|\tilde x-\tilde y|+o(1)\label{e lip}\\
 &\sup_{x\in G_m,n}  \EE[|\tilde Y^m_n(x)|^3] \le cm^{\frac{-3}{2}}\quad\text{for all}\quad
 n\geq f(m) \label{e 3m},
\end{align}
and that there are functions $a,b$ from $\RR\times[0,T]$ to
$\RR^{d'^2}, \RR^{d'}$ respectively with bounded first and second
derivatives so that uniformly for $x\in G_m$,
\begin{eqnarray}
\sup_{x\in G_m, t} \Big|\int_0^t \tilde a^m(s,x)\,ds-\int_0^t a(s,\tilde x)\,ds
\Big| &\to& 0 \\
\sup_{x\in G_m,t} \Big|\int_0^t \tilde b^m(s,x)\,ds-\int_0^t b(s,\tilde x)\,ds
\Big|&\to& 0. \label{e fia}\,.
\end{eqnarray}
Suppose further that
$$
\tilde X_{f(m)}^m\Longrightarrow X_0.
$$
and that $\PP(X^m_n \in G_m \mbox{ for all }n\geq f(m))\to 1$.
Then $(\tilde X^m_{\lfloor m t\rfloor}, 0 < t \le T)$ converges in law
to the unique solution of the SDE
$$
dX = b \,dt + a\, dB, \qquad X(0)=X_0.$$
\end{prop}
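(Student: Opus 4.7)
The plan is to follow the martingale problem approach of Stroock--Varadhan and Ethier--Kurtz, adapted so that the chain need only behave well on the good set $G_m$ after the ``warm-up'' time $f(m)$. First I would define a modified chain $\tilde X^{m,*}$ agreeing with $\tilde X^m$ on the event $E_m=\{X^m_n\in G_m \text{ for all }n\geq f(m)\}$ and frozen (or driven by surrogate increments satisfying the same moment and Lipschitz bounds globally) outside $G_m$. Since $\PP(E_m)\to 1$, it suffices to prove the SDE convergence for $\tilde X^{m,*}$; the two processes agree in the limit. This reduction lets me treat the hypotheses \eqref{e lip}--\eqref{e fia} as if they held on all of $\RR^d$, which is what the classical arguments need.

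Next I would establish tightness of $(\tilde X^{m,*}_{\lfloor mt\rfloor})_{t\in[0,T]}$ in $D([0,T],\RR^{d'})$. Writing the centered increments $\tilde Y^m_n - \EE[\tilde Y^m_n\mid\Ff_n]$ as martingale differences, the third-moment bound \eqref{e 3m} gives $\EE|\tilde X^{m,*}_{\lfloor ms\rfloor}-\tilde X^{m,*}_{\lfloor mt\rfloor}|^3\le C|s-t|^{3/2}$ plus lower-order terms from the drift; this Kolmogorov-type estimate, together with the convergence of $\tilde X^{m,*}_{f(m)}$ to $X_0$ and the fact that $f(m)/m\to 0$, yields tightness and C-tightness (no jumps survive in the limit since individual increments are $O(m^{-1/2})$).

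To identify the limit I would apply It\^o's formula in discrete time: for $\varphi\in C^2_b(\RR^{d'})$,
\begin{equation*}
\varphi(\tilde X^{m,*}_{n+1})-\varphi(\tilde X^{m,*}_n)
= \nabla\varphi(\tilde X^{m,*}_n)\cdot \tilde Y^m_n
+\tfrac{1}{2}\tilde Y^m_n{}^{\top}\nabla^2\varphi(\tilde X^{m,*}_n)\,\tilde Y^m_n
+R^m_n,
\end{equation*}
with $\EE|R^m_n|\le C\,\EE|\tilde Y^m_n|^3=O(m^{-3/2})$ by \eqref{e 3m}. Taking conditional expectations and summing over $n=f(m),\ldots,\lfloor mt\rfloor$, the hypotheses on $\tilde a^m,\tilde b^m$ (the uniform convergence of their time-integrals and the Lipschitz estimate \eqref{e lip}) show that
\begin{equation*}
M^m_t:=\varphi(\tilde X^{m,*}_{\lfloor mt\rfloor})-\varphi(\tilde X^{m,*}_{f(m)})
-\int_{f(m)/m}^{t}\!\!\Bigl(b(s,\tilde X^{m,*}_{\lfloor ms\rfloor})\cdot\nabla\varphi
+\tfrac{1}{2}\mathrm{tr}(a\,\nabla^2\varphi)\Bigr)\,ds
\end{equation*}
is a martingale up to an error tending to $0$ in $L^1$. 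Combining this with tightness, any subsequential limit $X$ satisfies the martingale problem for the generator $L\varphi=b\cdot\nabla\varphi+\tfrac{1}{2}\mathrm{tr}(a\,\nabla^2\varphi)$ with initial condition $X_0$. Since $a,b$ have bounded first derivatives, the SDE has a unique strong solution, hence the martingale problem is well-posed, which pins down the limit and completes the proof.

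The main obstacle I expect is the interaction between the good set $G_m$ and the Lipschitz/limit hypotheses: one must verify carefully that the surrogate extension outside $G_m$ can be chosen so that the martingale error term still converges to zero \emph{uniformly in the extension}, and that the rare event $E_m^c$ does not inflate any $L^p$ norm of the increments used in the tightness argument. This is handled by the probability bound $\PP(E_m^c)\to 0$ combined with uniform boundedness of the drift and diffusion of the extended chain, but it is the step that distinguishes this statement from the textbook Ethier--Kurtz version.
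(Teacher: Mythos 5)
Your plan is correct in outline, but it takes a genuinely different route from the paper: the paper does not re-prove the convergence theorem at all. Its proof of Proposition~\ref{p_turboEK} is a short reduction to Proposition~23 of \cite{VV2}, which already contains the martingale-problem machinery you reconstruct (discrete It\^o expansion, tightness from the third-moment bound, identification of the limit via well-posedness of the martingale problem). The only content of the paper's argument is the three modifications: (a) the auxiliary coordinates $d'+1,\dots,d$ need not appear in the limit, which requires only checking that they do not enter the drift and variance asymptotics; (b) the good set $G_m$ is handled by stopping the chain on exit from $G_m$ rather than by your surrogate-extension device --- since $\PP(X^m_n\in G_m\text{ for all }n\ge f(m))\to 1$, the stopped and unstopped processes agree with probability tending to one; and (c) the delayed initial condition is absorbed by passing to the frozen chain $\hat X^m_l=X^m_{\max(l,f(m))}$, which satisfies the original hypotheses with $f\equiv 0$ and coincides with $X^m_{\lfloor mt\rfloor}$ for every fixed $t>0$ once $m$ is large. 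What your approach buys is self-containedness; what it costs is that you must actually carry out the hardest step, which you pass over in one sentence: the hypotheses give only convergence of the \emph{time-integrals} of $\tilde a^m,\tilde b^m$ in \eqref{e fia} (because of the oscillatory averaging over the powers of $U$), not pointwise convergence, so replacing $\tilde b^m(s,X^m_{\lfloor ms\rfloor})$ by $b(s,\tilde X^m_{\lfloor ms\rfloor})$ inside your approximate martingale requires combining \eqref{e fia} with the Lipschitz estimate \eqref{e lip} and the modulus of continuity of the prelimit paths; this is precisely the content of Proposition~23 of \cite{VV2} that the paper cites rather than repeats. Your tightness estimate via Rosenthal's inequality from \eqref{e 3m} and the treatment of the initial segment $n<f(m)$ (which is why the conclusion is stated only for $t>0$) are consistent with what that proposition delivers. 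If you intend a complete proof rather than a plan, the generator-identification step needs to be written out in full.
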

\begin{proof}
This is essentially Proposition 23 in \cite{VV2}. The first difference is
that the coordinates $d'+1,\ldots, d$ of the $X^m$ do not appear in
the limiting process. A careful examination of the proof of that Proposition shows
that it was not necessary to assume that all coordinates appear in the limit, as long as
the auxiliary coordinates do not influence the variance and drift asymptotics.

The second difference is the introduction of the ``good'' set $G_m$, possibly a proper subset of $\RR^d$. Since we assume
that the processes $X^m$ stay in $G_m$ with probability tending to one, we can apply the Proposition 23 of \cite{VV1}
to $X^m$ stopped when it leaves this set. Then, the probability that the stopped process is different from the original
tends to zero, completing the proof.

The third difference is the weak convergence of $\tilde X_{f(m)}^m$ instead of $\tilde X^n_0$
and that we have the bound in \eqref{e 3m} only for $m\geq f(m)$.
Note that for the Markov family $\hat X^{m}_l = \hat X^{m}_{\max(l,f(m))}$
all the same conditions apply with $f(m)=0$ and the initial conditions converge weakly.
Moreover, for any fixed $t>0$ and $m$ large enough one has
$\hat X^m_{\lfloor mt \rfloor}=X^m_{\lfloor mt \rfloor}$.
\end{proof}

We will use this proposition with $m=\lambda^{-2}$ or $\lambda=1/\sqrt{m}$. $X^m_n$ will correspond to the pair $(X_{1/\sqrt{m}\,,\,n}\,,\,Z_{1/\sqrt{m},n})$
whereas $\tilde X^m_n$ will only be the part of $X_{1/\sqrt{m}\,,\,n}$ giving the SDE limit in Theorem~\ref{theo-SDE}.
Moreover, many of the estimates will only work with high probability which will be treated by introducing stopping times that will not matter in the limit.


\section{Evolution equation and estimates \label{sec-evol}}

In this section we show the basic estimates needed to establish the conditions of the proposition above.
Recall that $\Tt_{\lambda,n}=\Tt_0+\lambda \Vv_{\lambda,n} + \lambda^2 \Ww_\lambda$ where the disordered part
satisfies the assumptions \eqref{eq-cond-Vv-m} and \eqref{eq-cond-Vv-lim}.
For convenience we define $\Yy_{\lambda,n} = \Vv_{\lambda,n}+\lambda \Ww_\lambda$, such that
\begin{equation*}
 \Tt_{\lambda,n} = \Tt_0+\lambda \Vv_{\lambda,n} + \lambda^2 \Ww_\lambda=\Tt_0+\lambda\Yy_{\lambda,n}\;.
\end{equation*}
Then the assumptions imply that for small $\lambda$
\begin{equation}\label{eq-cond-Yy}
\EE(\|\Yy_{\lambda,n}\|^{6+\epsilon})=\Oo(1)\;,\qquad
\EE(\Yy_{\lambda,n})=\lambda \Yy+o(1)\;.
\end{equation}

For the proof of Theorem~\ref{theo-SDE}
we will fix some time $T>0$ and obtain the SDE limit up to time $T>0$ which is fixed but arbitrary.
In principle we could work with estimates that are valid with high probability in order to obtain the limit process.
However, for the sake of keeping the arguments and estimates simpler, it will be easier to work with a cut off 
on the randomness and almost sure estimates. The cut off bound will approach infinity for $\lambda$ going to zero in a way that it does not affect the limit.
\begin{prop}
Without loss of generality we can assume
\begin{equation}
\label{eq-cond-Vv}
  \|\Yy_{\lambda,n}\| < K_\Yy \lambda^{s-1}\;\;\text{for some $\tfrac23<s<1$ and $K_\Yy>0$}\,. \\
\end{equation}
\end{prop}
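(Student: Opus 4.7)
The plan is a standard truncation argument. Fix any $s$ with $\tfrac23 < s < \tfrac{4+\epsilon}{6+\epsilon}$; this interval is nonempty since $\epsilon>0$ forces $\tfrac{4+\epsilon}{6+\epsilon}>\tfrac23$. Define the truncated perturbation
$$\widetilde\Yy_{\lambda,n} := \Yy_{\lambda,n}\,\mathbf{1}_{\{\|\Yy_{\lambda,n}\| < K_\Yy \lambda^{s-1}\}},$$
and decompose it as $\widetilde\Yy_{\lambda,n} = \widetilde\Vv_{\lambda,n} + \lambda\widetilde\Ww_\lambda$ with $\widetilde\Vv_{\lambda,n}$ centered, absorbing the (tiny) mean shift introduced by truncation into $\widetilde\Ww_\lambda$. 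The task then splits into showing (a) the truncated and original products agree with probability tending to $1$ on the relevant time window, and (b) $\widetilde\Yy_{\lambda,n}$ still satisfies \eqref{eq-cond-Vv-m}--\eqref{eq-cond-Vv-lim} with the same limits, so that Theorem~\ref{theo-SDE} for the truncated system is equivalent to the original one.

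For (a), fix the time horizon $T>0$ at which the SDE limit is to be taken. By Markov's inequality and a union bound over $n \leq \lfloor T/\lambda^2\rfloor$,
$$\PP\bigl(\exists\, n\leq \lfloor T/\lambda^2\rfloor :\ \|\Yy_{\lambda,n}\| \geq K_\Yy \lambda^{s-1}\bigr) \leq \frac{T}{\lambda^2}\cdot \frac{\EE(\|\Yy_{\lambda,n}\|^{6+\epsilon})}{(K_\Yy\lambda^{s-1})^{6+\epsilon}} = \Oo\bigl(\lambda^{(1-s)(6+\epsilon)-2}\bigr),$$
which vanishes thanks to the upper bound $s < \tfrac{4+\epsilon}{6+\epsilon}$. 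On the complementary event, the truncated and untruncated matrix products coincide path-by-path, so every weak limit of a functional of the product on $[0,T]$ is preserved.

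For (b), the $(6+\epsilon)$-moment condition is immediate because truncation only decreases the norm. For the first- and second-moment limits, for any $1\leq p \leq 6+\epsilon$,
$$\EE\bigl(\|\Yy_{\lambda,n}\|^p\, \mathbf{1}_{\{\|\Yy_{\lambda,n}\|\geq K_\Yy\lambda^{s-1}\}}\bigr) \leq \frac{\EE(\|\Yy_{\lambda,n}\|^{6+\epsilon})}{(K_\Yy\lambda^{s-1})^{6+\epsilon-p}} = \Oo\bigl(\lambda^{(1-s)(6+\epsilon-p)}\bigr).$$
At $p=1$ this is $\Oo(\lambda^{(1-s)(5+\epsilon)}) = o(\lambda)$, so $\EE(\widetilde\Yy_{\lambda,n}) = \lambda\Ww + o(\lambda)$ and the required re-centering correction $\widetilde\Ww_\lambda - \Ww_\lambda$ is $o(1)$; at $p=2$ it is $\Oo(\lambda^{(1-s)(4+\epsilon)}) = o(1)$, so the covariance limits defining $h,\wh h$ are preserved when $\Vv$ is replaced by $\widetilde\Vv$.

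The only real subtlety is the bookkeeping on $s$: one needs $s < \tfrac{4+\epsilon}{6+\epsilon}$ to make the union-bound probability vanish, and simultaneously $s > \tfrac23$ because that is the claim being proved (and is what will be needed later in Section~\ref{sec-evol}, where error terms of order $\lambda^s$ must beat the $\lambda^{2/3}$-scale fluctuations coming from Markov-type estimates on the product). Both constraints are simultaneously satisfiable precisely because the moment hypothesis \eqref{eq-cond-Vv-m} is strictly stronger than a sixth-moment bound; no hypothesis on finer structure of the law is used.
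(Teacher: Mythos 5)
Your proposal is correct and follows essentially the same route as the paper: the same truncation at level $K_\Yy\lambda^{s-1}$ with $s$ in the window $\bigl(\tfrac23,\tfrac{4+\epsilon}{6+\epsilon}\bigr)$, the same Markov-inequality tail bound with a union bound over $n\le T\lambda^{-2}$ (the paper phrases this via a stopping time, which is equivalent), and the same first- and second-moment estimates showing the limits defining $V$, $g$, $\wh g$ are unaffected. No gaps.
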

\begin{proof}

Assumption \eqref{eq-cond-Vv-m} and Markov's inequality yield
\begin{equation} \label{eq-prob-est}
 \PP(\|\Yy_{\lambda,n}\| \geq C \|)  \leq \frac{\EE(\|\Yy_{\lambda,n}\|^{6+\epsilon})}{C^{6+\epsilon}}\leq
 \frac{k}{C^{6+\epsilon}}\,
\end{equation}
for some fixed $k$, uniformly for small $\lambda$.

Now let $s$ be such that  $2/(6+\epsilon)\,<\,1-s\,<\,1/3$ and
define the truncated random variable $\widetilde \Yy_{\lambda,n}$ by
\begin{equation*}
 \wt \Yy_{\lambda,n} = \begin{cases}
                        \Yy_{\lambda,n} & \qtx{if} \|\Yy_{\lambda,n}\|<K\lambda^{s-1} \\
                        0 & \qtx{else}
                       \end{cases}
\end{equation*}
By the choice of $s$, $(1-s)>\frac2{6+\epsilon}>\frac1{5+\epsilon}$ and we obtain
\begin{align*}
 \EE(\|\Yy_{\lambda,n}-\wt\Yy_{\lambda,n} \|)\,&=\,
 \int_{\|\Yy_{\lambda,n}>K\lambda^{s-1}\|} \|\Yy_{\lambda,n} \|\,d\PP
 \,\leq\, \int_{K\lambda^{s-1}}^\infty C \,\cdot\,(6+\epsilon)\frac{k}{C^{7+\epsilon}}\,dC \notag\\
 &\leq\, \frac{(6+\epsilon) k}{(5+\epsilon)\,K^{5+\epsilon}}\,\lambda^{(5+\epsilon)(1-s)}\,=\,o(\lambda)\;
\end{align*}
and similarly
\begin{equation*}
 \EE(\|\Yy_{\lambda,n}-\wt\Yy_{\lambda,n} \|^2)\,\leq\,
 \frac{(6+\epsilon)\,k}{(4+\epsilon)\,K^{4+\epsilon}}\,\lambda^{(4+\epsilon)(1-s)}\,=\,o(1)
\end{equation*}
for $\lambda\to 0$. Thus, using $\wt \Yy_{\lambda,n}$ instead of $\Yy_{\lambda,n}$ in \eqref{eq-def-V}, \eqref{eq-def-g} and
\eqref{eq-def-hg} does not change the quantities $V$, $g(M)$ and $\hat g(M)$.
Hence, the SDE limits mentioned in Theorem~\ref{theo-SDE} for $\wt \Yy_{\lambda,n}$ and $\Yy_{\lambda,n}$ are the same.

Let us assume that Theorem~\ref{theo-SDE} is correct for $\wt \Yy_{\lambda,n}$ and obtain its validity for using
$\Yy_{\lambda,n}$ by showing that we obtain the same limit SDE.
From \eqref{eq-prob-est}
\begin{equation*}
\PP(\Yy_{\lambda,n} \neq \wt \Yy_{\lambda,n})\,\leq \frac{k}{K^{6+\epsilon} \lambda^{(6+\epsilon)(s-1)}}\,
\,=\,c\,\lambda^{(6+\epsilon)(1-s)}=c \lambda^{2+\delta}
\end{equation*}
where the last equations define $c>0$ and $\delta>0$.
Hence,
\begin{equation*}
\PP\Big(\|\Yy_{\lambda,n}\| > K \lambda^{s-1}\;\;\;
\text{for some $n=1,2,\ldots,\lfloor\lambda^{-2}T\rfloor$}\Big)\,\leq\, Tc\lambda^\delta
\end{equation*}
which approaches zero for $\lambda\to 0$.
Therefore, introducing a stopping time $T_\lambda:=\min\{n\,:\, \Yy_{\lambda,n}\neq\wt\Yy_{\lambda,n}\}$
and considering the stopped process $X_{\lambda,n \wedge T_\lambda}$ one obtains the same distributional limit process up to time $T$ (in fact for arbitrary $T$).
But for the stopped processes there is no difference when replacing $\Yy_{\lambda,n}$ by $\wt \Yy_{\lambda,n}$.
\end{proof}

Thus we may assume equation \eqref{eq-cond-Vv} without loss of generality and we will do so from now on.
 Moreover, as the spectral radius of $\Gamma_0$ and $\Gamma_2$ are smaller than 1,
 using a basis change, we may assume:\footnote{even if $\Gamma_0$ or $\Gamma_2$ are not diagonalizable, one can make the norm smaller than one as one can make the off diagonal terms of Jordan blocks arbitrarily small.}
\begin{equation}\label{eq-def-gamma}
 \|\Gamma_0\|\leq e^{-\gamma} \;,\quad
 \|\Gamma_2\|\leq e^{-\gamma}\;,\quad \text{where $\gamma>0$}.
\end{equation}

\vspace{.2cm}

Before obtaining the evolution equations, we will first establish that the pair
$(X_{\lambda,n},Z_{\lambda,n})$ is a Markov process.
Let us define the following subgroup of $\GL(d,\CC)$.
\begin{equation*}
\GG=\left\{ \begin{pmatrix}
             \one & 0 \\ C & D
            \end{pmatrix} \,\in\,{\rm Mat}(d,\CC) \;:\quad \text{where}\quad
            D \in \GL(d_2,\CC)\;
\right\}\;.
\end{equation*}
Now let $\Xx_1$ and $\Xx_2$ be equivalent, $\Xx_1\sim \Xx_2$, if
$\Xx_1=\Xx_2\Qq$ for $\Qq\in\GG$.
As different representatives differ by multiplication from the right, multiplication from the left defines an action on the
equivalence classes.
Therefore, the evolution of the equivalence classes
$[\Xx_{\lambda,n}]_\sim$ is a Markov process.
As
\begin{equation} \label{eq-std-form}
\begin{pmatrix}
 A & B \\ C & D
\end{pmatrix}
\begin{pmatrix}
 \one & \nul \\ -D^{-1} C & D^{-1}
  \end{pmatrix}
\,=\,\begin{pmatrix}
  A-BD^{-1}C & BD^{-1} \\ \nul & \one
 \end{pmatrix}
\end{equation}
we see that the equivalence class $[\Xx_{\lambda,n}]_\sim$ is determined by
the pair $(X_{\lambda,n}, Z_{\lambda,n})$.

\vspace{.2cm}

Let us further introduce the following commuting matrices of size $d_0+d_1$,
\begin{equation} \label{eq-def-R-S}
R=\pmat{\one & \nul \\ \nul & U},\quad
S=\pmat{\Gamma_0 & \nul \\ \nul & \one};\quad R,S \in {\rm Mat}(d_0+d_1,\CC)\;,
\end{equation}
Note that $R$ is unitary and that
\begin{equation} \label{eq-def-Rr}
\Rr = \pmat{R & \nul \\ \nul & \one} \in {\rm U}(d)\;.
\end{equation}
for $\Rr$ as defined in \eqref{eq-def-Xx}.
As $\Gamma_0^{n}$ is exponentially decaying,
we refer to the $d_0$ dimensional subspace corresponding  to this matrix block
as the decaying directions of $\Tt_0^n$. Similarly, the $d_2$ dimensional subspace corresponding
to the entry $\Gamma_2^{-n}$ are referred to as growing directions.

\newcommand{\va}{V_{\lambda,n}^A}
\newcommand{\vb}{V_{\lambda,n}^B}
\newcommand{\vc}{V_{\lambda,n}^C}
\newcommand{\vd}{V_{\lambda,n}^D}
\newcommand{\vxs}{V}
\newcommand{\vx}{\vxs_{\lambda,n}^X}

\newcommand{\hva}{R^{-n} V_{\lambda,n}^A R^{n-1}}
\newcommand{\hvb}{R^{-n} V_{\lambda,n}^B}
\newcommand{\hvc}{V_{\lambda,n}^C R^{n-1}}
\newcommand{\hvd}{V_{\lambda,n}^D}

\newcommand{\ta}{T_{\lambda,n}^A}
\newcommand{\tb}{T_{\lambda,n}^B}
\newcommand{\tc}{T_{\lambda,n}^C}
\newcommand{\td}{T_{\lambda,n}^D}

The evolution of $\Xx_{\lambda,n}$ is given by
\begin{equation} \label{eq-evol-Xx}
 \Xx_{\lambda,n}= \Rr^{-n} \Tt_{\lambda,n} \Rr^{n-1} \,\Xx_{\lambda,n-1}\,.
\end{equation}
Therefore, let
\begin{equation*}
 \Rr^{-n} \Tt_{\lambda,n} \Rr^{n-1} = \pmat{\ta & \tb \\ \tc & \td }\,.
\end{equation*}
Here, $A,B,C,D$ are used as indices to indicate that we use the same sub-division of the matrix as we did when defining
$A_{\lambda,n},\,B_{\lambda,n},\,C_{\lambda,n}$ and $D_{\lambda,n}$.

The action on the equivalence class of $\Xx_{\lambda,n-1} \sim \smat{X_{\lambda,n-1} & Z_{\lambda,n-1} \\ \nul& \one}$ gives
\begin{equation*}
 \pmat{\ta & \tb \\ \tc & \td } \pmat{X_{\lambda,n-1} & Z_{\lambda,n-1} \\ \nul& \one}=
 \pmat{\ta X_{\lambda,n-1} & \ta Z_{\lambda,n-1} + \tb \\ \tc X_{\lambda,n-1} & \tc Z_{\lambda,n-1} + \td}\,.
\end{equation*}
Transforming the matrix on the right hand side into the form as in \eqref{eq-std-form} we can read off
the evolution equations
\begin{equation} \label{eq-evol-Z}
 Z_{\lambda,n} = \left(\ta Z_{\lambda,n-1} + \tb \right)\,\left(\tc Z_{\lambda,n-1} + \td \right)^{-1}\;
\end{equation}
and
\begin{equation}
X_{\lambda,n} =  \ta X_{\lambda,n-1} - \left(\ta Z_{\lambda,n-1} + \tb \right) \left(\tc Z_{\lambda,n-1} + \td \right)^{-1}
 \tc X_{\lambda,n-1} \label{eq-exp-X0}\,.
\end{equation}

For more detailed calculations, let
\begin{equation}\label{eq-def-va}
\Yy_{\lambda,n}=\Vv_{\lambda,n}+\lambda\Ww_\lambda=:
\pmat{\va & \vb \\ \vc & \vd  }\;,\;\;\text{then}\quad
\Rr^{-n} \Yy_{\lambda,n} \Rr^{n-1} =\pmat{\hva & \hvb \\ \hvc & \hvd  }\,.
\end{equation}
From \eqref{eq-def-Tt}, \eqref{eq-T_0}, \eqref{eq-def-R-S} and \eqref{eq-def-Rr} one finds
\begin{equation}\label{eq-rel-t-v}
 \ta=S+\lambda\hva\;,\;\; \tb=\lambda\hvb\;,\;\; \tc=\lambda\hvc\;,\;\; \td=\Gamma_2^{-1}+\lambda\hvd\;.
\end{equation}

We will first consider the Markov process $Z_{\lambda,n}$ and denote the starting point by
$Z_0=B_0D_0^{-1}$.

\begin{prop}\label{prop-Z-small}
 For $\lambda$ small enough and some constant $K_Z$ we have the uniform bound
\begin{equation}\label{eq-Z-small}
 \|Z_{\lambda,n}\| \leq K_Z(e^{-\gamma n/2}+\lambda^s)\,=\,\Oo(e^{-\gamma n/2}, \lambda^s)\,
\end{equation}
with $\gamma$ as in \eqref{eq-def-gamma}.
This implies $Z_{\frac1{\sqrt{m}},\lfloor tm\rfloor}\Rightarrow 0$ in law.
\end{prop}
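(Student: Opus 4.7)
The idea is a geometric contraction argument: in the evolution \eqref{eq-evol-Z}, the ``denominator'' block $T^D_{\lambda,n}=\Gamma_2^{-1}+\lambda\hvd$ is a small perturbation of $\Gamma_2^{-1}$, and its inverse has norm close to $\|\Gamma_2\|\le e^{-\gamma}<1$; thus each step damps $Z_{\lambda,n-1}$, while the inhomogeneity from $T^B_{\lambda,n}$ contributes only $O(\lambda^s)$ of additional noise. As a first step I would collect the one-step norm bounds from \eqref{eq-rel-t-v} together with the truncation \eqref{eq-cond-Vv}: $\|T^A_{\lambda,n}\|\le 1+K_\Yy\lambda^s$ (using $\|S\|=1$, which follows from $\|\Gamma_0\|\le e^{-\gamma}<1$), $\|T^B_{\lambda,n}\|,\|T^C_{\lambda,n}\|\le K_\Yy\lambda^s$, and via the factorization $T^D_{\lambda,n}=\Gamma_2^{-1}(\one+\lambda\Gamma_2\hvd)$ and a Neumann series, $(T^D_{\lambda,n})^{-1}$ exists for sufficiently small $\lambda$ with $\|(T^D_{\lambda,n})^{-1}\|\le e^{-\gamma}+O(\lambda^s)$.

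\textbf{One-step recursion and iteration.} Under the inductive hypothesis $\|Z_{\lambda,n-1}\|\le M$ for a constant $M$ to be fixed, I would factor the full denominator as $T^C_{\lambda,n}Z_{\lambda,n-1}+T^D_{\lambda,n}=T^D_{\lambda,n}(\one+(T^D_{\lambda,n})^{-1}T^C_{\lambda,n}Z_{\lambda,n-1})$ and invert the second factor by another Neumann series, yielding $\|(T^C_{\lambda,n}Z_{\lambda,n-1}+T^D_{\lambda,n})^{-1}\|\le e^{-\gamma}+O(\lambda^s)$. Combining with $\|T^A_{\lambda,n}Z_{\lambda,n-1}+T^B_{\lambda,n}\|\le(1+K_\Yy\lambda^s)\|Z_{\lambda,n-1}\|+K_\Yy\lambda^s$ and shrinking $\lambda$ so that $(1+K_\Yy\lambda^s)(e^{-\gamma}+O(\lambda^s))\le e^{-\gamma/2}$, equation \eqref{eq-evol-Z} gives
\[ \|Z_{\lambda,n}\|\;\le\;e^{-\gamma/2}\,\|Z_{\lambda,n-1}\|\,+\,C\lambda^s \]
for some constant $C$. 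Iterating this linear recursion produces the geometric sum
\[ \|Z_{\lambda,n}\|\;\le\;e^{-\gamma n/2}\|Z_0\|\,+\,\frac{C\lambda^s}{1-e^{-\gamma/2}}, \]
which is precisely the claimed estimate \eqref{eq-Z-small} with $K_Z:=\max\bigl(\|Z_0\|,\,C/(1-e^{-\gamma/2})\bigr)$.

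\textbf{Closing the induction and convergence in law.} Since the derived bound is uniformly dominated by $K_Z(1+\lambda^s)$, choosing $M$ strictly larger (say $M=2K_Z$) and $\lambda_0$ small enough keeps the hypothesis $\|Z_{\lambda,n-1}\|\le M$ intact for all $n$, closing the induction. For the distributional statement, substituting $\lambda=m^{-1/2}$ and $n=\lfloor tm\rfloor$ makes both $e^{-\gamma n/2}$ and $\lambda^s=m^{-s/2}$ tend to zero as $m\to\infty$, so the deterministic bound alone forces $\|Z_{m^{-1/2},\lfloor tm\rfloor}\|\to 0$ on the event where the cutoff \eqref{eq-cond-Vv} holds uniformly in $n\le mT$---an event of probability $1-o(1)$ by the proposition preceding \eqref{eq-cond-Vv}---giving convergence in law to $\nul$. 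The main delicate point is the self-consistency of the induction: the Neumann inversions require $\|Z_{\lambda,n-1}\|$ to be controlled, and this control is only supplied \emph{by} the bound being proved; the trick is to fix the constants in the correct order ($K_Z$ first, from $\|Z_0\|$ and $\gamma$, then $M$, then $\lambda_0$), so that the induction goes through uniformly in $n$.
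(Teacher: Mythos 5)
Your proposal is correct and follows essentially the same route as the paper: a one-step contraction estimate for the M\"obius-type recursion \eqref{eq-evol-Z}, obtained from $\|\Gamma_2\|\le e^{-\gamma}$ and the truncation \eqref{eq-cond-Vv}, followed by iteration of the resulting linear recursion and the observation that both terms of the bound vanish in the scaling $\lambda=m^{-1/2}$, $n=\lfloor tm\rfloor$. The only cosmetic difference is that you invert the denominator via two Neumann series while the paper bounds it directly by $\bigl(e^{\gamma}-K_\Yy\lambda^s\|Z_{\lambda,n-1}\|\bigr)^{-1}$; your care with the order in which $K_Z$, $M$, $\lambda_0$ are fixed matches the paper's use of $\max(\|Z_0\|,1)$ as the inductive threshold.
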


\begin{proof}
Take $\lambda$ small enough, such that $e^\gamma - K_\Yy\lambda^{s} (1+\max(\|Z_0\|,1)) > e^{\gamma/2}$
(with $K_\Yy$ as in \eqref{eq-cond-Vv}) and
\begin{equation}\label{eq-cond-lb}
 \frac{(1+K_\Yy\lambda^s)\max(\|Z_0\|,1)+K_\Yy\lambda^s}{e^{\gamma}-K_\Yy\lambda^s\left (1+\max(\|Z_0\|,1)\right)}
 \,\leq\,e^{-\gamma/2}\,\max(\|Z_0\|,1).
\end{equation}
Then, using \eqref{eq-evol-Z}
we find for $\|Z_{\lambda,n-1}\|\leq \max(\|Z_0\|,1)$
\begin{equation*}
 \|Z_{\lambda,n}\| \leq  \frac{(1+K_\Yy\lambda^s)\|Z_{\lambda,n-1}\|+K_\Yy\lambda^s}
 {e^{\gamma}-K_\Yy\lambda^s\|Z_{\lambda,n-1}\|}\,\leq e^{-\gamma/2} \max(\|Z_0\|,1) \,<\, \max(\|Z_0\|,1).
\end{equation*}
Hence, inductively, $\|Z_{\lambda,n}\|<\max(\|Z_0\|,1)$
for $n\geq 1$. Thus, using this equation and \eqref{eq-cond-lb} again leads to
\begin{equation*}
 \|Z_{\lambda,n}\| \leq e^{-\gamma/2} \|Z_{\lambda,n-1}\|+K_\Yy \lambda^s\,.
\end{equation*}
By induction this yields the bound
\begin{equation*}
 \|Z_{\lambda,n}\| \leq e^{-\gamma n/2}\|Z_{0}\|\,+\, \frac{1-e^{-\gamma n/2}}{1-e^{-\gamma/2}} K_\Yy \lambda^s
\end{equation*}
proving the proposition.
\end{proof}

\begin{rem}
 Note that the estimates show that $\tc Z_{\lambda,n-1}+\td$ is invertible. Using
$D_{\lambda,n} = \tc B_{\lambda,n-1}+\td D_{\lambda,n-1}=(\tc Z_{\lambda,n-1}+\td)D_{\lambda,n-1} $
it follows inductively also that $D_{\lambda,n}$ is invertible.
Hence, $X_{\lambda,n}$ and $Z_{\lambda,n}$ are always well defined for small $\lambda$ under assumption \eqref{eq-cond-Vv}.
Hence, under the assumptions of Theorem~\ref{theo-SDE} they will be well defined up to $n=T\lambda^{-2}$
with probability going to one as $\lambda\to 0$.
\end{rem}

\vspace{.2cm}

Next, let the reminder term
$\wt\Xi_{\lambda,n}$ be given by
\begin{equation}\label{eq-def-tl-Xi}
(\td+\tc Z_{\lambda,n-1})^{-1} %
=\Gamma_2 + \wt\Xi_{\lambda,n} ,
\end{equation}
and define
\begin{equation}\label{eq-def-Xi}
\Xi_{\lambda,n} := -\ta Z_{\lambda,n-1} \Gamma_2 \hvc-(\ta Z_{\lambda,n-1}+\lambda\hvb)\wt\Xi_{\lambda,n} \hvc\,.
\end{equation}
Furthermore let
\begin{equation}\label{eq-def-W_lb}
 \vx:= \va-\lambda \vb \Gamma_2 \vc\,.
\end{equation}
The upper index $X$ should indicate that this is the important combination of the random parts $\Yy_{\lambda,n}$ that will contribute to the SDE limit for the process $X_{\lambda,n}$.
As we will establish, the 'reminder' part expressed in the $\Xi_{\lambda,n}$ terms will be of too low order and not matter in the limit.
By \eqref{eq-exp-X0} and \eqref{eq-rel-t-v} one obtains
\begin{equation} \label{eq-exp-X}
X_{\lambda,n}=S X_{\lambda,n-1} + \lambda R^{-n} \vx R^{n-1} X_{\lambda,n-1}
+ \lambda\Xi_{\lambda,n} X_{\lambda,n-1}\,.
\end{equation}
The following estimates will be needed to obtain the SDE limit.

\begin{lemma}\label{lem-estimates}
Let $\EE_{X,Z}$ denote the conditional expectation given that $X_{\lambda,n-1}=X$
and $Z_{\lambda,n-1}=Z$.
\begin{enumerate}[{\rm (i)}]
\item For small $\lambda$ one has the bounds
\begin{align}
 \label{eq-E-Xi} \EE_{X,Z}(\Xi_{\lambda,n}) &= \EE(\Xi_{\lambda,n}|Z_{\lambda,n-1}=Z) = \Oo(\lambda^{2s-1}\|Z\|,\lambda^{3s-1})\\
 \label{eq-E-Xi2}
 \EE_{X,Z} (\|\Xi_{\lambda,n}\|^2) &= \Oo(\|Z\|^2,\lambda^2,\lambda\|Z\|) \\
 \label{eq-Xi-bound} \Xi_{\lambda,n} &= \Oo(\|Z_{\lambda,n-1}\|\lambda^{s-1},\lambda^{3s-1})\,.
\end{align}
\item  $\vx$ is independent of $Z_{\lambda,n-1}$ and $X_{\lambda,n-1}$ and
there is a matrix $\vxs_0$ and a constant $K$ such that
\begin{align}
 \label{eq-E-W} \EE(\vx) &=\lambda \vxs_0 + o(\lambda) \\
 \label{eq-W-bound} \vx & = \Oo(\lambda^{s-1}) \\
 \label{eq-W3-bound}\EE(\|\vx\|^3) &\leq K =\Oo(1)\;.
\end{align}
\item
\begin{equation}
\label{eq-XX-bound} \EE_{X,Z}\Big(X_{\lambda,n}X^*_{\lambda,n}\Big)
 = SXX^*S+\|X\|^2\cdot \Oo(\lambda^2,\lambda^{2s}\|Z\|).
\end{equation}
Moreover,
there is a function $K(T)$ such that
\begin{equation}
\label{eq-XX-bound1} \EE(\|X_{\lambda,n}\|^2) \leq K(T)\quad
\text{for all $n< T \lambda^{-2}$}.
\end{equation}

\end{enumerate}
\end{lemma}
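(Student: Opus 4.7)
The proof pivots on a single Neumann-series expansion of $\wt\Xi_{\lambda,n}$. Writing $\td+\tc Z_{\lambda,n-1}=\Gamma_2^{-1}\bigl(\one+\lambda\Gamma_2(\hvd+\hvc Z_{\lambda,n-1})\bigr)$ and using $\|\Yy_{\lambda,n}\|\le K_\Yy\lambda^{s-1}$ from \eqref{eq-cond-Vv} together with $\|Z_{\lambda,n-1}\|=\Oo(1)$ from Proposition~\ref{prop-Z-small}, the bracketed operator is $\one+\Oo(\lambda^s)$ and the Neumann series converges, giving
\[
\wt\Xi_{\lambda,n}\;=\;-\lambda\,\Gamma_2(\hvd+\hvc Z_{\lambda,n-1})\Gamma_2\;+\;\Oo\!\bigl(\lambda^{2s}(1+\|Z_{\lambda,n-1}\|)^2\bigr).
\]
Substituted into \eqref{eq-def-Xi}, this rewrites $\Xi_{\lambda,n}$ as an explicit polynomial of total degree at most $5$ in the entries of $\Yy_{\lambda,n}$, with coefficients depending only on $S$, $\Gamma_2$, $R^n$, and $Z_{\lambda,n-1}$; in particular $\Xi_{\lambda,n}$ has no $X_{\lambda,n-1}$ dependence, justifying the first equality of \eqref{eq-E-Xi}. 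Everything in (i)--(iii) then reduces to term-by-term moment calculations.

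\textbf{Parts (i) and (ii).} The guiding principle is to use the $L^p$ moment bound whenever a factor of $\Yy_{\lambda,n}$ is integrated out (converting $\lambda^{s-1}$ into $\Oo(1)$) and the pointwise bound \eqref{eq-cond-Vv} for any uncontracted factor. For \eqref{eq-E-Xi}, the leading $Z$-linear contribution $-\EE(\ta Z\Gamma_2\hvc)$ is bounded by $\Oo(\lambda\|Z\|)\subset\Oo(\lambda^{2s-1}\|Z\|)$ via Cauchy--Schwarz, while the leading $Z$-independent contribution $-\lambda^2\EE(\hvb\Gamma_2\hvd\Gamma_2\hvc)$ is $\Oo(\lambda^{3s-1})$ by H\"older on third moments. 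The sup bound \eqref{eq-Xi-bound} is immediate from the Neumann expansion and \eqref{eq-cond-Vv}; \eqref{eq-E-Xi2} follows on squaring and applying Cauchy--Schwarz. Part (ii) is analogous: $\vx$ is a function of the fresh $\Yy_{\lambda,n}$, hence independent of $(X_{\lambda,n-1},Z_{\lambda,n-1})$; \eqref{eq-E-W} combines $\EE(\Vv_{\lambda,n})=\nul$ with the limits in \eqref{eq-cond-Vv-lim} applied to $\Ww_\lambda$ and to the second-moment combination in $\EE(\vb\Gamma_2\vc)$, defining $\vxs_0$. Finally \eqref{eq-W-bound} follows from \eqref{eq-cond-Vv}, and \eqref{eq-W3-bound} from H\"older on the $(6+\epsilon)$-moment assumption.

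\textbf{Part (iii).} Rewrite \eqref{eq-exp-X} as $X_{\lambda,n}=(S+\lambda M+\lambda\Xi_{\lambda,n})X_{\lambda,n-1}$ with $M:=R^{-n}\vx R^{n-1}$. Then
\[
\EE_{X,Z}(X_{\lambda,n}X_{\lambda,n}^*)=SXX^*S^*+\lambda\bigl(SXX^*\EE(M+\Xi)^*+\EE(M+\Xi)\,XX^*S^*\bigr)+\lambda^2\EE_{X,Z}\!\bigl((M+\Xi)XX^*(M+\Xi)^*\bigr).
\]
Parts (i) and (ii) give $\|\EE M\|=\Oo(\lambda)$ and $\|\EE_{X,Z}\Xi\|=\Oo(\lambda^{2s-1}\|Z\|+\lambda^{3s-1})$, so the linear-in-$\lambda$ piece is $\|X\|^2\,\Oo(\lambda^2+\lambda^{2s}\|Z\|+\lambda^{3s})$. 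The assumption $s>\tfrac23$ forces $\lambda^{3s}=o(\lambda^2)$, matching \eqref{eq-XX-bound}; the quadratic term is $\Oo(\lambda^2)\|X\|^2$ from $\EE\|M\|^2=\Oo(1)$ and \eqref{eq-E-Xi2}. For \eqref{eq-XX-bound1}, iterate \eqref{eq-XX-bound} in operator norm: since $\|S\|\le 1$ and $\EE\|Z_{\lambda,n}\|=\Oo(\lambda^s+e^{-\gamma n/2})$ by Proposition~\ref{prop-Z-small}, each step multiplies by $1+\Oo(\lambda^2)+\Oo(\lambda^{2s}e^{-\gamma n/2})$, and the product over $n\le T\lambda^{-2}$ steps remains bounded because $\sum_n e^{-\gamma n/2}<\infty$.

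\textbf{Main obstacle.} The principal difficulty is that $\Yy_{\lambda,n}$ satisfies two bounds of very different magnitudes---$\Oo(\lambda^{s-1})$ almost surely versus $\Oo(1)$ in $L^p$---and any estimate can fail by several powers of $\lambda$ if the wrong one is applied to the wrong factor of a monomial. The Neumann expansion must therefore be carried out to the exact order at which all contributions relevant to the SDE scaling $t=\lambda^2 n$ are captured, while higher-order remainders are dispatched with crude uniform bounds. The condition $s>\tfrac23$ is precisely what reduces the parasitic $\Oo(\lambda^{3s-1})$ drift in $\EE(\Xi_{\lambda,n})$ to size $o(\lambda^2)$, i.e.\ negligible against the diffusive per-step fluctuations summed over $\Oo(\lambda^{-2})$ steps.
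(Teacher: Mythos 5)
Your proposal is correct and follows essentially the same route as the paper: the same $\Oo(\lambda^s)$ control of $\wt\Xi_{\lambda,n}$ (you just carry the Neumann expansion one explicit order further), the same split of $\Xi_{\lambda,n}$ into the $Z$-linear leading term $-\ta Z\Gamma_2\hvc$ handled by moments versus a remainder handled by the almost-sure cutoff, the identical expansion for part (iii), and the same iteration combining \eqref{eq-XX-bound} with the decay of $\|Z_{\lambda,n}\|$ from Proposition~\ref{prop-Z-small}. One small correction for \eqref{eq-XX-bound1}: the recursion should be run in the Hilbert--Schmidt norm (i.e.\ on $\Tr\,\EE(X_{\lambda,n}X_{\lambda,n}^*)$), not the operator norm, since $\EE\|X\|_{op}^2$ is not controlled by $\|\EE(XX^*)\|_{op}$; with trace linearity the step-by-step factor $1+\Oo(\lambda^2)+\Oo(\lambda^{2s}e^{-\gamma n/2})$ you wrote goes through verbatim, and norm equivalence converts back at the end.
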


\begin{proof}
Note that \eqref{eq-cond-Vv} implies the uniform bounds
\begin{equation}\label{eq-bd-va}
V_{\lambda,n}^\#=\Oo(\lambda^{s-1}) \qtx{for} \#\in\{A,B,C,D,X\}
\end{equation}
As $\td=\Gamma_2^{-1}+\lambda\hvd$, $\ta=S+\lambda\hva$ one finds
\begin{equation}\label{eq-xi-est1}
 \wt \Xi_{\lambda,n}=\Oo(\lambda^s)\quad\text{and}\quad
 (\ta Z+\lambda \hvb)\wt\Xi_{\lambda,n}\hvc =\Oo(\lambda^{3s-1},\lambda^{2s-1}\|Z\|)\;.
\end{equation}
Using \eqref{eq-cond-Vv-lim} we see that
$\EE(\ta Z \Gamma_2 \hvc)=\Oo(\lambda\|Z\|)$
which together with \eqref{eq-xi-est1} gives \eqref{eq-E-Xi}.
(Note that $\va,\,\vb$ and $\vc$ are independent of $Z_{\lambda,n-1}$.)
The moment condition \eqref{eq-cond-Vv-m} also yields
$\EE (\|\ta Z \Gamma_2 \hvc\|^2) = \Oo(\|Z\|^2)$.
Combining this with \eqref{eq-xi-est1}, using Cauchy Schwarz in the form
$\EE(\|A+B\|^2) \leq \left(\sqrt{\EE(\|A\|^2)} + \sqrt{\EE(\|B\|^2)}\right)^2$
and using $\Oo(\lambda^{3s-1},\lambda^{2s-1}\|Z\|) \leq \Oo(\lambda,\|Z\|)$ we find for some constant $K$ that
$\EE_{X,Z}(\|\Xi_{\lambda,n}\|^2) \leq K (\|Z\|+\lambda)^2$ giving \eqref{eq-E-Xi2}.
Finally, \eqref{eq-bd-va} yields $\|\ta Z \Gamma_2 \hvc\|=\Oo(\|Z\|\lambda^{s-1})$ which
combined with \eqref{eq-xi-est1} gives \eqref{eq-Xi-bound}.

To get (ii) note that equation \eqref{eq-E-W} follows from
\eqref{eq-cond-Vv-lim}, \eqref{eq-bd-va} yields \eqref{eq-W-bound} and
the moment condition \eqref{eq-cond-Vv-m} implies \eqref{eq-W3-bound}.

For part (iii) note that by \eqref{eq-exp-X} one has
\begin{align}
& \EE_{X,Z} (X_{\lambda,n} X_{\lambda,n}^*) = SX X^* S + \lambda R^{-n} \EE(\vx) R^{n-1} X X^*S
+\lambda \left[R^{-n} \EE(\vx) R^{n-1} X X^*S \right]^* \notag \\
& \qquad + \lambda \EE_{X,Z}(\Xi_{\lambda,n}) X X^* S + \lambda \left[ \EE(\Xi_{\lambda,n}) X X^* S\right]^* \,+\,
\Oo\left(\lambda^2\|X\|^2 \EE_{X,Z}\big((\|\vx \|+\|\Xi_{\lambda,n}\|)^2\big) \right) \notag\,.
\end{align}
Using \eqref{eq-E-Xi}, \eqref{eq-E-W} and $\EE_{X,Z}((\|\vx\|+\|\Xi_{\lambda,n}\|)^2)=\Oo(1) $
one finally obtains equation \eqref{eq-XX-bound}. The latter estimate follows from
\eqref{eq-Z-small},\,\eqref{eq-Xi-bound}, \eqref{eq-W3-bound} and Cauchy-Schwarz.\\
For \eqref{eq-XX-bound1} note that the Hilbert-Schmidt norm is given by
$\|X\|^2_{HS}=\Tr(XX^*)$. Then \eqref{eq-Z-small} and \eqref{eq-XX-bound} imply
that for some constant $K$ one finds
$$
 \EE\left(\|X_{\lambda,n}\|_{HS}^2\right)\leq \begin{cases}
\EE\left(\|X_{\lambda,n-1}\|^2_{HS}\right) (1+K \lambda^{2s}) &
\quad \text{for $n \leq s/\gamma \ln(\lambda^{-2})$} \\
\EE\left(\|X_{\lambda,n-1}\|^2_{HS}\right) (1+K \lambda^{2}) &
\quad \text{for $n > s/\gamma \ln(\lambda^{-2})$.}
\end{cases}
$$
By induction, for small $\lambda$ and $n<T\lambda^{-2}$,
$$
\EE(\left(\|X_{\lambda,n}\|_{HS}^2\right)\leq
(1+K \lambda^{2s})^{s/\gamma \ln(\lambda^{-2})}(1+K \lambda^{2})^{T\lambda^{-2}}\|X_0 \|^2
<e^{K+TK}\,\|X_0\|^2.
$$
As all norms are equivalent, this finishes the proof.
\end{proof}



\section{Proof of Theorem~\ref{theo-SDE}, the limit of $X_{\lambda,n}$ \label{sec-lim}}

We need to split up the $(d_0+d_1)\times(d_0+d_1)$ matrix $X_{\lambda,n}$ into the corresponding blocks.
Therefore, let
\begin{equation*}
P_0=\pmat{\one & \nul }\,\in\, {\rm Mat}(d_0 \times (d_0+d_1))\,,\quad
P_1=\pmat{\nul & \one }\,\in\, {\rm Mat}(d_1 \times (d_0+d_1))
\end{equation*}
Then, using \eqref{eq-def-R-S} one finds
\begin{equation}\label{eq-P-ids}
P_0 S = \Gamma_0 P_0,\quad P_1 S  = P_1,\quad
P_0R^n= P_0,\quad P_1 R^n = U^n P_1\,.
\end{equation}
Moreover,
for any $(d_0+d_1)\times (d_0+d_1)$ matrix $M$ one has
\begin{equation}\label{eq-P-ids2}
M=\pmat{P_0 M \\ P_1 M}=\pmat{MP_0^* & MP_1^*}\,.
\end{equation}
\begin{prop} \label{prop-P0X}
There is a function $K(T)$ such that for all $n<T\lambda^{-2}$
one has
$$
 \EE(\|P_0X_{\lambda,n}\|_{HS}^2) \,\leq\, e^{-2\gamma n} \|P_0X_0\|_{HS}^2+ K(T)\lambda^{2s}
$$
In particular, for any function $f(n)\in\NN$ with $\lim_{n\to \infty} f(n)=\infty$ one has
$$
P_0X_{\frac{1}{\sqrt{n}},f(n)}\Longrightarrow
\pmat{\nul&\nul}\,.
$$
\end{prop}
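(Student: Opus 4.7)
The plan is to run the recursion from Lemma~\ref{lem-estimates}(iii) against the projector $P_0$ and exploit the contraction $\|\Gamma_0\|\le e^{-\gamma}$. More precisely, using $P_0 S=\Gamma_0 P_0$ from \eqref{eq-P-ids}, the identity $\|P_0 M\|_{HS}^2=\Tr(P_0 MM^* P_0^*)$, and estimate \eqref{eq-XX-bound}, I would take conditional expectation given $X_{\lambda,n-1}=X$, $Z_{\lambda,n-1}=Z$ to obtain
\begin{equation*}
\EE_{X,Z}\bigl(\|P_0 X_{\lambda,n}\|_{HS}^2\bigr)
=\Tr\bigl(\Gamma_0 P_0 XX^* P_0^* \Gamma_0^*\bigr)+\|X\|^2\cdot\Oo\bigl(\lambda^2,\lambda^{2s}\|Z\|\bigr)
\le e^{-2\gamma}\|P_0 X\|_{HS}^2+\|X\|^2\cdot\Oo\bigl(\lambda^2,\lambda^{2s}\|Z\|\bigr).
\end{equation*}

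Next I would take the full expectation and control the error factors via the a priori bounds already established. The factor $\EE(\|X_{\lambda,n-1}\|^2)$ is bounded uniformly by $K(T)$ on the range $n<T\lambda^{-2}$ by \eqref{eq-XX-bound1}, and the factor $\|Z_{\lambda,n-1}\|$ is bounded deterministically by $K_Z(e^{-\gamma(n-1)/2}+\lambda^s)$ thanks to Proposition~\ref{prop-Z-small} (which holds almost surely under the truncation \eqref{eq-cond-Vv} that we have already imposed without loss of generality). Since $2/3<s<1$ we have $\lambda^2\le \lambda^{2s}$, so the recursion becomes
\begin{equation*}
\EE\bigl(\|P_0 X_{\lambda,n}\|_{HS}^2\bigr)\le e^{-2\gamma}\,\EE\bigl(\|P_0 X_{\lambda,n-1}\|_{HS}^2\bigr)+C(T)\lambda^{2s}\bigl(1+e^{-\gamma(n-1)/2}\bigr).
\end{equation*}

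Iterating this inequality and using that $\sum_{k\ge 0}e^{-2\gamma k}$ and $\sum_{k\ge 0}e^{-2\gamma k}e^{-\gamma(n-1-k)/2}$ are both bounded uniformly in $n$ then gives the advertised estimate
\begin{equation*}
\EE\bigl(\|P_0 X_{\lambda,n}\|_{HS}^2\bigr)\le e^{-2\gamma n}\|P_0 X_0\|_{HS}^2+K(T)\lambda^{2s}.
\end{equation*}
For the weak convergence statement, I would specialize to $\lambda=1/\sqrt{n}$ and $m=f(n)$: then $\lambda^{2s}=n^{-s}\to 0$ and $e^{-2\gamma f(n)}\to 0$, so $P_0 X_{\frac{1}{\sqrt{n}},f(n)}\to 0$ in $L^2$ and hence in distribution by Chebyshev.

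The only mildly delicate point is that \eqref{eq-XX-bound1} is a statement valid for $n<T\lambda^{-2}$, so when deriving the convergence for $f(n)\to\infty$ one should apply the $L^2$ bound with any fixed $T$ once $f(n)\le Tn$; for faster-growing $f(n)$ one may simply invoke the bound at $n=\lfloor Tn\rfloor$ and use monotone control via the exponential factor. Apart from this bookkeeping, there is no real obstacle: the proposition is essentially a one-line contraction estimate dressed with the a priori bounds on $X$ and $Z$ that have already been assembled.
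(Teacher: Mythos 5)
Your proposal is correct and follows essentially the same route as the paper: multiply the second-moment recursion \eqref{eq-XX-bound} by $P_0$ on the left and $P_0^*$ on the right, use $P_0S=\Gamma_0P_0$ and $\|\Gamma_0\|\le e^{-\gamma}$ to get the contraction $\EE(\|P_0X_{\lambda,n}\|_{HS}^2)\le e^{-2\gamma}\EE(\|P_0X_{\lambda,n-1}\|_{HS}^2)+\Oo(\lambda^{2s})$ uniformly for $n<T\lambda^{-2}$ via \eqref{eq-XX-bound1} and Proposition~\ref{prop-Z-small}, and iterate. The extra care you take with the $e^{-\gamma(n-1)/2}$ factor from $\|Z\|$ and with the range of validity of \eqref{eq-XX-bound1} for fast-growing $f(n)$ is harmless refinement of bookkeeping the paper itself elides.
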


\begin{proof}
Multiplying \eqref{eq-XX-bound} by $P_0$ from the left and $P_0^*$ from the right,
taking expectations and using the bound \eqref{eq-XX-bound1} gives
$$
\EE(P_0 X_{\lambda,n} X_{\lambda,n}^*P_0^* )=
\Gamma_0 \EE(P_0X_{\lambda,n-1} X_{\lambda,n-1}^*P_0^*)\Gamma_0^*+
\Oo(\lambda^{2s})
$$
which leads to
$$
\EE(\|P_0X_{\lambda,n}\|_{HS}^2) \leq e^{-2\gamma} \EE(\|P_0 X_{\lambda,n-1}\|_{HS}^2)+
\Oo(\lambda^{2s})\,.
$$
where the bound for the error term is uniform in $n$ for $n<\lambda^{-2}T$.
Induction yields the stated result.
\end{proof}

Finally, let us consider the part with an interesting limit.
Multiplying \eqref{eq-exp-X} by $P_1$ from the left and and using \eqref{eq-P-ids}, \eqref{eq-P-ids2}
one finds
\begin{align}
 P_1 X_{\lambda,n} &= P_1 X_{\lambda,n-1}+\lambda {U}^{-n} P_1 \vx
 ( P_1^* U^{n-1} P_1 X_{\lambda,n-1}+P_0^*P_0 X_{\lambda,n-1}) \notag \\
 &\quad + \lambda P_1 \Xi_{\lambda,n} (P_1^* P_1 X_{\lambda,n-1}+P_0^* P_0 X_{\lambda,n-1})
 \label{eq-evol-P0X}
\end{align}
We immediately obtain the following estimate.

\begin{prop} \label{prop-P1X}
For $n<\lambda^{-2} T$ one has uniformly
$$
\EE(\|P_1X_{\lambda,n}-P_1X_0\|^2) \leq \Oo(n\lambda^{2s})\,.
$$
This implies for any function $f(n)\in\NN$ with
$\lim_{n\to \infty} f(n) n^{-s} = 0$ that
$$
P_1X_{\frac{1}{\sqrt{n}},f(n) }\,\Longrightarrow\,P_1X_0
$$
in law for $n\to \infty$.
\end{prop}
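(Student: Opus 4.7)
I would iterate \eqref{eq-evol-P0X} to write
$$
P_1 X_{\lambda,n} - P_1 X_0 = \sum_{k=1}^n (A_k + B_k), \qquad
A_k := \lambda U^{-k} P_1 V^X_{\lambda,k} R^{k-1} X_{\lambda,k-1},\quad
B_k := \lambda P_1 \Xi_{\lambda,k} X_{\lambda,k-1},
$$
using the identity $R^{k-1} = P_0^*P_0 + P_1^* U^{k-1} P_1$ from \eqref{eq-P-ids}. The task is to show $\EE\|\sum(A_k+B_k)\|^2 = \Oo(n\lambda^{2s})$ uniformly in $n \leq T\lambda^{-2}$. I would split each term via its Doob decomposition $A_k = M_k^A + c_k^A$ and $B_k = M_k^B + c_k^B$, where $c_k^A, c_k^B$ are the $\Ff_{k-1}$-conditional expectations (with $\Ff_{k-1}=\sigma(\Vv_{\lambda,1},\ldots,\Vv_{\lambda,k-1})$), and estimate the four resulting sums separately.

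\textbf{The $A$-part.} Since $V^X_{\lambda,k}$ is independent of $\Ff_{k-1}$, Lemma~\ref{lem-estimates}(ii) combined with \eqref{eq-XX-bound1} gives $\EE\|A_k\|^2 = \Oo(\lambda^2)$ and $\|c_k^A\| = \Oo(\lambda^2\|X_{\lambda,k-1}\|)$. Orthogonality of the martingale increments in the Hilbert--Schmidt inner product yields $\EE\|\sum M_k^A\|^2 = \sum \EE\|M_k^A\|^2 = \Oo(n\lambda^2)$, while Cauchy--Schwarz in $k$ together with \eqref{eq-XX-bound1} gives $\EE\|\sum c_k^A\|^2 \leq n\sum\EE\|c_k^A\|^2 = \Oo(n^2\lambda^4)$. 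On $n \leq T\lambda^{-2}$ both are bounded by $Tn\lambda^2 = \Oo(n\lambda^{2s})$ since $s<1$.

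\textbf{The $B$-part and conclusion.} Combining Lemma~\ref{lem-estimates}(i) with the deterministic bound $\|Z_{\lambda,k-1}\| = \Oo(e^{-\gamma(k-1)/2} + \lambda^s)$ of Proposition~\ref{prop-Z-small} gives $\EE\|B_k\|^2 = \Oo(\lambda^2(e^{-\gamma(k-1)} + \lambda^{2s}))$ and $\|c_k^B\| = \Oo\!\left((\lambda^{2s}e^{-\gamma(k-1)/2} + \lambda^{3s})\|X_{\lambda,k-1}\|\right)$. The martingale piece thus sums to $\sum\EE\|M_k^B\|^2 = \Oo(\lambda^2 + n\lambda^{2+2s}) = \Oo(n\lambda^{2s})$. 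For the compensator, an application of $(\sum a_kb_k)^2 \leq (\sum a_k^2)(\sum b_k^2)$ with $b_k = \|X_{\lambda,k-1}\|$, together with $\sum_k e^{-\gamma(k-1)} = \Oo(1)$ and $\sum_k \EE\|X_{\lambda,k-1}\|^2 \leq nK(T)$, yields $\EE\|\sum c_k^B\|^2 = \Oo(n\lambda^{4s} + n^2\lambda^{6s})$; the standing hypothesis $s>2/3$ ensures $n^2\lambda^{6s} \leq Tn\lambda^{6s-2} = \Oo(n\lambda^{2s})$ on $n\leq T\lambda^{-2}$. Summing the four contributions proves the estimate. Specialising $\lambda = 1/\sqrt n$ at time $f(n)$ turns the bound into $\Oo(f(n)n^{-s})$, which tends to $0$ by hypothesis (and $f(n)=o(n^s)=o(n)$ keeps us within the allowed time window), so $L^2$ convergence gives the asserted convergence in law to the constant $P_1 X_0$.

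\textbf{Main obstacle.} The only delicate step is the compensator $\sum c_k^B$: it is not exponentially small per step and carries the random factor $\|X_{\lambda,k-1}\|$. Taming it requires invoking both clauses of the sharp estimate \eqref{eq-E-Xi} simultaneously, the deterministic exponential decay of $\|Z\|$ (so that $\sum e^{-\gamma(k-1)/2}$ stays $\Oo(1)$), and the standing assumption $s>2/3$, without which $n^2\lambda^{6s}$ would fail to be absorbed into $n\lambda^{2s}$ over the full range $n\leq T\lambda^{-2}$.
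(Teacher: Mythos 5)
Your proof is correct and rests on exactly the same ingredients as the paper's: the estimates of Lemma~\ref{lem-estimates}, the deterministic decay of $\|Z_{\lambda,n}\|$ from Proposition~\ref{prop-Z-small}, and the second-moment bound \eqref{eq-XX-bound1}. The paper merely packages the telescoping sum as a one-step recursion for the conditional second moment of $P_1X_{\lambda,n}-P_1X_0$ followed by induction, which is the same martingale-orthogonality mechanism you spell out explicitly via the Doob decomposition, so there is no gap.
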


\begin{proof}
 Using the estimates of Lemma~\ref{lem-estimates} one finds similarly to \eqref{eq-XX-bound} that
\begin{align*}
\EE_{X,Z}((P_1X_{\lambda,n}-P_1X_0) (P_1X_{\lambda,n}-P_1X_0)^*) =
P_1XX^*P_1^* + \|X\|^2 \Oo(\lambda^2,\lambda^{2s} \|Z\|)\,.
\end{align*}
Using \eqref{eq-XX-bound1} and $\|Z_{\lambda,n-1}\|\leq\Oo(1)$ from \eqref{eq-Z-small} we find therefore
 that uniformly for $n<\lambda^{-2} T$
\begin{align*}
&\EE((P_1X_{\lambda,n}-P_1X_0) (P_1X_{\lambda,n}-P_1X_0)^*) = \\
&\qquad \qquad \EE((P_1X_{\lambda,n-1}-P_1X_0)(P_1X_{\lambda,n-1}-P_1X_0)^*) + \Oo(\lambda^{2s})
\end{align*}
Taking traces (Hilbert-Schmidt norm) and induction yield the result.
\end{proof}

In order to use Proposition~\ref{p_turboEK} we need to consider stopped processes.
So for any $\lambda$, let $T_K$ be the stopping time when $\|P_1X_{\lambda,n}\|$ is bigger then $K$.
We define the stopped process by
$$
P_1 X^K_{\lambda,n}:=P_1X_{\lambda,T_K\wedge n}\;,\quad
P_0 X^K_{\lambda,n}:= P_0 X_{\lambda,n} \cdot 1_{n\leq T_K}\;,\quad
Z^K_{\lambda,n} := Z_{\lambda,n} \cdot 1_{n\leq T_K}\;
$$
where
$$
1_{n\leq T_K}= 1 \quad \text{for $n\leq T_K$} \quad\text{and}\quad
1_{n\leq T_K}= 0 \quad \text{for $n>T_K$}\;.
$$
As long as $n\leq T_K$, \eqref{eq-exp-X} and Lemma~\ref{lem-estimates}
give $\|P_0 X_{\lambda,n}\|\leq (e^{-\gamma}+\Oo(\lambda^s)) \|P_0 X_{\lambda,n-1} \|+\Oo(\lambda^s)$.
An induction similar as in Proposition~\ref{prop-Z-small} yields for any finite $K$
\begin{equation}\label{eq-est-P0}
\|P_0 X^K_{\lambda,n}\|<K_P(e^{-\gamma/2 n}+\lambda^s)\;,\quad\text{for some constant $K_P = K_P(K)$.}
\end{equation}

For the limit, we will scale $\lambda=1/\sqrt{m}$ and $n=\lfloor tm \rfloor$.
First, define the good set
\begin{equation*}
G_m=G_m(K):=\{(X,Z)\,:\,\|Z\|<2K_Z m^{-s/2}, \|P_0X\|< 2K_P m^{-s/2}\}\,,
\end{equation*}
then by the estimates \eqref{eq-Z-small} and \eqref{eq-est-P0} one has
\begin{equation}\label{eq-in-Gm}
(X^K_{1/\sqrt{m}\,,n},Z^K_{1/\sqrt{m}\,,n}) \in G_m \quad \mbox{for} \quad n>s\,\ln(m)\,/\,\gamma \;.
\end{equation}

For the variances in the SDE limit we need to recognize the connection to the matrix $V$ and the functions
$g(M),\,\hat g(M)$ as defined in
\eqref{eq-def-V}, \eqref{eq-def-g} and \eqref{eq-def-hg}.
Using the notations as introduced in \eqref{eq-def-Vv} combined with \eqref{eq-def-va} and \eqref{eq-def-W_lb} one obtains
\begin{equation}\label{eq-exp-PWP}
 P_1 \vx P_1^* = (\vl11+\lambda\wl11)-\lambda(\vl12+\lambda\wl12)\Gamma_2(\vl21+\lambda\wl21)\;.
\end{equation}
Therefore, using the functions $h,\,\wh{h}$ as defined in \eqref{eq:h} and $W$ as defined in \eqref{eq-def-W}
one finds
\begin{align}
 & \EE(P_1\vx P_1^*)=\lambda P_1 \vxs_0 P_1^*+o(\lambda)\,=\,\lambda W + o(\lambda)\\
 & \EE((P_1 \vx P_1^*)^\top M P_1 \vx P_1^*) = h(M)+o(1) \;\label{eq-h-1} \\
 & \EE((P_1 \vx P_1^*)^* M P_1 \vx P_1^*) = \wh h(M)+o(1)\;\label{eq-h-2}.
\end{align}
Here the error terms $o(\lambda)$ and $o(1)$ are uniform in the limit $\lambda\to 0$.

Next we have to consider the conditional distribution of the differences $Y_{\lambda,n}=Y_{\lambda,n}(X,Z)$
given that $X_{\lambda,n-1}=X, Z_{\lambda,n-1}=Z$, i.e. for Borel sets of matrices $\Aa$,
$$
\PP(Y_{\lambda,n}(X,Z)\in \Aa):=\PP\big( P_1 X_{\lambda,n}-P_1 X \in \Aa\,\big|\,X_{\lambda,n-1}=X, Z_{\lambda,n-1}=Z\big)\;,
$$
Using \eqref{eq-evol-P0X} one has
\begin{equation} \label{eq-Y}
Y_{\lambda,n} =
\lambda {U}^{-n} P_1 \vx
 ( P_1^* U^{n-1} P_1 X +P_0^*P_0 X)
 + \lambda P_1 (\Xi_{\lambda,n}|Z_{\lambda,n-1}=Z)\,X
\end{equation}
where $(\Xi_{\lambda,n}|Z_{\lambda,n-1}=Z)$ is a random matrix variable distributed as $\Xi_{\lambda,n}$ conditioned to
$Z_{\lambda,n-1}=Z$, this simply means that in \eqref{eq-def-tl-Xi} and \eqref{eq-def-Xi} one replaces $Z_{\lambda,n-1}$ by $Z$.

\begin{prop}\label{prop-Y}
Assume $(X,Z)\in G_m$, $m=\lambda^{-2}$, thus $P_0X=\Oo(\lambda^s)$ and
$Z=\Oo(\lambda^s)$.
Then one finds for $Y_{\lambda,n} = Y_{\lambda,n}(X,Z)$ the uniform estimates (uniform in $X,Z,n$ )
\begin{align}
\label{eq-E-Y} \EE(Y_{\lambda,n})  &= \lambda^2 U^{-n} W U^{n-1} P_1 X + o(\lambda^{2})\\
\label{eq-var-Y} \EE(Y^\top_{\lambda,n} M Y_{\lambda,n}) &=
\lambda^2 (P_1X)^\top {U^\top}^{n-1} \,h\big({\bar U}^{n} M {U}^{-n}\big)\, U^{n-1} P_1X+ o(\lambda^{2})\\
\label{eq-var-Y2} \EE(Y^*_{\lambda,n} M Y_{\lambda,n}) &=
\lambda^2 (P_1X)^* {U^*}^{n-1} \,\wh h\big( U^{n} M {U}^{-n}\big)\, U^{n-1} P_1X+ o(\lambda^{2})\\
\label{eq-Y3} \EE(\|Y_{\lambda,n}\|^3) &\leq \lambda^3 K_Y\|X\|^3 \quad\text{for some uniform $K_Y>0$}.
\end{align}
Note that any covariance of real and imaginary entries of $Y_{\lambda,n}$ can be obtained by
varying $M$ in \eqref{eq-var-Y} and \eqref{eq-var-Y2}.
Moreover , one obtains uniformly for $0<t<T$
\begin{align}
\lim_{m\to\infty} \int_0^t m \EE\left(Y_{\frac1{\sqrt{m}},\lfloor sm \rfloor}\right) \,ds
&=  t V P_1X \label{eq-av-Y} \\
\lim_{m\to\infty} \int_0^t m \EE\left(Y^\top_{\frac1{\sqrt{m}},\lfloor sm \rfloor} M
Y_{\frac1{\sqrt{m}},\lfloor sm \rfloor}\right) \,ds
&=  t (P_1X)^\top g(M) P_1 X \label{eq-av-var} \\
\lim_{m\to\infty} \int_0^t m \EE\left(Y^*_{\frac1{\sqrt{m}},\lfloor sm \rfloor} M
Y_{\frac1{\sqrt{m}},\lfloor sm \rfloor}\right) \,ds
&=  t (P_1 X)^* \wh g(M) P_1X \label{eq-av-var2}
\end{align}
where $V$, $g$ and $\wh g$ are as in \eqref{eq-def-V}, \eqref{eq-def-g} and \eqref{eq-def-hg}
\end{prop}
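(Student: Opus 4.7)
The plan is to decompose $Y_{\lambda,n}$ in \eqref{eq-Y} as $Y_1+Y_2$ with a main term $Y_1=\lambda\,U^{-n}P_1\vx P_1^* U^{n-1}P_1X$ and a remainder $Y_2=\lambda\,U^{-n}P_1\vx P_0^*P_0X+\lambda\,P_1(\Xi_{\lambda,n}|Z_{\lambda,n-1}=Z)X$, and to show that on $G_m$ the remainder is negligible at every order that enters the statement. The crucial inputs are the almost sure bounds on $G_m$: $\|P_0X\|,\|Z\|=\Oo(\lambda^s)$, combined with $\|\vx\|=\Oo(\lambda^{s-1})$ from \eqref{eq-bd-va} and $\|\Xi_{\lambda,n}\|=\Oo(\lambda^{2s-1})$ obtained from \eqref{eq-Xi-bound} after plugging in $\|Z\|=\Oo(\lambda^s)$. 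These give the pathwise bound $\|Y_2\|=\Oo(\lambda^{2s})\|X\|$, and since $s>\tfrac23$ any product containing a factor of $Y_2$ is $\Oo(\lambda^{3s})\|X\|^2=o(\lambda^2)\|X\|^2$, uniformly in $(X,Z)\in G_m$ and in $n$.

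The pointwise identities \eqref{eq-E-Y}--\eqref{eq-Y3} then follow routinely. For \eqref{eq-E-Y}, the expectation of $Y_1$ equals $\lambda U^{-n}\EE(P_1\vx P_1^*)U^{n-1}P_1X=\lambda^2 U^{-n}WU^{n-1}P_1X+o(\lambda^2)$ by \eqref{eq-exp-PWP} and the definition \eqref{eq-def-W}, while $\EE(Y_2)$ is $o(\lambda^2)\|X\|$ by \eqref{eq-E-Xi} together with the $P_0X$-bound. For \eqref{eq-var-Y} and \eqref{eq-var-Y2} the dominant contributions are $\EE(Y_1^\top MY_1)$ and $\EE(Y_1^*MY_1)$; using \eqref{eq-h-1}, \eqref{eq-h-2} and the unitarity identities $(U^{-n})^\top=\bar U^n$, $(U^{-n})^*=U^n$ these expand precisely to the claimed right hand sides, and the cross terms and the quadratic terms in $Y_2$ are $\Oo(\lambda^{3s})=o(\lambda^2)$ by the pathwise estimates above. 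The third moment bound \eqref{eq-Y3} follows from $\|Y_{\lambda,n}\|^3\le 4\lambda^3\|X\|^3\bigl(\|\vx\|^3+\|\Xi_{\lambda,n}\|^3\bigr)$ combined with $\EE\|\vx\|^3=\Oo(1)$ from \eqref{eq-W3-bound} and the pathwise bound $\|\Xi_{\lambda,n}\|^3=\Oo(\lambda^{6s-3})=\Oo(1)$ for $s>\tfrac23$.

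The main remaining work lies in the integrated statements \eqref{eq-av-Y}--\eqref{eq-av-var2}. Writing $\int_0^t m\,\EE(Y_{1/\sqrt m,\lfloor sm\rfloor})\,ds=\frac1m\sum_{n=1}^{\lfloor tm\rfloor}m\,\EE(Y_{1/\sqrt m,n})+o(1)$ as a Riemann sum and substituting \eqref{eq-E-Y} reduces the first integral to the Ces\`aro average $\tfrac{\lfloor tm\rfloor}{m}\cdot\tfrac{1}{\lfloor tm\rfloor}\sum_{n=1}^{\lfloor tm\rfloor}U^{-n}WU^{n-1}P_1X$. Since $U$ is unitary and its orbit closure $\langle U\rangle$ is a compact abelian group, the Weyl equidistribution $\frac1N\sum_{n=0}^{N-1}f(U^n)\to\int_{\langle U\rangle}f(u)\,du$ holds for every continuous $f$. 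Applied to $f(u)=u^*Wu\cdot U^{-1}$, and using that $U^*\in\langle U\rangle$ commutes with every $u\in\langle U\rangle$ together with symmetry of Haar under $u\mapsto u^*$, this produces exactly $V$ as defined in \eqref{eq-def-V}, yielding \eqref{eq-av-Y}. The same scheme applied to the Ces\`aro averages arising from \eqref{eq-var-Y} and \eqref{eq-var-Y2}, after the substitution $u\leftrightarrow u^*$ in the Haar integral, gives $g(M)$ and $\wh g(M)$ as defined in \eqref{eq-def-g}, \eqref{eq-def-hg}. I expect the single most error-prone step to be the correct bookkeeping of transposes, bars and stars in this last identification, combined with the repeated use of commutativity in $\langle U\rangle$ to match the Haar integrands with the definitions of $V$, $g$ and $\wh g$.
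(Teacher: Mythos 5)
Your proposal is correct and follows essentially the same route as the paper: isolate the main term $\lambda U^{-n}P_1\vx P_1^*U^{n-1}P_1X$ (the paper's \eqref{eq-exp-Y}), kill the remainder using the $G_m$ bounds together with \eqref{eq-E-Xi}, \eqref{eq-Xi-bound}, \eqref{eq-W-bound} and $3s>2$, identify the moments via \eqref{eq-exp-PWP}--\eqref{eq-h-2}, and average the resulting $n$-dependent terms over the compact abelian group $\langle U\rangle$. The only cosmetic difference is that you invoke Weyl equidistribution for continuous functions where the paper argues via polynomials in the eigenvalues of $u$ (its \eqref{eq-U-average}); these are equivalent here.
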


\begin{proof}
Given $X_{\lambda,n-1}=X,\,P_0X=\Oo(\lambda^s),\,Z_{\lambda,n-1}=Z=\Oo(\lambda^s)$
and using the estimates \eqref{eq-E-Xi} and \eqref{eq-E-W}
equation \eqref{eq-Y} yields
$$
\EE(Y_{\lambda,n})=\lambda \EE(U^{-n} P_1 \vx P_1^* U^* U^{n} P_1 X) +
\Oo(\lambda^{3s})=\lambda^2 U^{-n} W U^{n-1} P_1 X\,+ o(\lambda^2)\,
$$
which implies \eqref{eq-E-Y}.
Using \eqref{eq-W-bound}, \eqref{eq-Xi-bound} and $Z_{\lambda,n-1}=Z=\Oo(\lambda^s)$
we get
\begin{equation} \label{eq-exp-Y}
Y_{\lambda,n} = \lambda U^{-n} P_1 \vx P_1^* U^{n-1} P_1 X +\Oo(\lambda^{2s})\,.
\end{equation}
Together with \eqref{eq-h-1} and \eqref{eq-h-2} this proves \eqref{eq-var-Y} and \eqref{eq-var-Y2}.
Finally, \eqref{eq-Y3} follows from \eqref{eq-exp-Y} and \eqref{eq-cond-Vv-m}.

Letting $u=U^{-n}={U^n}^*$ we have the terms
$u W U^*u^*$, $\ov{u} \ov{U} h(u^\top M u) U^* u^* $ and
$u U \wh{h}(u^* M u) U^* u^* $ appearing in \eqref{eq-E-Y}, \eqref{eq-var-Y} and \eqref{eq-var-Y2}, respectively.
On the abelean compact group $\langle U\rangle$ generated by the unitary $U$, the functions
\begin{align*}
& u \mapsto u W u^*\,,
\quad u \mapsto \ov{u}\, \ov{U}\, h(u^\top M u)\, U^*\, u^*\;,\quad
u\mapsto u\, U\, \wh{h}(u^* M u)\, U^*\, u^*
\end{align*}
are polynomials of the eigenvalues of $u$ as $h,\,\wh{h}$ are linear and
all $u\in\langle U \rangle$ are simultaneously diagonalizable.
For any such polynomial $p(u)$ one finds
\begin{equation}\label{eq-U-average}
\lim_{m\to\infty}\int_0^t p(U^{-\lfloor ms \rfloor})\,ds  =
\lim_{m\to\infty} \frac{t}m \sum_{k=1}^m p(U^{-k})\,=\,t\,\int_{\langle U \rangle} p(u)\,du
\end{equation}
uniformly for $t<T$, where $du$ denotes the Haar measure on $\langle U \rangle$.
Applied to \eqref{eq-E-Y}, \eqref{eq-var-Y}, \eqref{eq-var-Y2}
this yields \eqref{eq-av-Y}, \eqref{eq-av-var} and \eqref{eq-av-var2}.
\end{proof}

Let $f(n)\in\NN$ with $\lim_{n\to\infty} f(n)=\infty$ and $\lim_{n\to\infty} f(n) n^{-s}=0$,
then by Proposition~\ref{prop-P0X} and \ref{prop-P1X} we find for large enough $K$ that
$X^K_{1/\sqrt{n},f(n)} \Longrightarrow \smat{0 & 0 \\ 0 & \one_{d_1}} X_0$ where we used a
subdivision in blocks of sizes $d_0$ and $d_1$.
For sake of concreteness let us set $f(n)=\lfloor n^\alpha \rfloor$ with some $0<\alpha< s$.
From \eqref{eq-in-Gm} we find for $m\to \infty$,
$$
\PP\left((X^K_{1/\sqrt{m}\,,n},Z^K_{1/\sqrt{m}\,,n})\, \in\, G_m \text{ for all } Tm > n > f(m) \right)\,\to\, 1
$$
Together with Proposition~\ref{prop-Y}
we see that the stopped processes $(X^K_{1/{\sqrt{m}},n},Z^K_{1/{\sqrt{m}},n})$ for $n=1,\ldots, mT$ satisfy the conditions of
Proposition~\ref{p_turboEK} with $\tilde X^m_n=P_1X^K_{1/{\sqrt{m}},n}$, the good sets $G_m$ and
$f(m)=\lfloor m^\alpha \rfloor$.
Thus, with Proposition~\ref{prop-P0X} it follows $X^K_{1/{\sqrt{m}},\lfloor tm \rfloor} \Longrightarrow \pmat{\nul &  \\  & \Lambda_t^K} X_0$, uniformly for
$0<t<T$, where $\Lambda_t^K = \Lambda_{t\wedge T_k}$ denotes the stopped process of
$\Lambda_t$ as described in Theorem~\ref{theo-SDE} with stopping time
$T_K$ when $\left\| P_1 \Lambda_t X_0 \right\| >K$.
As we have this convergence for all such stopping times $T_K$, $\|P_1 \Lambda_t X_0\|$ is almost surely finite
and as the final time $T$ was arbitrary, one obtains
$X_{1/\sqrt{m},\lfloor tm \rfloor} \Longrightarrow \pmat{\nul & \\  & \Lambda_t} X_0$ for any $t>0$.
Together with Proposition~\ref{prop-Z-small} this finishes the proof of Theorem~\ref{theo-SDE}.



\section{Application to random Schr\"odinger operators \label{sec-RSO}}

The main purpose of this section is the proof of Theorem~\ref{theo-GOE}. However, we will also obtain a description for limits of eigenvalue processes in a critical scaling.
For this we will consider slightly more general operators as \eqref{eq-H-lnd}.
More precisely, we study the limiting eigenvalue process for $n\to\infty$ with $\lambda^2n$ constant and $d$ fixed for more general random $nd\times nd$ matrices
given by
\begin{equation}\label{eq-def-H}
 (H_{\lambda,n} \psi)_k = \psi_{k+1}+\psi_{k-1}+(A+\lambda V_k)\psi_k\;
\end{equation}
where $\psi=(\psi_1,\ldots,\psi_n)$, $\psi_0=\psi_{n+1}=0$ and $\psi_j\in\CC^d$.
Here, $A$ is a general Hermitian matrix, and the $V_k$ are general i.i.d.\ Hermitian matrices with $\EE(V_k)=\nul$ and $\EE(\|V_k\|^{6+\varepsilon})<\infty$. 
We dropped the index $d$ now as $d$ will be fixed from now on and sometimes we may also drop the index $n$.
Moreover, for simplicity, we can assume that $A$ is diagonal; indeed, this can be achieved by the change of basis  
$\psi_n \mapsto O^* \psi_n$ where $O^* AO$ diagonalize $A$ (and replaces $V_n$ by $O^*V_n O$).

The eigenvalue equation $H_{\lambda}\psi=E \psi$ is a recursion that can be written in the matrix form as follows.
\begin{equation}\label{eq-def-transfer}
 \pmat{\psi_{k+1} \\ \psi_k} = T_{k} \pmat{\psi_k \\ \psi_{k-1}}\,\qquad
 \text{where}\quad
T_{k}= T^{E}_{\lambda,k} = \pmat{E\one -A-\lambda V_k & -\one \\ \one & \nul}\,.
\end{equation}
The $T^E_{\lambda,k}$ are called transfer matrices. Now, $E$ is an eigenvalue of $H_{\lambda,n}$ if there is a nonzero solution $(\psi_1,\psi_n)$ to
$$
 \pmat{0\\ \psi_n} = T_n\cdots T_{1} \pmat{\psi_1 \\ 0}\,\qquad
$$
or equivalently, when the determinant of the top left $d\times d$ block of $T_n\cdots T_{1}$ vanishes. So we can study the
eigenvalue equation through the products
\begin{equation*}
 T_{[1,k]} = T_k \cdots T_1
\end{equation*}
which are the focus of our next theorems.

\subsection{Elliptic and hyperbolic channels and SDE limits}

The matrices $T_k$ satisfy
$$T^*\Jj T = \Jj \qquad \text{where}\qquad  \Jj=\pmat{\nul & \one_d \\ -\one_d & \nul},$$
the definition of elements of the hermitian symplectic group $\HSp(2d)$.
In particular, they are all invertible. The $T_k$ are all perturbations of the noiseless matrix
$$T_*:=T^E_{0,1}\;.$$ This matrix is also block diagonal with $d$ blocks of size 2, and the eigenvalues of $T^E_{0,1}$ are exactly the
$2d$ solutions of the $d$ quadratics
$$
z+z^{-1}= E-a_j, \qquad a_j \text{ is an eigenvalue of } A.
$$
So the solutions are on the real line or on the complex unit circle, depending on
whether $|E-a_j|$ is less or more than two. We call the corresponding generalized eigenspaces of
$T_*=T^E_{0,1}$ elliptic $(<2)$, parabolic $(=2)$ and hyperbolic $(>2)$ channels. Elliptic and hyperbolic channels correspond to
two-dimensional eigenspaces, while parabolic channels correspond to a size 2 Jordan block. Traditionally, this notation refers to the solutions
of the noiseless $(\lambda=0)$ recursion that are supported in these subspaces for every coordinate $\psi_n$.

Pick an energy $E$, such that there are no parabolic channels and at least one elliptic channel.
Suppose that $A$ is diagonalized so that
$|E-a_j|>2$ for $j=1,\ldots,d_h$ and $|E-a_j|<2$ for $j>d_h$.
Correspondingly, we define the hyperbolic eigenvalues $\gamma_j$ and elliptic eigenvalues $z_j$ of $T_*$ by
\begin{align*}
 \gamma_j+\gamma_j^{-1} &= E-a_j\,,\quad &|\gamma_j|<1&\;,&\qtx{for}& j=1,\ldots,d_h \\
 z_j+z_j^{-1} &= E-a_{j+d_h}\,,\quad &|z_j|=1&,\,\im(z_j)>0\;,&\qtx{for}& j=1\,\ldots,d_e=d-d_h\;.
\end{align*}
Furthermore we define the diagonal matrices
\begin{equation}\label{eq-def-Ga-Z}
\Gamma=\diag(\gamma_1,\ldots,\gamma_{d_h}),\qquad
Z=\diag(z_{1},\ldots,z_{d_e}).
\end{equation}

In order to complete the description of the limiting eigenvalue process, we need to consider a family of limiting SDE by varying the energy in
the correct scaling. More precisely, define the $2d_e\times 2d_e$ unitary matrix $U$ and
the $2d \times 2d$ matrix $\Qq$ by
\begin{equation}\label{eq-def-Qq}
U=\pmat{\bar Z &  \\  & Z}\,,\qquad
 \Qq=\pmat{\Gamma &  &  & \Gamma^{-1} \\  & \bar Z & Z &  \\
 \one_{d_h} &  &  & \one_{d_h} \\  & \one_{d_e} & \one_{d_e} & }\,
\end{equation}
so that $\Qq$ diagonalizes $T_*$ to a form as in \eqref{eq-T_0} that is used for Theorem~\ref{theo-SDE}
\begin{equation*}
\Tt_*:=\Qq^{-1}T_{*}\Qq=\pmat{\Gamma &   &   \\   & U &   \\   &   & \Gamma^{-1}}\,.
\end{equation*}
Furthermore, let
\begin{equation}\label{eq-def-cjT}
\Tt_k= \Tt^{\varepsilon,\sigma}_{\lambda,k}:=
 \Qq^{-1}\, T^{E+\lambda^2\varepsilon}_{\lambda,k} \,\Qq=
 \Tt_*+\lambda \sigma\, \Vv_{k} +  \lambda^2 \varepsilon \Ww\,,\quad
 \Tt_{[1,k]}=\Tt^{\varepsilon,\sigma}_{\lambda,[1,k]}:=\Tt_{k}\cdots \Tt_{1}\,
\end{equation}
with
\begin{equation}\label{eq-T-parts}
\Vv_k=\Qq^{-1} \pmat{-V_k &  \\ & \nul} \Qq\,,\qquad \Ww= \Qq^{-1} \pmat{\one_d & \\ & \nul} \Qq\,.
\end{equation}
The parameter $\sigma$ is somewhat redundant, however, it will be useful for replicating the argument of \cite{VV1} where this scaling parameter was also introduced.
The scaling  $\varepsilon \lambda^2\sim \varepsilon/n$ means that a unit interval of  $\varepsilon$ should contain a constant order of eigenvalues.
In order to get limiting SDEs we consider a Schur complement as before, thus define the $2d_e \times 2d_e$ matrices
\begin{equation}\label{eq-def-hat-T}
 \wh \Tt^{\varepsilon,\sigma}_{\lambda,n} = 
 \left( \Pp^*_{\leq 1} \left[ \Tt^{\varepsilon,\sigma}_{\lambda,[1,n]} \Xx_0 \right]^{-1} \Pp_{\leq 1}\right)^{-1} \qtx{with}
 \Pp_{\leq1}=\pmat{\one_{d_h+2d_e} \\ \nul_{d_h\times(d_h+2d_e)}}\,
\end{equation}
Then by Theorem~\ref{theo-SDE} we obtain the correlated family (parameters $\sigma, \,\varepsilon$) of limiting processes
\begin{equation}\label{eq-L_et}
\pmat{ \nul_{d_h} & \\ & U^{-\lfloor tn \rfloor} }\,\wh \Tt^{\varepsilon,\sigma}_{1/\sqrt{n},\lfloor tn\rfloor}\;\Longrightarrow\;
 \pmat{ \nul_{d_h} \\ & \Lambda^{\varepsilon,\sigma}_{t}}\,\left( \Pp^*_{\leq 1} \Xx_0^{-1} \Pp_{\leq 1}\right)^{-1} \qquad \text{for $n\to\infty$}\;,
\end{equation}
where for fizzed $(\varepsilon, \sigma)$, the process $\Lambda^{\varepsilon,\sigma}_t$ satisfies some SDE in $t$.

\begin{remark}\label{rem-red-T}
For $\varepsilon=0$ and $\sigma=1$, up to some conjugation, the matrix $\widehat \Tt^{0,1}_{\lambda,n}$ corresponds to the reduced transfer matrix as introduced in {\rm \cite{Sa1}} for the scattering of a block described by $H_{\lambda}$ 
of a finite length $n$ inserted into a cable
described by $H_0$ of infinite length ('$n=\infty$'). Thus we obtain that in the limit $\lambda^2 n ={\rm const.},\,n\to\infty$, the process of the reduced transfer matrix as defined in {\rm \cite{Sa1}} is described by a SDE, proving Conjecture~1 in {\rm \cite{Sa1}}.
\end{remark}

To get to the GOE limit we need to express the limit SDEs more explicitly. Therefore, let us split the potential $V_1$ into the hyperbolic and elliptic parts, i.e. let
\begin{equation} \label{eq-def-Vh}
V_1  = \pmat{ V_h & V_{he} \\ V^*_{he} & V_e}\,\qtx{where}
 V_h\,\in\,\Mat(d_h\times d_h)\;,\quad V_e\,\in\,\Mat(d_e\times d_e)\;.
\end{equation}
Moreover, define
\begin{equation}\label{eq-def-Sh}
Q  = \int_{\langle Z \rangle} \zb \,\EE(V_{he}^* (\Gamma^{-1}-\Gamma)^{-1} \,V_{he})\,\bar \zb\,
 d \zb\;,\quad
 \Ss=\pmat{(\bar Z-Z)^{-1} & \nul \\ \nul & (\bar Z-Z)^{-1}}
\end{equation}
where $d\zb$ denotes the Haar measure on the compact abelian group $\langle Z \rangle$ generated  by the diagonal, unitary matrix $Z$. As we will see, $Q$ will give rise to a drift term coming from the hyperbolic channels. In fact, this is the only influence of the hyperbolic channels for the limit process.
Moreover, to simplify expressions, we will be interested in one specific case.
\begin{define}
We say that the matrix $Z=\diag(z_1,\ldots,z_{d_e})$ with $|z_j|=1, \,\im(z_j)>0$ is {\bf chaotic}, if all of the following apply for all $i,j,k,l \in \{1,\ldots, d_e\}$,
\begin{align*}
&z_i z_j z_k z_l \neq 1\;, \quad \bar z_i z_j z_k z_l \neq 1\,\\
&\bar z_i \bar z_j z_k z_l \neq 1  \quad \text{unless $\{i,j\}=\{k,l\}$}.
\end{align*}
\end{define}
The following observation corresponds to Lemma~8 in \cite{VV1}.
\begin{lemma}\label{lem-chaotic}
 Let the eigenvalues $a_j$, $j=1,\ldots, d$ of $A$ be simple and let $I$ be the interval with fixed hyperbolic and elliptic channels as considered, i.e.
 $$
 I\,=\,\{E\in\RR\,:\,|E-a_j|>2\;\; \text{for $j=1,\ldots, d_h$\;\; and}\;\; |E-a_j|<2\;\;\text{for $j>d_h$}\;\}\,.
 $$
 Then, for Lebesgue almost all $E\in I$, the matrix $Z$ as defined above is chaotic and moreover, for any diagonal, unitary matrix $Z_*$ there is a sequence $n_k$ such that
 $Z^{n_k+1}\to Z_*$.
\end{lemma}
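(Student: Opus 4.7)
The plan is to parametrize the elliptic eigenvalues by angles and reduce both claims to a single analytic non-vanishing principle. Write $z_j = e^{i\theta_j(E)}$ where $\theta_j : I \to (0,\pi)$ is the real-analytic function determined by $2\cos\theta_j(E) = E - a_{j+d_h}$. Its derivative $d\theta_j/dE = -1/\sqrt{4 - (E-a_{j+d_h})^2}$ is strictly negative on the open interval where the channel is elliptic and blows up as $E$ approaches the endpoints $a_{j+d_h}\pm 2$ (which lie on the boundary of $I$ by definition of $I$).

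The key claim I would establish is the following: for every nonzero integer vector $(c_1,\ldots,c_{d_e})\in\mathbb{Z}^{d_e}\setminus\{0\}$, the function $\varphi_c(E) := \sum_{j=1}^{d_e} c_j\theta_j(E)$ is a non-constant real-analytic function on the interior of $I$. To see non-constancy, pick any index $j_0$ with $c_{j_0}\neq 0$; since the $a_j$ are simple, the boundary point $a_{j_0+d_h}+2$ (or $-2$) is not a singular point of the other $\theta_j$'s, so $\varphi_c'(E)$ has an unbounded contribution from $c_{j_0}\,d\theta_{j_0}/dE$ near that boundary point while all other terms remain bounded there. Hence $\varphi_c'$ is not identically zero, and by real-analyticity $\varphi_c$ is not constant; in particular, the level set $\varphi_c^{-1}(2\pi\mathbb{Z})$ is discrete and therefore of Lebesgue measure zero.

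For the chaotic condition, each of the three relations $z_i z_j z_k z_l = 1$, $\bar z_i z_j z_k z_l = 1$, and $\bar z_i\bar z_j z_k z_l = 1$ translates into $\varphi_c(E)\in 2\pi\mathbb{Z}$ for some integer vector $c$ with $\sum|c_j|\leq 4$. There are only finitely many such $(i,j,k,l)$, and the vector $c$ vanishes identically exactly in the excluded case $\{i,j\}=\{k,l\}$ of the third type; in every remaining case $c\neq 0$, so the key claim applies and the bad $E$ form a finite union of null sets.

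For the density statement, apply Kronecker's theorem: $\{Z^n:n\in\mathbb{Z}\}$ is dense in the full diagonal torus $(S^1)^{d_e}$ if and only if $\{1,\theta_1/(2\pi),\ldots,\theta_{d_e}/(2\pi)\}$ are linearly independent over $\mathbb{Q}$, which in turn fails only when $\varphi_c(E)\in 2\pi\mathbb{Z}$ for some $c\in\mathbb{Z}^{d_e}\setminus\{0\}$. The union over $c$ is countable, so by the key claim this exceptional set has Lebesgue measure zero, and on its complement every diagonal unitary $Z_*$ is a limit of $Z^{n_k}$, hence of $Z^{n_k+1}$ after a trivial shift. Intersecting with the null set from the chaotic condition gives the lemma. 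The only genuine point to verify is the non-constancy of $\varphi_c$, which is the boundary-singularity argument outlined above; everything else is bookkeeping of finitely or countably many null sets.
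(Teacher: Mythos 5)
Your proposal follows the same route as the paper's proof: parametrize $z_j=e^{i\theta_j(E)}$, reduce both chaoticity and density of the orbit of $Z$ to the absence of nontrivial integer relations $\sum_j c_j\theta_j(E)\in 2\pi\ZZ$ outside a null set, and establish that by showing each $\varphi_c$ is a non-constant analytic function of $E$ because the singularities of the derivatives $\theta_j'$ sit at distinct points. Your bookkeeping is in places more careful than the paper's (which only discusses the relation $w\cdot\varphi(E)=0$ rather than the full condition $w\cdot\varphi(E)\in2\pi\ZZ$ needed for Kronecker's theorem and for chaoticity).

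The one step that does not work as written is the non-constancy argument. You propose to pick \emph{any} index $j_0$ with $c_{j_0}\neq 0$ and let $E$ approach $a_{j_0+d_h}\pm 2$, where $c_{j_0}\theta_{j_0}'$ blows up while the other terms stay bounded. But $\varphi_c$ is defined as a real function only on $J_c=\bigcap_{j:\,c_j\neq 0}(a_{j+d_h}-2,\,a_{j+d_h}+2)$, and for $d_e\geq 3$ a non-extremal $a_{j_0+d_h}$ has both points $a_{j_0+d_h}\pm 2$ outside $\overline{J_c}$; near those points the other $\theta_j$ in the support of $c$ are not merely bounded, they are undefined, so the singularity cannot be approached from inside the domain where $\varphi_c'$ makes sense. (Relatedly, your parenthetical claim that the endpoints $a_{j+d_h}\pm2$ lie on the boundary of $I$ is false except for the extremal elliptic channels.) The repair is immediate and keeps you on the paper's track: take $j_0$ to be the index minimizing (resp.\ maximizing) $a_{j+d_h}$ over the support of $c$, so that $a_{j_0+d_h}+2$ (resp.\ $a_{j_0+d_h}-2$) is an endpoint of $J_c$; simplicity of the $a_j$ together with $J_c\supset I\neq\emptyset$ guarantees $a_{j_0+d_h}+2\neq a_{j+d_h}\pm2$ for the remaining indices, so their derivatives stay bounded there and $\varphi_c'$ is unbounded near that endpoint, hence not identically zero on the interval $J_c$ and, by the identity theorem, not identically zero on $I$. (Alternatively one can continue all $\theta_j'$ analytically to the upper half-plane, where every branch point is approachable.) With that correction your argument is complete and coincides with the paper's.
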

\begin{proof}
 By the definitions above, $z_j=e^{i\varphi_j}$ where $\varphi_j = \arccos((E-a_{j+d_h})/2)\,\in\,(-\pi\,,\,\pi)$.
We will show that for almost all $E$, the vector $\varphi=\varphi(E)=(\varphi_{1},\ldots,\varphi_{d_e})$ has no non-zero integer vector orthogonal to it.
It is not difficult to see that $Z$ is chaotic in this case and the orbit $Z^n$ is dense in the torus of diagonal unitary matrices.\\
It is enough to show that for any non-zero integer vector $w$ the set of energies $E\in I$ where $w \cdot \varphi(E) = 0$ is finite.
Clearly, $E\mapsto w\cdot \varphi(E)$ is analytic on $I$ and therefore it either has finitely many zeros or is constant zero.
Taking the derivative with respect to $E$ we get
$$
(w\cdot \varphi(E))'\,=\,\sum_{j=1}^{d_e} \frac{-w_j}{\sqrt{1-\frac14(E-a_{j+d_h})^2}}\,.
$$
As all the values $a_{j+d_h}$ are different, each summand has a singularity at a different value. Hence, this derivative can only be identically zero on $I$ if
$w$ is the zero vector. Thus, for $w\neq 0$, $E\mapsto w\cdot \varphi(E)$ is not the zero function.
\end{proof}

\begin{prop}
\label{prop-SDEs}
{\rm (i)}
 The family of processes $\Lambda^{\varepsilon,\sigma}_{t}$ as in equation
 \eqref{eq-L_et} or Theorem~\ref{theo-EV}
 satisfy SDEs of the form
 \begin{equation}\label{eq-sigma-SDE}
 d\Lambda^{\varepsilon,\sigma}_{t}\,=\,\Ss
 \pmat{\varepsilon \one - \sigma^2Q & \\ & -\varepsilon \one + \sigma^2Q} \,\Lambda^{\varepsilon,\sigma}_{t}\,dt
 \,+\,\sigma\,\Ss \pmat{d\Aa_t & d\Bb_t \\ -d\Bb_t^* & -d\Cc_t}\,\Lambda^{\varepsilon,\sigma}_{t}\,
 \end{equation}
 with $\Lambda^{\varepsilon,\sigma}_{0}=\one$ and $\sigma,\varepsilon$ fixed,
where $\Aa_t,\,\Bb_t,\,\Cc_t$ are jointly Gaussian complex-valued $d_e\times d_e$ matrix Brownian motions, independent of $\varepsilon$ and $\sigma$, with $\Aa_t^* = \Aa_t$, $\Cc_t^*=\Cc_t$ and certain covariances.

\noindent {\rm (ii)} If $A$ and $V_n$ are real symmetric then we obtain
\begin{equation*}
 \Cc_t = \overline{\Aa_t}=\Aa_t^\top\qtx{and} \Bb_t^\top = \Bb_t\,.
\end{equation*}

\noindent {\rm (iii)} If $Z$ is chaotic then $\Bb_t$ is independent of $\Aa_t$ and $\Cc_t$. Also,
$\Aa_t$ and $\Cc_t$ have the same distribution.
Moreover, with the subscript $t$ dropped, we have the following:
\begin{align*}
&\EE|\Aa_{ij}|^2=\EE|\Bb_{ij}|^2=t\,\EE|(V_e)_{ij}|^2 \\
&\EE(\Aa_{ii} \Aa_{kk})=
t\,\EE((V_e)_{ii} (V_e)_{kk})\,, \\
&\EE(\Aa_{ij} \Cc_{ij}) = \EE(\Aa_{ij}\overline{\Cc}_{ji})=
\EE(\Bb_{ij}\overline{\Bb}_{ji})=t\,\EE((V_e)_{ij})^2
\end{align*}
and whenever $\{i,j\}\neq\{k,l\}$ one finds
\begin{equation*}
 \EE(\Aa_{ij}\Aa_{kl})=
\EE(\Aa_{ij} \Cc_{kl})=
\EE(\overline{\Bb}_{ij} \Bb_{kl})=0
\end{equation*}
and for any $i,j,k,l$,
\begin{equation*}
 \EE(\Bb_{ij} \Bb_{kl})=0\,.
\end{equation*}
All other covariances are obtained from $\Aa_t=\Aa_t^*,\,\Cc_t=\Cc_t^*$.
\end{prop}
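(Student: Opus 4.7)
The proof is a direct application of Theorem~\ref{theo-SDE} to the family $\Tt^{\varepsilon,\sigma}_{\lambda,k}$ of \eqref{eq-def-cjT}, with identifications $\Tt_0=\Tt_*$, $\Vv_{\lambda,n}=\sigma\Vv_n$, $\Ww_\lambda=\varepsilon\Ww$, block dimensions $d_0=d_h$, $d_1=2d_e$, $d_2=d_h$, and $\Gamma_0=\Gamma_2=\Gamma$. The unitary block is $U=\mathrm{diag}(\bar Z,Z)$, so $\langle U\rangle$ is identified with $\langle Z\rangle$ via $u=\mathrm{diag}(\bar\zb,\zb)$. The task is to compute the matrix $W$ of \eqref{eq-def-W} and the covariance functions $g(M),\hat g(M)$ of \eqref{eq-def-g}--\eqref{eq-def-hg} explicitly and then recognize the resulting SDE in the form \eqref{eq-sigma-SDE}.

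The first step is to invert $\Qq$. After a reshuffling that groups the hyperbolic and elliptic pairs, $\Qq$ becomes block-diagonal with factors $\pmat{\Gamma & \Gamma^{-1}\\ \one & \one}$ and $\pmat{\bar Z & Z\\ \one & \one}$, whose inverses introduce the characteristic prefactors $(\Gamma-\Gamma^{-1})^{-1}$ and $(\bar Z-Z)^{-1}$. Unpacking $\Qq^{-1}\pmat{-V_k & 0\\ 0 & 0}\Qq$ in the $d_h,d_e,d_e,d_h$ block structure then yields $\vl{1}{1}=\sigma\Ss\pmat{-V_e\\ V_e}\pmat{\bar Z & Z}$, together with analogous expressions for $\vl{1}{2}$ and $\vl{2}{1}$ involving $V_{he}^*,V_{he}$ and $(\Gamma-\Gamma^{-1})^{-1}$. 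Substituting $V_e\to -\one_{d_e}$ and $V_{he}\to 0$ gives the structure of $\Ww$. Combining these, a short algebraic manipulation produces
\begin{equation*}
W=\Ss\pmat{\one\\ -\one}\bigl(\varepsilon\one-\sigma^2\EE(V_{he}^*(\Gamma^{-1}-\Gamma)^{-1}V_{he})\bigr)\pmat{\bar Z & Z},
\end{equation*}
the $\sigma^2$ contribution arising from $\EE(\vl{1}{2}\Gamma\vl{2}{1})$ with $\Gamma_2=\Gamma$.

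Substituting $u=\mathrm{diag}(\bar\zb,\zb)$ and using $\bar Z Z=\one$, the integrand of $V=\int_{\langle U\rangle}uWU^*u^*\,du$ simplifies to $\Ss\pmat{\bar\zb\\ -\zb}(\varepsilon\one-\sigma^2\EE(V_{he}^*(\Gamma^{-1}-\Gamma)^{-1}V_{he}))\pmat{\zb & \bar\zb}$. The off-diagonal blocks integrate to zero since $z_iz_j\neq 1$ (ensured by $\im z_j>0$), while the diagonal blocks yield $\pm(\varepsilon\one-\sigma^2 Q)$ with $Q$ as in \eqref{eq-def-Sh}; this is exactly the drift in \eqref{eq-sigma-SDE}. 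For the diffusion, I would insert $\vl{1}{1}$ into \eqref{eq-def-g}--\eqref{eq-def-hg}; the $\Ss$, $\pmat{-\one\\ \one}$ and $\pmat{\bar Z & Z}$ factors pull out of the Haar integrals, and the resulting $2d_e\times 2d_e$ matrix Brownian motion can be written as $\sigma\Ss\pmat{d\mathcal{A} & d\mathcal{B}\\ -d\mathcal{B}^* & -d\mathcal{C}}$. The Hermitian constraints $\mathcal{A}^*=\mathcal{A},\mathcal{C}^*=\mathcal{C}$ come from $V_1=V_1^*$, and the block relation between the $(2,1)$ and $(1,2)$ entries is forced by the symplectic identity $T_k^*\Jj T_k=\Jj$ inherited by the limit.

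Parts (ii) and (iii) then follow by tracking conjugations. In the real-symmetric case $V_e^*=V_e^\top$, and swapping $Z\leftrightarrow\bar Z$ in the covariance expressions implements complex conjugation; comparing the two diagonal blocks of the diffusion matrix yields $\mathcal{C}=\overline{\mathcal{A}}$, and the analogous comparison on the anti-diagonal yields $\mathcal{B}^\top=\mathcal{B}$. For the chaotic case, each entry of $g(M)$ and $\hat g(M)$ reduces, after writing $\vl{1}{1}$ in coordinates, to a Haar integral of a monomial $z_i^{\pm 1}z_j^{\pm 1}z_k^{\pm 1}z_l^{\pm 1}$, which vanishes by the chaotic hypothesis except in the coincidence patterns listed in (iii); this produces the stated covariances and the independence of $\mathcal{B}$ from $(\mathcal{A},\mathcal{C})$. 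The main technical obstacle is careful bookkeeping of the block structure through $\Qq^{-1}$ and the Haar integration --- no new analytic idea beyond Theorem~\ref{theo-SDE} is required.
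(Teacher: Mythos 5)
Your proposal is correct and follows essentially the same route as the paper: invert $\Qq$ explicitly, read off the blocks $\vl{1}{1}$, $\vl{1}{2}\Gamma\vl{2}{1}$ and $\wl{1}{1}$, compute $W$ and the Haar averages over $\langle Z\rangle$ (using $z_iz_j\neq 1$ to kill the off-diagonal blocks and produce $Q$), and then feed the resulting $h,\wh h$ into Theorem~\ref{theo-SDE}, with (ii) and (iii) read off from the explicit $\chi$-covariances. The only cosmetic difference is that you invoke the symplectic identity to justify the $-d\Bb_t^*$ block, whereas the paper obtains that structure directly from the explicit form $\vl{1}{1}=\Ss\smat{-V_e\\ V_e}\smat{\bar Z & Z}$ with $V_e^*=V_e$ and the resulting covariance formulas; it also treats $\sigma=1$ first and rescales, which is immaterial.
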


\begin{proof} Let us stick to the case $\sigma=1$. 
Note that
\begin{equation}
 \Qq^{-1}=\Ss_{\Gamma,Z} \pmat{\one_{d_h} & \nul & - \Gamma^{-1} & \nul \\ \nul & \one_{d_e} & \nul & -Z \\
 \nul & -\one_{d_e} & \nul & \bar Z \\ -\one_{d_h} & \nul & \Gamma & \nul }\;
 \end{equation}
 where
 \begin{equation}
 \Ss_{\Gamma,Z}=\pmat{-S_\Gamma &  &  &  \\  & S_Z &  &  \\  &  & S_Z &  \\
  &  &  & -S_\Gamma}\,,\quad S_\Gamma=(\Gamma^{-1}-\Gamma)^{-1}\;,\quad
 S_Z=(\bar Z-Z)^{-1}
\end{equation}
we chose the sign on $S_\Gamma$ this way, so that $S_\Gamma>\nul$ is a positive diagonal matrix.
With \eqref{eq-T-parts} and \eqref{eq-def-Vh} this leads to
\begin{equation}
 \Vv_1=\Ss_{\Gamma,Z} \pmat{-V_h\Gamma & -V_{he} \bar Z & -V_{he} Z & -V_h \Gamma^{-1} \\
 -V_{he}^* \Gamma & -V_e \bar Z & -V_e Z & -V_{he}^* \Gamma^{-1} \\
 V_{he}^* \Gamma & V_e \bar Z & V_e Z & V_{he}^* \Gamma^{-1} \\
 V_h\Gamma & V_{he} \bar Z & V_{he} Z & V_h \Gamma^{-1}}\;,\quad
 \Ww=\Ss_{\Gamma,Z} \pmat{\Gamma & \nul & \nul & \Gamma^{-1} \\
 \nul & \bar Z & Z & \nul \\  \nul & -\bar Z & -Z & \nul \\ -\Gamma & \nul & \nul & \Gamma^{-1}}
\end{equation}
In the notations as introduced in Section~\ref{sec-intro} and used for Theorem~\ref{theo-SDE}
we have $\Gamma_2=\Gamma$ and
\begin{eqnarray}
 \vl11 U^* &=& \Ss \left(\pmat{-V_e & -V_e \\ V_e & V_e} + \lambda \epsilon \pmat{\one_{d_e} & \one_{d_e} \\ - \one_{d_e} & - \one_{d_e} }\right) \\
 \vl12\Gamma\vl21 U^* &=& \Ss \pmat{V_{he}^*S_\Gamma V_{he} & V_{he}^* S_\Gamma V_{he} \\
 -V_{he}^*S_\Gamma V_{he} & -V_{he}^* S_\Gamma V_{he}}\,
\end{eqnarray}
with $\Ss$ as in \eqref{eq-def-Sh}.
In order to calculate the drift term, note that using the definition of $Q$ in \eqref{eq-def-Sh} we obtain
$$
\int_{\langle Z \rangle} \pmat{\zb & \nul \\ \nul & \bar \zb} \,\EE\,\pmat{\varepsilon\one-V_{he}^* S_\Gamma V_{he} &
\varepsilon\one-V_{he}^* S_\Gamma V_{he}
 \\ -\varepsilon\one+V_{he}^* S_\Gamma V_{he} & -\varepsilon\one+V_{he}^* S_\Gamma V_{he}}
\pmat{\bar\zb & \nul \\ \nul & \zb}\,d\zb\,=\,
\pmat{\varepsilon\one-Q & \nul \\ \nul & -\varepsilon\one+Q}
$$
where we used that for any $d_e\times d_e$ matrix $M$ one finds
$$
\int_{\langle Z \rangle} \zb \,M\,\zb\,d\zb\,=\,
\lim_{n\to\infty} \frac1n \sum_{k=1}^n Z^k M Z^k =
\left(\lim_{n\to\infty} \frac1n\sum_{k=1}^n (z_i z_j)^k M_{ij}\right)_{ij}\,=\,\nul
$$
as we have $|z_iz_j|=1$ and $\im(z_i)>0,\,\im(z_j)>0$ implying that
$z_i z_j \neq 1$ for any $i,j \in \{1,\ldots,d_e\}$.
Therefore, application of Theorem~\ref{theo-SDE} gives \eqref{eq-sigma-SDE}
with $\Aa_t=\Aa_t^*$, $\Cc_t=\Cc_t^*$.
In order to express the covariances as described by \eqref{eq-variances} in more detail recall
$Z=\diag(z_1,\ldots,z_{d_e})$, $|z_j|=1$, leading to
\begin{align}
 \int_{\langle Z \rangle} \prod_{j=1}^{d_e} \zb_{jj}^{n_j} \,d\zb\,=\,
 \chi\left(\prod_{j=1}^{d_e} z_j^{n_j} \right)
 \qtx{with} \chi(z)=\begin{cases}
          1 & \qtx{for} z=1 \\
          0 & \qtx{else}
         \end{cases}\;
 \end{align}
 where $\zb_{jj}$ is the $j$-th diagonal entry of the diagonal matrix $\zb\in\langle Z \rangle$,
 and $n_j$ are integers.
This leads to the following covariances,
\begin{align*}
 \EE((\Aa_t)_{ij} (\Aa_t)_{kl}) &= \EE((\overline{\Aa_t})_{ji} (A_t)_{kl}) =\EE((\Cc_t)_{ij} (\Cc_t)_{kl}) = \notag  \EE((\overline{\Cc_t})_{ji} (\Cc_t)_{kl}) \\
 &= t\,\EE((V_e)_{ij}(V_e)_{kl}) \,\chi(\bar z_i z_j \bar z_k z_l)\;; \\
  \EE((\Bb_t)_{ij} (\Bb_t)_{kl}) &= t\,\EE((V_e)_{ij}(V_e)_{kl})\,\chi(z_iz_jz_kz_l)\;; \\
  \EE((\overline{\Bb_t})_{ij} (\Bb_t)_{kl}) &= t\, \EE((\overline{V_e})_{ij}(V_e)_{kl})\,
 \chi(\bar z_i \bar z_j z_k z_l)\;.
 \end{align*}

The correlations between the Brownian motions are given by
\begin{align*}
  &\EE((\Aa_t)_{ij} (\Cc_t)_{kl}) = \EE((\overline{\Aa_t})_{ji}(\Cc_t)_{kl})
  = t \,\EE((V_e)_{ij}(V_e)_{kl})\,\chi(z_i \bar z_j \bar z_k z_l)\;;\\
 &\EE((\Aa_t)_{ij}(\Bb_t)_{kl}) =
 \EE((\overline{\Aa_t})_{ji}(\Bb_t)_{kl}) = t\,\EE((V_e)_{ij}(V_e)_{kl})\,\chi(z_i\bar z_j z_k z_l)\;; \\
  &\EE((\Cc_t)_{ij}(\Bb_t)_{kl}) =
 \EE((\overline{\Cc_t})_{ji}(\Bb_t)_{kl}) = t\,\EE((V_e)_{ij}(V_e)_{kl})\,\chi(\bar z_i z_j z_k z_l) \,.
 \end{align*}
This shows part (i) for $\sigma=1$. Changing $V_1$ to $\sigma V_1$ immediately gives the general case. 
If $V_e$ is almost surely real, which is the case if $O^* V_1 O$ is almost surely real, then one has $\Cc_t=\overline{\Aa}_t$ and $\Bb_t = \Bb_t^\top$ giving part (ii).
Part (iii) follows from using the chaoticity assumption in the equations for the covariances.
\end{proof}

\subsection{Limiting eigenvalue statistics \label{sub:eigenvalue}}

The convergence to the SDE limit as in \eqref{eq-L_et} should firstly be interpreted for fixed $\varepsilon$ and $\sigma$.
However, considering direct sums of matrices for finitely many pairs $(\varepsilon, \sigma)$
one obtains 
joint convergence to a random field $(\varepsilon,\sigma,t)\mapsto \Lambda_t^{\varepsilon,\sigma}$ in terms of finite dimensional distributions. 
For fixed $\sigma,\,t$, the left hand side of  \eqref{eq-L_et} is clearly analytic in $\varepsilon\in\CC$.
Moreover, all estimates made for the general setup are uniform for $\varepsilon$ varying in compact sets.
Using the bounds \eqref{eq-Z-small} and \eqref{eq-XX-bound1} we can therefore apply \cite[Corollary 15]{VV1} and see that for fixed $\sigma$ and $t$ there is a unique\footnote{unique in the sense of a uniquely induced distribution on the set of analytic functions} version such that $\varepsilon \mapsto \Lambda^{\varepsilon,\sigma}_t$ is analytic.
In particular, using this analytic version, we can define the random set $\{\varepsilon \in \CC\;:\; f(\Lambda^{\varepsilon,\sigma}_t) = 0\}$ for fixed $(\sigma, t)$ and an analytic function
$f:\Mat(d,\CC)\to \CC$. Unless $f(\Lambda^{\varepsilon,\sigma}_t)$ is the zero function in $\varepsilon$, this random set consists of isolated points by analyticity and can be seen as a point process which we may denote by
${\rm zeros}_\varepsilon f(\Lambda^{\varepsilon,\sigma}_t)$.

\begin{theo}\label{theo-EV}
Consider the process $\Ee_{\sigma,n}$ of eigenvalues of
$n(H_{\frac\sigma{\sqrt{n}},n}-E)$ and let $n_k$ be an increasing sequence such that $Z^{n_k+1}\to Z_*$ for
$k\to\infty$ with $Z$ being the unitary, diagonal $d_e\times d_e$ matrix defined in \eqref{eq-def-Ga-Z}.
Then, $\Ee_{\sigma,n_k}$ converges to the zero process of the determinant of a $d_e\times d_e$ matrix,
\begin{equation*}
 \Ee_{\sigma,n_k}\,\Longrightarrow\, {\rm zeros}_\varepsilon \det\left( \pmat{\bar Z_*  &  Z_*}
 \Lambda^{\varepsilon,\sigma}_{1} \pmat{\one_{d_e} \\ - \one_{d_e}}
 \right)\;.
\end{equation*}
 \end{theo}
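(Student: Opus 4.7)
The strategy is to translate the eigenvalue condition into the vanishing of a $d\times d$ determinant built from the Schur-complement process of Theorem~\ref{theo-SDE}, and then to use block row operations together with a rescaling to cancel the exponentially growing hyperbolic contributions, isolating the elliptic SDE limit $\Lambda^{\varepsilon,\sigma}_1$.

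\emph{Determinant form of the eigenvalue condition.} Dirichlet boundary conditions give that $\varepsilon\in\Ee_{\sigma,n}$ iff the top-left $d\times d$ block of the transfer product $T^{E+\varepsilon/n}_{\sigma/\sqrt n,[1,n]}$ is singular, equivalently $\det(P_t\Qq\,\mathcal{T}_n\,\Qq^{-1}P_t^*)=0$ for $\mathcal{T}_n:=\Tt^{\varepsilon,\sigma}_{1/\sqrt n,[1,n]}$ and $P_t=[\one_d\;\nul]$. Taking $\Xx_0=\one_{2d}$ one has $\mathcal{T}_n=\Rr^n\Xx_n$. Decomposing $\Xx_n$ in the finer $(d_h,d_e,d_e,d_h)$ block structure and using the Schur identity $A_n^{kl}-X_n^{kl}=B_n^{(k)}\wt Z_n^{(l)}$ where $\wt Z_n:=D_n^{-1}C_n$, each of the four blocks of the resulting $d\times d$ matrix $F_n(\varepsilon,\sigma)$ admits an explicit expression in terms of the Schur entries $X_n^{kl}$, the row vectors $\Vv_n^{top}:=\Gamma Z_n^{(0)}+\Gamma^{-1}$ and $\Vv_n^{bot}:=\bar Z^{n+1}Z_n^{(1)}+Z^{n+1}Z_n^{(2)}$, the exponentially large $D_n$, the dual process $\wt Z_n$, and the phases $\bar Z^{n+1},Z^{n+1}$.

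\emph{Block row reduction and rescaling.} For small $\lambda$ the $d_h\times d_h$ matrix $\Vv_n^{top}=\Gamma^{-1}+O(\lambda)$ is invertible. Subtracting $M_n:=\Vv_n^{bot}(\Vv_n^{top})^{-1}$ times the top block row from the bottom block row of $F_n$ leaves $\det(F_n)$ unchanged and, by direct computation, exactly cancels the $D_n$-dependent terms in the new bottom row, leaving only $X_n$-contributions:
\begin{align*}
(1,0)_{\text{new}}&=\bigl[-\bar Z^{n+1}X_n^{10}-Z^{n+1}X_n^{20}\bigr]S_\Gamma+M_n\Gamma X_n^{00}S_\Gamma,\\
(1,1)_{\text{new}}&=\bigl[\bar Z^{n+1}(X_n^{11}-X_n^{12})+Z^{n+1}(X_n^{21}-X_n^{22})\bigr]S_Z-M_n\Gamma(X_n^{01}-X_n^{02})S_Z.
\end{align*}
Rescaling the top $d_h$ block row by $\Gamma^{n+1}$ (another nonzero determinant factor) brings its leading $\Gamma^{-1}D_n$ term to $\Gamma^n D_n=O(1)$. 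Since $M_n\to 0$ by Proposition~\ref{prop-Z-small} and $X_n^{(0,\cdot)},X_n^{(\cdot,0)}\to 0$ while $X_n^{kl}\to(\Lambda^{\varepsilon,\sigma}_1)^{kl}$ for $k,l\in\{1,2\}$ by Theorem~\ref{theo-SDE}, the rescaled row-reduced matrix converges along the subsequence $n_k$ (with $Z^{n_k+1}\to Z_*$) to the upper block-triangular matrix
\[
\pmat{\wt D_\infty(\one-\wt Z_\infty^{(0)})S_\Gamma & \wt D_\infty(\wt Z_\infty^{(1)}-\wt Z_\infty^{(2)})S_Z \\ 0 & \pmat{\bar Z_* & Z_*}\Lambda^{\varepsilon,\sigma}_1\pmat{\one\\ -\one}S_Z},
\]
where $\wt D_\infty:=\lim\Gamma^n D_n$ and $\wt Z_\infty:=\lim\wt Z_n$. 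With $S_\Gamma,S_Z$ and the dual limits $\wt D_\infty,\,\one-\wt Z_\infty^{(0)}$ almost surely invertible, this limit determinant factors as an almost surely nonzero analytic function of $\varepsilon$ (the top-left $d_h\times d_h$ factor) times $\det(\pmat{\bar Z_* & Z_*}\Lambda^{\varepsilon,\sigma}_1\pmat{\one\\ -\one})\det(S_Z)$, so its zero set in $\varepsilon$ coincides with the zero set of the stated $d_e\times d_e$ determinant.

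\emph{Zero-set convergence and the main obstacle.} Since $\varepsilon\mapsto\mathcal{T}_n$ is polynomial, every entry of $F_n$ is entire in $\varepsilon$, and the uniform-on-compacta bounds \eqref{eq-Z-small} and \eqref{eq-XX-bound1} permit upgrading the pointwise distributional limit above to joint convergence of the rescaled row-reduced determinant as an analytic function of $\varepsilon$. Applying \cite[Corollary~15]{VV1} then yields convergence of the corresponding zero processes, which is exactly the claim of the theorem. The main novel technical step is the construction of the almost-sure limits $\Gamma^n D_n\to\wt D_\infty$ and $\wt Z_n\to\wt Z_\infty$ for the dual hyperbolic processes, which are not direct outputs of Theorem~\ref{theo-SDE}; they are obtained by running the Schur-complement analysis of Section~\ref{sec-evol} on the exponentially growing side, equivalently by invoking the reduced-transfer-matrix framework of Remark~\ref{rem-red-T}. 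It is precisely at this step that the hyperbolic channels are shown to contribute only an almost surely invertible prefactor, so that the entire limiting point process is carried by the elliptic SDE $\Lambda^{\varepsilon,\sigma}_t$ of Theorem~\ref{theo-SDE}, exactly as in \cite{VV1}.
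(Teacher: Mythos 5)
Your overall architecture (Dirichlet condition as a vanishing $d\times d$ determinant of the conjugated transfer product, isolation of the elliptic block via the Schur complement, analytic versions in $\varepsilon$ via \cite[Corollary 15]{VV1}, and vague convergence of zero sets) matches the paper, but your treatment of the hyperbolic block diverges and leaves a genuine gap. The paper never needs the limits $\Gamma^n D_n\to\wt D_\infty$ or $D_n^{-1}C_n\to\wt Z_\infty$: it multiplies $\Xx^{\varepsilon}_{\lambda,n}\Theta_0$ on the right by a matrix $M^{\varepsilon}_{\lambda,n}$ whose lower-right block is $(D^{\varepsilon}_{\lambda,n})^{-1}$, so the hyperbolic block of the product is exactly $\one_{d_h}$ for every $n$ and no asymptotics of the growing directions are required. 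Your row-reduction route instead requires (i) almost sure convergence of $\Gamma^n D_n$ and of $D_n^{-1}C_n$ in the coupled limit $\lambda=1/\sqrt n\to0$, and (ii) almost sure invertibility of $\wt D_\infty$ and of $\one-\wt Z_\infty^{(0)}$. None of these are outputs of Theorem~\ref{theo-SDE} or of Section~\ref{sec-evol} (which controls only $Z_{\lambda,n}=B_nD_n^{-1}$ and the Schur complement), and Remark~\ref{rem-red-T} is a consequence of the theorem you are trying to prove, not a tool available to you. As stated, this step is asserted rather than proved, and it is avoidable.

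The more serious omission is that you never show the limiting determinant $\det\bigl(\smat{\bar Z_* & Z_*}\Lambda^{\varepsilon,\sigma}_1\smat{\one\\ -\one}\bigr)$ is not identically zero in $\varepsilon$. Local uniform convergence of analytic functions yields convergence of zero sets only when the limit function is not the zero function; otherwise the right-hand side of the theorem is not even a well-defined point process. The paper closes this by rewriting the SDE \eqref{eq-sigma-SDE} as the fundamental-solution equation of a random first-order operator $\Dd$ on $L^2([0,1])\otimes\CC^{2d_e}$ with white-noise potential, checking via integration by parts that $\Dd$ is symmetric for the boundary conditions $\pmat{\one & \one}\psi(0)=0$ and $\pmat{\bar Z_* & Z_*}\psi(1)=0$, and concluding that the zero set is the (real, discrete) spectrum of $\Dd$. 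Some argument of this kind is indispensable and is entirely missing from your proposal.
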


\begin{proof} Without loss of generality we restrict to the case $\sigma=1$.
We won't need the precise form of the limit SDE but it is important how we obtain this SDE.
Therefore we need to look at the matrix parts giving the Schur complement as in the proof of
Theorem~\ref{theo-SDE}. Hence, using $U$ and $\Tt^{\varepsilon,1}_{\lambda,[1,n]}$ as above let
\begin{equation*}
 \Rr=\pmat{\one_{d_h} &  &  \\  & U &  \\  &  & \one_{d_h}}\,,\quad
 \Xx_0=\pmat{\one_{d_h} & \nul & -\one_{d_h} \\ \nul & \one_{2d_e} & \nul \\ \nul & \nul & \one_{d_h}}
\end{equation*}
and
\begin{equation}\label{eq-def-Xx-e-l-n}
 \Xx^{\varepsilon}_{\lambda,n}=\Rr^{-n} \Tt^{\varepsilon,1}_{\lambda,[1,n]} \Xx_0\,.
\end{equation}
Using blocks of sizes $d_h+2d_e$ and $d_h$, let
\begin{equation}\label{eq-def-X-e-l-n}
\Xx^{\varepsilon}_{\lambda,n} =
\pmat{A^{\varepsilon}_{\lambda,n} & B^{\varepsilon}_{\lambda,n} \\ C^{\varepsilon}_{\lambda,n} & D^{\varepsilon}_{\lambda,n}}\;, \quad
X^{\varepsilon}_{\lambda,n}=A^{\varepsilon}_{\lambda,n}-B^{\varepsilon}_{\lambda,n}
(D^{\varepsilon}_{\lambda,n})^{-1} C^{\varepsilon}_{\lambda,n}\,.
\end{equation}
Then by Theorem~\ref{theo-SDE},
$X_{\varepsilon,1/\sqrt{n},\lfloor tn \rfloor} \Longrightarrow \pmat{\nul &  \\  & \Lambda^\varepsilon_{t} }$,
with the process $\Lambda^\varepsilon_t=\Lambda^{\varepsilon,1}_{t}$ as in \eqref{eq-L_et} and Theorem~\ref{theo-EV}.
Let us define
\begin{equation}
\Theta_0:=  \Xx_0^{-1} \Qq^{-1} \pmat{\one_{d} \\ \nul}\,\pmat{\nul & \Gamma^{-1}-\Gamma \\ (\bar Z-Z) & \nul}\,=\,
  \pmat{\nul & \nul \\ \one_{d_e} & \nul \\ -\one_{d_e} & \nul \\ \nul & \one_{d_h} }
\end{equation}
as well as
\begin{equation}
M^{\varepsilon}_{\lambda,n}=\pmat{\one_{d_e} & & \nul \\ - \left( D^{\varepsilon}_{\lambda,n}\right)^{-1} C^{\varepsilon}_{\lambda,n}
\smat{\nul \\ \one_{d_e} \\ -\one_{d_e}} & & ( D^{\varepsilon}_{\lambda,n})^{-1} } \,\in\,
\GL(d,\CC)
\end{equation}
Then, 
\begin{equation}\label{eq-Xx-Th-M}
 \Xx^{\varepsilon}_{\lambda,n} \,\Theta_0\, M^{\varepsilon}_{\lambda,n}
\,=\,
\pmat{X^{\varepsilon}_{\lambda,n} \smat{\nul \\ \one \\ -\one} & B^{\varepsilon}_{\lambda,n}(D^{\varepsilon}_{\lambda,n})^{-1} \\ \nul & \one_{d_h} }
\end{equation}


Let us also define
\begin{equation}\label{eq-def-Th-n}
 \Theta^*_n:=\pmat{\nul & \one_{d_e} \\ \Gamma^{-1} & \nul}\,\pmat{\one_{d} & \nul}\, \Qq \Rr^n\,=\,
 \pmat{\nul & {\bar Z}^{n+1} & Z^{n+1} & \nul \\ \one_{d_h} & \nul & \nul & \one_{d_h}}\,.
\end{equation}

An energy $E+\lambda^2\varepsilon$
is an eigenvalue of $H_{\lambda,n}$, precisely if there is a solution to the eigenvalue equation
with $\psi_{0}=0$ and $\psi_{n+1}=0$, i.e. if and only if
\begin{equation}\label{eq-eig-cond}
 \det \left(\pmat{\one_{d} & \nul} \Tt^{E+\lambda^2\varepsilon}_{\lambda,[1,n]}\,\pmat{\one_{d} \\ \nul}\right) = 0\,.
\end{equation}
As $\Tt^{E+\lambda^2\varepsilon}_{\lambda,[1,n]}=\Qq \Rr^n \Xx^{\varepsilon}_{\lambda,n}\Xx_0^{-1} \Qq^{-1}$, this is equivalent to
\begin{equation}
\det\left(\Theta_n^* \Xx^{\varepsilon}_{\lambda,n} \Theta_0  M^{\varepsilon}_{\lambda,n}\right)=0\,
\end{equation}

Using Theorem~\ref{theo-SDE}, \eqref{eq-Xx-Th-M} and \eqref{eq-def-Th-n}, we see that along a subsequence $n_k$ of the positive integers where $Z^{n_k+1}$ converges to $Z_*$, we find 
for $\lambda_k=1/\sqrt{n_k}$
\begin{equation}\label{eq-TXTM-conv}
\Theta_{n_k}^* \Xx^\varepsilon_{\lambda_k,n_k} \Theta_0  M^\varepsilon_{\lambda_k,n_k} \,\Longrightarrow\,
\pmat{ \pmat{\bar Z_* &  Z_*} \Lambda^\varepsilon_1 \pmat{\one_{d_e} \\ -\one_{d_e}} & & \nul \\ \nul & & \one_{d_h}}\;.
\end{equation}
As already established above, there is a unique holomorphic version of the random process $\varepsilon\mapsto  \Lambda_{\varepsilon,1}$. 
In fact, using uniform boundedness of $\Theta_n$ as well as \eqref{eq-Xx-Th-M} and the bounds \eqref{eq-Z-small}, \eqref{eq-XX-bound1} we find that
$\EE\| \Theta_{n_k}^* \Xx_{\varepsilon,\lambda_k,n_k} \Theta_0  M_{\varepsilon,\lambda_k,n_k} \|$ is uniformly bounded for $\varepsilon \in \CC$ varying in a compact set.
Hence, we can apply \cite[Corollary 15]{VV1} to obtain again the existence of a unique analytic version in $\varepsilon$ for the right hand side.
Moreover, we find a realization of all random processes on the same probability space but with local-uniform convergence in $\varepsilon$ in the equation
\eqref{eq-TXTM-conv} (Skorokhod embedding theorem).
As the determinant is a holomorphic function, the same is true for the determinants of these matrices.
As long as the (random) holomorphic determinant of the right hand side is not identically zero the local uniform convergence also implies that the discrete level sets of zeros of the determinants converge in the vague sense,
i.e. the counting measures integrated against continuous, compactly supported functions converge.
It is possible that certain zeros go off to infinity and disappear in the limit.

Hence, it is left to show that $\det(\smat{\bar Z_* &  Z_*} \Lambda^\varepsilon_1 \smat{\one \\ -\one}) $ is (almost surely) not identically zero in $\varepsilon$.
Now, \eqref{eq-sigma-SDE} can be rewritten as
$$
\pmat{\one & \\ & -\one} \Ss^{-1}\,d\Lambda^\varepsilon_t\,+\,\pmat{Q \\ & Q}\,\Lambda^\varepsilon_{t}\, dt\,+\,
\pmat{d\Aa_t & d\Bb_t \\ d\Bb^*_t & -d\Cc_t}\,\Lambda^\varepsilon_{t}\,=\,\varepsilon\,\Lambda^\varepsilon_{t}\,dt
$$
which is the transfer matrix equation (fundamental solution) for the eigenvalue equation $\Dd \psi = \varepsilon \psi$ where $\Dd$ is the random operator
$$
\Dd \psi(t) = \left[ \pmat{\one \\ & -\one} \Ss^{-1} \partial_t\,+\,\pmat{Q\\ & Q} + \pmat{d\Aa_t & d\Bb_t \\ d\Bb^*_t & -d\Cc_t}\,/\,dt \right]\,\psi(t)\,
$$
Using the continuous versions of the Brownian motions leading to measure valued white noise, one can make perfect sense of this random operator $\Dd$ on $L^2([0,1])\otimes \CC^{2d_e}$, by choosing 
the random domain of continuous functions $\psi(t)$ such that $\Dd \psi(t)$ (at first defined as a measure) is a continuous function.
(A typical procedure for first-order one-dimensional operators with measure-valued potential).

The zero determinant condition above yields an eigenvector $\psi$ satisfying the boundary conditions $\psi(0)=\smat{\one \\ -\one} \psi_1$ (i.e. $\pmat{\one & \one} \psi(0) = 0$) and 
$ \pmat{\bar Z_* & Z_*} \psi(1) = 0$. One can check that the operator is symmetric with these boundary conditions. Indeed, using integration by parts one finds for continuous 
$\psi(t),\,\varphi(t)$ in the domain with these boundary conditions, that
\begin{align*}
& \int_0^1 (\Dd \psi(t))^* \varphi (t)\,dt\,-\,\int_0^1 \psi^*(t) \Dd \varphi (t)\,dt \,=\,- \left[\psi^*(t)   \smat{S_Z^{-1} & \\ & -S_Z^{-1}} \varphi(t) \right]_0^1 \\
& \,=\, \psi^*(0) \smat{\one \\ \one} S_Z^{-1} \smat{\one & \nul} \varphi(0)\,+\, \psi^*(1) \smat{Z_* \\ \bar Z_*} S_Z^{-1} \smat{\nul & Z_*} \varphi(1)\,=\,0
\end{align*}
In the second line we used the boundary conditions first for $\varphi$ and then for $\psi$.
Hence, the set of eigenvalues $\varepsilon$ of $\Dd$ with these boundary conditions is a subset of the real line and in fact discrete and it is equal to the
zero set in $\varepsilon$ of the right hand side
of \eqref{eq-TXTM-conv}.
\end{proof}

\subsection{Limiting GOE statistics}



In this subsection we will prove Theorem~\ref{theo-GOE} by reduction to the work in \cite{VV1}.
Without loss of generality we focus on energies $E$ smaller than $0$ and consider $r=1$.
The more general case needs some more care and notations in the subdivision into elliptic and hyperbolic channels, but the main calculations remain the same.
We need to consider the SDE limit as described above a bit more precisely for this particular Anderson model as in \eqref{eq-def-H} with $A=\ZZ_d$ and $V_n$ as in \eqref{eq-GOE-cond1}.

In Proposition~\ref{prop-SDEs}, especially for the definitions of $V_h$, $V_e$ and $V_{he}$ it was assumed that $A$ is diagonal. So in order to use the calculations above we need to
diagonalize $\ZZ_d$ and see how this unitary transformation changes $V_n$.

We let $d\geq 2$, then $\ZZ_d$ is diagonalized by the orthogonal matrix $O$ given by
\begin{equation}
 O_{jk}=\sqrt{2/(d+1)}\,\sin(\pi jk /(d+1))\,.
\end{equation}
The corresponding eigenvalue of $\ZZ_d$ with eigenvector being the $j$-th column vector of $O$
is given by
\begin{equation}
a_j=2 \cos(\pi j / (d+1))\,,\quad j=1,\ldots, d\;.
\end{equation}
For $-2<E<0$ there is $d_h < d$ such that
\begin{align}
 2\cos(\pi j / (d+1))- E &> 2\qtx{for} j=1,\ldots,d_h \quad \text{and} \\
 -2< 2\cos(\pi j / (d+1))- E &< 2 \qtx{for} j=d_h+1,\ldots,d\,.
\end{align}
So we have $d_h$ hyperbolic and $d_e=d-d_h$ elliptic channels and the upper $d_h\times d_h$ block of $O^* \ZZ_d O$ corresponds to the hyperbolic channels.
Using \eqref{eq-GOE-cond1} and the notations as in \eqref{eq-def-Vh} we have
\begin{equation}
 \pmat{V_h & V_{he} \\ V_{he}^* & V_e} =
 O^\top \pmat{v_1 & & \nul \\ & \ddots & \\ \nul & & v_d}\,O\;,
 \quad \EE(v_j)=0\,,\;\; \EE(v_j\,v_k) = \delta_{jk}\,.
\end{equation}
Let $E$ be such that $Z$ is chaotic, then by Proposition~\ref{prop-SDEs}~(iii) we need to consider the following the covariances
\begin{align}
 \EE(|(V_e)_{ij}|^2)&=\EE\,\left|(O^\top V_1 O)_{i+d_h,j+d_h}\right|^2 = \langle |O_{i+d_h}|^2, |O_{j+d_h}|^2\rangle \\
 \EE((V_e)_{ii}(V_e)_{jj}) &=\EE\,\left((O^\top V_1 O)_{i+d_h,i+d_h}(O^\top V_1 O)_{j+d_h,j+d_h}\right) =
 \langle |O_{i+d_h}|^2, |O_{j+d_h}|^2\rangle\,.
\end{align}
Here, by $|O_i|^2$ we denote the vector $(|O_{k,i}|^2)_{k=1,\ldots,d}$ and $\langle\cdot,\cdot\rangle$
denotes the scalar product.
As stated in \cite{VV1}, one finds
\begin{equation}
(d+1)\; \langle |O_i|^2\,,\,|O_j|^2\,\rangle\, =\,
 \begin{cases}
     3/2 &\qtx{for} i=j \\
      1 &\qtx{for} i\neq j\;.
   \end{cases}
\end{equation}
Let us further calculate the drift contribution $Q$ from the hyperbolic channels as introduced above.
Using chaoticity, it is not hard to see from \eqref{eq-def-Sh} that $Q$ is diagonal. Moreover one has
\begin{equation}
 Q_{jj}=\EE(V_{he}^* S_\Gamma V_{he})_{jj})= \sum_{k=1}^{d_h} (S_{\Gamma})_{kk} \EE ([(V_{he})_{kj}]^2)=
 \sum_{k=1}^{d_h} \frac{\langle |O_k|^2, |O_{j+d_h}|^2\rangle}{\gamma_k^{-1}-\gamma_k}\;.
\end{equation}
It follows that $Q$ is a multiple of the unit matrix, more precisely
\begin{equation}
 Q=q\,\one\qtx{with} q = \frac{1}{d+1}\,\sum_{k=1}^{d_h} (\gamma_k^{-1}-\gamma_k)^{-1}\,.
\end{equation}
Note that $|q|<\max_k |\gamma_k^{-1}-\gamma_k|<\max_k |E-a_k| = \|E-A\|=\|E-\ZZ_d\|\leq |E|+2$ uniformly.
Thus, using Proposition~\ref{prop-SDEs} 
we obtain the following SDE limits,
\begin{equation}
 d\Lambda^{\varepsilon,\sigma}_{t}\,=\,
 \Ss\,(\varepsilon-\sigma^2q)\pmat{\one & \nul \\ \nul & -\one}\,\Lambda^{\varepsilon,\sigma}_{t}\,dt\,+\,
 \sigma \Ss \pmat{d\Aa_t & d\Bb_t \\ -d\Bb_t^* & -d\overline{\Aa}_t}\,\Lambda^{\varepsilon,\sigma}_{t}
\end{equation}
where $\Aa_t$ and $\Bb_t$ are independent matrix Brownian motions, $\Aa_t$ is Hermitian, $\Bb_t$ complex symmetric, i.e.
\begin{equation}
 \Aa_t^* = \Aa_t\;,\quad \Bb_t^\top = \Bb_t\;
\end{equation}
with covariance structure
\begin{equation}
\EE(|(\Bb_t)_{ij}|^2)=\EE(|(\Aa_t)_{ij}|^2)=\EE((\Aa_t)_{ii}(\Aa_t)_{jj})=
\begin{cases}
      \frac32\,t \,/\, (d+1) &\;\text{for}\; i=j\\
      t \,/\, (d+1)&\;\text{for}\; i\neq j
\end{cases}\;.
\end{equation}
All covariances which do not follow are zero. Except for the additional drift $\sigma^2 q$ which can be seen as a shift in $\varepsilon$, this is the exact same SDE as it appears in \cite{VV1}. 
In fact, the matrix $\Ss$ here corresponds to $iS^2$ as in \cite{VV1} and the process there corresponds to the process above conjugated by $|\Ss|^{1/2}$.\\
Thus, from now on the proof to obtain the ${\rm Sine}_1$ kernel and GOE statistics follows precisely the arguments as in \cite{VV1}.

First take $E$ as in Lemma~\ref{lem-chaotic} so that $Z$ is chaotic, and take a sequence $n_k$ such that $Z^{n_k+1} \to \one$, then, for the point process as in Theorem~\ref{theo-EV} we find
$\Ee_{\sigma,n_k} \Longrightarrow \Ee_\sigma={\rm zeros}_\varepsilon \det(\smat{\one & \one} \Lambda^{\varepsilon,\sigma}_1 \smat{\one \\ -\one})$\,.
Defining $\widehat\Lambda^{\varepsilon,\sigma}_t=\sigma^{-1} (\Lambda^{\varepsilon \sigma,\sigma}_t - \one)$ we find 
$$\sigma^{-1} \Ee_\sigma \,=\, {\rm zeros}_\varepsilon
\det\left(\pmat{\one & \one} \widehat \Lambda^{\varepsilon,\sigma}_1 \pmat{\one \\ -\one}\right)$$
where $\widehat \Lambda^\varepsilon_0=\nul$ and
$$
d \widehat \Lambda^{\varepsilon,\sigma}_t\,=\,(\varepsilon-\sigma q) \Ss \pmat{\one \\ & -\one} (\sigma \widehat \Lambda^{\varepsilon,\sigma}_t + \one)\,dt\,+\,
\Ss \pmat{d\Aa_t & d\Bb_t \\ -d\Bb_t^* & -d\overline{\Aa}_t}\,(\sigma \widehat\Lambda^{\varepsilon,\sigma}_{t} + \one)
$$
By \cite[Theorem 11.1.4]{SV} this SDE converges for $\sigma\to 0$ to the solution of the SDE with $\sigma=0$ which is a matrix-valued Brownian motion with drift and explicitly solvable.
Thus, for $\sigma \to 0$ one has $\widehat \Lambda^{\varepsilon,\sigma}_t \Longrightarrow \widehat \Lambda^{\varepsilon}_t$ which satisfies the same SDE with $\sigma=0$ above, therefore
$$
\widehat \Lambda^{\varepsilon,\sigma}_t \;\stackrel{\sigma\to 0}{\Longrightarrow}\; \widehat \Lambda^{\varepsilon}_t
\,=\, \varepsilon t\, \Ss \pmat{\one \\ & -\one}\,+\,\Ss\,\pmat{\Aa_t & \Bb_t \\ -\Bb_t^* & -\overline{\Aa}_t}\;.
$$
Using analytic versions in $\varepsilon$ one obtains by similar arguments as above that
$$
\sigma^{-1} \Ee_\sigma\;\stackrel{\sigma\to 0}{\Longrightarrow}\; {\rm zeros}_\varepsilon \det\left(\pmat{\one & \one} \widehat \Lambda^\varepsilon_1 \pmat{\one \\ - \one} \right)\,=\,
{\rm spec}\,(\re(\Bb_1-\Aa_1))
$$
where ${\rm spec}(\cdot)$ denotes the spectrum and $\re(\cdot)$ the entry-wise real part of a matrix. The latter equation is a simple calculation using the relations from above.
Similar to Proposition~9 in \cite{VV1}, the convergence can be realized jointly.
\begin{lemma}\label{lem-conv-RM}
Let $Z$ be chaotic, let $n_k$ be a sequence such that $Z^{n_k+1}\to \one$ and let  $\sigma_k$ be sequence with $\sigma_k\to 0$ such that $\frac{1}{\sigma_k} \| Z^{n_k+1} - \one \| \to 0$.
Consider the regularized Schur complements $\widehat \Tt^{\varepsilon,\sigma}_{\lambda,n}$ of the transfer matrices as defined in \eqref{eq-def-hat-T}.
Then define the regularized versions $\Xx^{\varepsilon,\sigma}_{\lambda,n}$ and the part $X^{\varepsilon,\sigma}_{\lambda,n}$ as in \eqref{eq-def-Xx-e-l-n} and \eqref{eq-def-X-e-l-n} but this time keeping the $\sigma$.
Choose $\Xx_0$ such that the corresponding Schur complement $X_0$ exists and let $\hat X_0=\smat{\nul \\ & \one} X_0$, the starting point for the SDE limit. Then, for $t>0$,
we find for $k\to \infty$ that
$$
\frac{1}{\sigma_k} \left( X^{\varepsilon \sigma_k, \sigma_k}_{\frac{1}{\sqrt{n_k}},\lfloor tn_k\rfloor}\;-\; \widehat X_0 \right) \quad \Longrightarrow \quad \pmat{\nul \\ & \widehat \Lambda^{\varepsilon}_t} \widehat X_0
$$
jointly for $t\in [0,1]$ and $\varepsilon$ varying in any finite subset of $\CC$.
Moreover, for the eigenvalue process $\Ee_{\sigma_k, n_k, d}$ of $(H_{\lambda_k,n_k,d}-E)$ we find
$$
\frac{n_k}{\sigma_k}\; \Ee_{\sigma_k, n_k, d}\;\Longrightarrow \; {\rm spec}\,\left(\re(\Bb_1-\Aa_1)\right)
$$
\end{lemma}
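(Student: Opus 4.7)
The plan is to feed the rescaled transfer matrices $\Tt^{\varepsilon\sigma_k,\sigma_k}_{\lambda_k,n}$ into Theorem~\ref{theo-SDE} at each fixed $\sigma_k$, pass to a small-noise limit of the resulting SDE for $\Lambda^{\varepsilon\sigma_k,\sigma_k}_t$ via Stroock--Varadhan, and glue the two limits along the sequence $(n_k,\sigma_k)$ via a diagonal argument. The key difficulty---and the reason for the quantitative rate hypothesis $\sigma_k^{-1}\|Z^{n_k+1}-\one\|\to 0$---is that the $n\to\infty$ and $\sigma\to 0$ limits do not commute for free: the natural error in passing from the rotated form of Theorem~\ref{theo-SDE} to the un-rotated form of the lemma contains $Z^{n_k+1}-\one$, and without a quantitative rate this error would dominate the $\sigma_k^{-1}$-rescaled limit. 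The eigenvalue statement then follows by passing analytic zeros through a local-uniform limit.

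For fixed $\sigma>0$, Theorem~\ref{theo-SDE} applied to $\Tt^{\varepsilon\sigma,\sigma}_{\lambda,k}$ gives, jointly in $t$ and in finitely many $\varepsilon\in\CC$,
\begin{equation*}
X^{\varepsilon\sigma,\sigma}_{\frac{1}{\sqrt{n}},\lfloor tn\rfloor}\Longrightarrow\pmat{\nul \\ & \Lambda^{\varepsilon\sigma,\sigma}_t}\widehat X_0\qquad(n\to\infty),
\end{equation*}
with $\Lambda^{\varepsilon\sigma,\sigma}_t$ solving the SDE of Proposition~\ref{prop-SDEs}. Setting $\widehat\Lambda^{\varepsilon,\sigma}_t := \sigma^{-1}(\Lambda^{\varepsilon\sigma,\sigma}_t-\one)$ and rescaling produces the $\sigma$-dependent SDE displayed above the lemma, whose coefficients converge locally uniformly to those of the linear SDE with explicit Gaussian solution $\widehat\Lambda^\varepsilon_t$; \cite[Theorem~11.1.4]{SV} then yields $\widehat\Lambda^{\varepsilon,\sigma}_t\Longrightarrow\widehat\Lambda^\varepsilon_t$ as $\sigma\to 0$. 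Combining these two limits along $(n_k,\sigma_k)$ using Skorokhod embedding and the uniform-in-$\sigma$ moment bounds from Lemma~\ref{lem-estimates}, the rate hypothesis ensures that the correction from removing the $U^{-\lfloor tn\rfloor}$ conjugation is $o(\sigma_k)$ on the scale of $X^{\varepsilon\sigma_k,\sigma_k}_{1/\sqrt{n_k},n_k}-\widehat X_0$, giving the first assertion.

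For the eigenvalue process, as in Theorem~\ref{theo-EV} the rescaled eigenvalues of $n_k(H_{\lambda_k,n_k,d}-E)$ are zeros in $\varepsilon'$ of $\det(\Theta_{n_k}^*\Xx^{\varepsilon',\sigma_k}_{\lambda_k,n_k}\Theta_0 M^{\varepsilon',\sigma_k}_{\lambda_k,n_k})$. Substituting $\varepsilon'=\sigma_k\varepsilon$ and dividing by $\sigma_k^{d_e}$, I use $Z^{n_k+1}\to\one$ at the required rate together with the joint convergence just obtained---and the analytic-in-$\varepsilon$ versions provided by \cite[Corollary~15]{VV1}---to see that this rescaled determinant converges locally uniformly in $\varepsilon$ to $\det\big(\pmat{\one & \one}\widehat\Lambda^\varepsilon_1\pmat{\one \\ -\one}\big)$. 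A direct algebraic reduction using $\Aa_1^*=\Aa_1$, $\Bb_1^\top=\Bb_1$ and the block form of $\Ss$ simplifies this to a nonzero constant times $\det(\varepsilon\one-\re(\Bb_1-\Aa_1))$, whose zero set is $\mathrm{spec}(\re(\Bb_1-\Aa_1))$. Hurwitz's theorem then upgrades the local-uniform convergence of these analytic functions into vague convergence of their zero point processes, giving the second assertion.
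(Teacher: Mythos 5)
Your treatment of the eigenvalue statement matches the paper's: substitute $\varepsilon'=\sigma_k\varepsilon$, use the rate hypothesis to kill the boundary term $\sigma_k^{-1}(\bar Z^{n_k+1}-Z^{n_k+1})$ coming from $\Theta_{n_k}^*$, and pass the zeros of the analytic determinants through local-uniform convergence. Note, however, that this boundary term is the \emph{only} place the rate hypothesis enters; it plays no role in the first assertion, contrary to your claim that it controls the removal of the $U^{-\lfloor tn\rfloor}$ conjugation there --- that conjugation is already built into the definition of $X^{\varepsilon,\sigma}_{\lambda,n}$ through the factor $\Rr^{-n}$ in \eqref{eq-def-Xx-e-l-n}, and the limit in the first assertion contains no residual rotation.

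The genuine gap is in your proof of the first assertion. You propose an iterated limit --- first $n\to\infty$ at fixed $\sigma$ via Theorem~\ref{theo-SDE}, then $\sigma\to 0$ via Stroock--Varadhan --- glued by ``a diagonal argument''. A diagonal argument yields convergence along \emph{some} sequence $(n_{k_j},\sigma_j)$ of your choosing, whereas the lemma asserts convergence along an \emph{arbitrary given} sequence $(n_k,\sigma_k)$ subject only to $\sigma_k\to 0$ and the rate condition. Worse, the object being controlled is $\sigma_k^{-1}\big(X^{\varepsilon\sigma_k,\sigma_k}_{1/\sqrt{n_k},\lfloor tn_k\rfloor}-\widehat X_0\big)$: the division by $\sigma_k\to 0$ amplifies the error committed in the fixed-$\sigma$ weak limit, so for the gluing to work you would need the $n\to\infty$ convergence to hold with an error that is $o(\sigma)$ uniformly --- a quantitative rate that neither Theorem~\ref{theo-SDE} nor the moment bounds of Lemma~\ref{lem-estimates} supply, and which a Skorokhod coupling does not create. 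The paper avoids this entirely by applying Proposition~\ref{p_turboEK} \emph{once}, directly to the already-rescaled process $\widehat X_{\lambda,n}=\sigma_\lambda^{-1}\big(X^{\varepsilon\sigma_\lambda,\sigma_\lambda}_{\lambda,n}-\widehat X_0\big)$ with $\sigma_\lambda$ coupled to $\lambda$ so that $\sigma_{\lambda_k}=\sigma_k$: one recomputes the one-step drift and diffusion of $\widehat X_{\lambda,n}$ (per step the drift is $O(\lambda^2\varepsilon\sigma_\lambda)$ and the noise $O(\lambda\sigma_\lambda)$, hence of the correct orders $\lambda^2$ and $\lambda$ after dividing by $\sigma_\lambda$), observes that as $\sigma_\lambda\to 0$ the limiting coefficients lose their dependence on the current state, and reads off the Gaussian limit $\widehat\Lambda^\varepsilon_t$ in a single application of the convergence criterion. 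In the paper the Stroock--Varadhan continuity theorem is used only in the surrounding text to identify the $\sigma\to 0$ limit of the laws of $\widehat\Lambda^{\varepsilon,\sigma}_t$, not to prove this lemma.
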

\begin{proof}
The proof for the first statement works very similar to above using Proposition~\ref{p_turboEK}. 
Therefore we let $\sigma_\lambda \to 0$ for $\lambda \to 0$ with $\sigma_{\lambda_k}=\sigma_k$ for $\lambda_k=1/\sqrt{n_k}$ and consider the process 
$$
\widehat X_{\lambda,n}\,=\, \frac{1}{\sigma_\lambda} \left[ X^{\varepsilon \sigma_\lambda, \sigma_\lambda}_{\lambda, n}\,-\,\hat X_0 \right]\,.
$$
For $\sigma_\lambda \widehat X_{\lambda,n}+\widehat X_0 = X^{\varepsilon \sigma_\lambda, \sigma_\lambda}_{\lambda, n}$ the drift term for each step is of order $\lambda^2 \varepsilon \sigma_\lambda$ and the diffusion term of order $\lambda \sigma_\lambda$.
Similar to \eqref{eq-exp-X} one obtains therefore an equation of the form
$$
\widehat X_{\lambda,n}\,=\,\pmat{\Gamma \\ & \one} \widehat X_{\lambda,n}\,+\,
\lambda R^{-n} V^X_{\lambda,n} R^{n-1} (\sigma_\lambda \widehat X_{\lambda,n}+\widehat X_0 )\,+\,\Oo\;.
$$
where the second term of $\vx$ in \eqref{eq-def-W_lb} gets an additional $\sigma_\lambda$ factor and the drift component of 
the first term, $\va$, is proportional to $\varepsilon \lambda$.
As $\sigma_\lambda\to 0$ the estimates on the reminder terms improve and the drift and diffusion terms will not depend on  $X_{\lambda,n}$ in the limit anymore.
Therefore, we get the Brownian motion with drift,
$\widehat X_{1/\sqrt{n},\lfloor tn \rfloor }\;\Longrightarrow\; \smat{\nul \\ & \widehat \Lambda^\varepsilon_t} \widehat X_0$\,.

To see the convergence of the eigenvalue processes we need to follow the calculations of Section~\ref{sub:eigenvalue} and use the analytic version with uniform convergence for compacts in $\varepsilon$.  Note that for this case $\widehat X_0=\smat{\nul \\ & \one}$ in blocks of sizes $d_h$ and $2d_e$.
With similar notations as in Section~\ref{sub:eigenvalue} (but keeping the $\sigma$-dependence and the upper $\sigma$-index) we obtain with $\sigma=\sigma_\lambda$ that
$$
\Theta_n^* \Xx^{\varepsilon \sigma, \sigma}_{\lambda,n} \Theta_0 M^{\varepsilon\sigma,\sigma}_{\lambda,n}\,=\,
\pmat{ \smat{\nul & \bar Z^{n+1} &  Z^{n+1}} (\sigma \widehat X_{\lambda,n}+\widehat X_0) \smat{\nul \\ \one \\ -\one} & & 
\smat{\nul & \bar Z^{n+1} & Z^{n+1} } \widehat Z_{\lambda,n}  
\\ \smat{\one & \nul & \nul} \sigma \widehat X_{\lambda,n} \smat{\nul \\ \one \\ -\one}  & & \one_{d_h}}\;
$$
where $\widehat Z_{\lambda,n}=B_{\lambda,n}^{\varepsilon \sigma,\sigma} (D_{\lambda,n}^{\varepsilon \sigma,\sigma})^{-1}$.
Note that by the choice of $\sigma_k$ as above one has
$$\sigma_k^{-1} \pmat{\nul & \bar Z^{n_k+1} &  Z^{n_k+1}} \widehat X_0 \smat{\nul \\ \one \\ -\one}\,=\,
\sigma_k^{-1}\,(\bar Z^{n_k+1}-Z^{n_k+1})\,\to\, 0\,.$$
Hence, for $\lambda_k=1/\sqrt{n_k}$ we have
$$
\Theta_n^* \Xx^{\varepsilon \sigma_k, \sigma_k}_{\lambda_k,n_k} \Theta_0 M^{\varepsilon\sigma_k,\sigma_k}_{\lambda_k,n_k} \pmat{\sigma_k^{-1} \one \\ & \one}
\;\;\Longrightarrow\;\; \pmat{ \smat{\one & \one} \widehat \Lambda^\varepsilon_1 \smat{\one \\ -\one} & \nul \\ \nul & \one_{d_h} }\,.
$$
Using analytic versions with uniform convergence locally in $\varepsilon$, the zero processes in $\varepsilon$ of the determinants also converge, hence
$\frac{n_k}{\sigma_k}\; \Ee_{\sigma_k, n_k, d}\;\Longrightarrow \;{\rm spec}\left(\re(\Bb_1-\Aa_1)\right)$.
\end{proof}

\begin{proof}[Proof of Theorem~\ref{theo-GOE}]
We still restrict to the case $r=1$. For any energy $E\in (-4,4)$, the number of elliptic channels $d_e=d_e(d)$ for the transfer matrices of $H_{\lambda,n,d}$ will go to $\infty$ as $d\to\infty$.
By Lemma~\ref{lem-chaotic} we find for Lebesgue almost all such energies that the following two things hold: \\
1. For any $d$ there is no parabolic channel (i.e. $|E-a_j|\neq 2$ for all $j$)\\
2. For any $d$ the conditions of Lemma~\ref{lem-chaotic} apply.\\ 
Take such an energy $E$ and take sequences $n_k=n_k(d)$ and 
$\sigma_k=\sigma_k(d)$ satisfying the conditions of Lemma~\ref{lem-conv-RM}.

$\re(\Bb_1-\Aa_1)$ is a real, symmetric $d_e\times d_e$ random matrix whose distribution depends only on $d_e$.
As noted in \cite[Section 4]{VV1} its distribution can be written as $(d+1)^{-1/2} (K+b\one)$ where $b$ is a standard Gaussian random variable and $K$ an independent real symmetric
matrix with mean zero and Gaussian entries such that $\EE(K_{ii}^2)=5/4$ and $\EE(K_{ij}^2)=1$ for $i\neq j$.
As explained in \cite{VV1} the bulk eigenvalue process $s(d_e)$ of $\sqrt{d_e} (K+b\one)$ converges locally to the ${\rm Sine}_1$ process by methods of \cite{ESYY} when $d_e$ converges to $\infty$.

Thus, for the eigenvalue processes $\Ee_{\sigma,n,d}$ of $H_{\sigma/\sqrt{n},n,d}-E$ we find
$ \frac{ \sqrt{ d\, d_e}\, n_k(d)}{\sigma_k(d)}\;\Ee_{\sigma_k,n_k,d}\,\Rightarrow\, s(d_e)$ and $s(d_e)\Rightarrow {\rm Sine}_1$ in the topology of weak convergence.
Thus we find some diagonal sequence $(k_j,d_j)$ such that with  $n_j=n_{k_j}(d_j),\,\sigma_j=\sigma_{k_j}(d_j),\,d_{j,e}=d_e(d_j)$ one finds
$ \frac{\sqrt{d_jd_{j,e}}\, n_j}{\sigma_j}\,\Ee_{\sigma_j,n_j,d_j}\,\Rightarrow\, {\rm Sine}_1$.
\end{proof}

\appendix

\section{Correlations along different directions and SDE limit on the flag manifold \label{sec-corr}}

Let $\Tt_0$ have eigenvalues of absolute value $c$ different from $1$, and $\Tt_0$ is diagonalized (or in Jordan form)
so that the corresponding eigenspace are also the span of coordinate vectors and have no Jordan blocks.
Then applying Theorem~\ref{theo-SDE} to the products of $\Tt_{\lambda,n}/c$ gives another SDE limit. Moreover, the convergence in law holds jointly for the
processes corresponding to magnitudes $1$ and $c$ (and in fact all magnitudes). 
Let us specify the covariance structure of the driving matrix-valued Brownian motions for the different processes.
Towards this, we define
$$
h_{1c}(M):=\lim_{\lambda\to 0} \EE(\vl11^\top M \vl11^{(c)})\;,\quad
\wh h_{1c}(M):=\lim_{\lambda\to 0} \EE(\vl11^* M \vl11^{(c)})
$$
where now $M$ is a $d_1(1)\times d_1(c)$ matrix, where $d_1(c)$ is the total dimension of all eigenspaces corresponding to eigenvalues of absolute value $c$.
$\vl11^{(c)}$ denotes the corresponding $d_1(c)\times d_1(c)$ block of $V_{\lambda,n}$.
Similarly, we define $h_{cc'}$ and $\wh h_{cc'}$ for any two absolute values $c,c'$ (see also \eqref{eq:h}).
As before, we also need the $d_1(c)\times d_1(c)$ unitaries $U_c$ (like $U$ in \eqref{eq-T_0}) so that $T_0$ restricted to
the eigenspaces of magnitude $c$ acts like $c U_c$.

\begin{theo}\label{theo-SDE2}
The convergence of Theorem \ref{theo-SDE} holds jointly along all eigenspaces corresponding to
absolute values $c$ of eigenvalues of $\Tt_0$ that correspond to eigenspaces without Jordan block.
We will denote the corresponding process for the magnitude $c$ by $\Lambda^{(c)}_t$.
Then, the covariance of the driving Brownian motions $\Bb,\Bb'$ for the magnitudes $c,c'$ are given by
\begin{equation} \label{eq-variances2}
\EE(\Bb_t^\top M \Bb'_t)=
g_{cc'}(M) t\,,\quad
\EE(\Bb_t^* M \Bb'_t)=\wh g_{cc'}(M) t
\end{equation}
where
\begin{align} \label{eq-def-gc}
 g_{cc'}(M) &= \frac{1}{cc'}\int_{\langle U_c,U_{c'}\rangle} \ov u\,\ov U_c \,h_{cc'}(u^\top M v)\,U_{c'}^* v^*\,d(u,v), \;\\
 \label{eq-def-hgc}
 \wh g_{cc'}(M) &=
 \frac{1}{cc'}\int_{\langle U_c,U_{c'}\rangle} u U_c\, \wh h_{cc'}(u^* M v)\, U_{c'}^* v^*\,d(u,v)\;.
\end{align}
Here, $\langle U_c,U_{c'}\rangle$ denotes the (block diagonal) compact abelian group generated by $\smat{U_c & \\ & U_{c'}}$,
and $d(u,v)$ denotes the Haar measure on $\langle U_c,U_{c'} \rangle \ni \smat{u & \\ & v}$.
\end{theo}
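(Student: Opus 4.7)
The plan is to reduce to Theorem~\ref{theo-SDE} by a standard block diagonal trick. For any finite collection $c_1,\ldots,c_N$ of absolute values of eigenvalues of $\Tt_0$, form the direct sum
\[
 \tilde\Tt_{\lambda,n} \;=\; \bigoplus_{j=1}^N \tfrac{1}{c_j}\Tt_{\lambda,n} \;=\; \tilde\Tt_0 + \lambda \tilde\Vv_{\lambda,n} + \lambda^2 \tilde\Ww_\lambda
\]
acting on $\CC^{Nd}$, with $\tilde\Vv_{\lambda,n}=\bigoplus_j \Vv_{\lambda,n}/c_j$ and analogously for $\tilde\Ww_\lambda$. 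The key observation is that the blocks of $\tilde\Vv_{\lambda,n}$ are \emph{not} independent; they are correlated copies of the same random matrix, scaled by different factors. The matrix $\tilde\Tt_0$ still satisfies assumptions~\eqref{eq-cond-Vv-m}, \eqref{eq-cond-Vv-lim}, so Theorem~\ref{theo-SDE} applies to the product $\tilde\Tt_{\lambda,n}\cdots\tilde\Tt_{\lambda,1}$.

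The second step is to identify the data entering Theorem~\ref{theo-SDE} for $\tilde\Tt_0$. After a basis permutation grouping the magnitude-$1$ eigenspaces of all the $\tfrac{1}{c_j}\Tt_0$ together, the magnitude-$1$ block of $\tilde\Tt_0$ is the block-diagonal unitary $\tilde U=\mathrm{diag}(U_{c_1},\ldots,U_{c_N})$, and its $d_1$ has dimension $\sum_j d_1(c_j)$. The compact abelian group $\langle\tilde U\rangle$ is precisely the group $\langle U_{c_1},\ldots,U_{c_N}\rangle$ appearing in the statement, namely the closure of $\{\tilde U^k : k\in\ZZ\}$, which generally is a proper subgroup of the product of the individual $\langle U_{c_j}\rangle$'s. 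The magnitude-$1$ block $\tilde V^{11}_{\lambda,n}$ of $\tilde\Vv_{\lambda,n}$ is block-diagonal with entries $V_{\lambda,n}^{11,(c_j)}/c_j$, so for any $(d_1(c)+d_1(c'))$-sized test matrix split into blocks $M=(M^{ij})$, the bilinear limit from \eqref{eq:h} reads
\[
 \tilde h(M) \;=\; \Big(\tfrac{1}{c_ic_j}\,h_{c_ic_j}(M^{ij})\Big)_{i,j},\qquad
 \tilde{\wh h}(M) \;=\; \Big(\tfrac{1}{c_ic_j}\,\wh h_{c_ic_j}(M^{ij})\Big)_{i,j}.
\]

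Plugging this into the formulas \eqref{eq-def-g}, \eqref{eq-def-hg} of Theorem~\ref{theo-SDE} and extracting the $(c,c')$-block of the resulting covariance matrix immediately yields \eqref{eq-def-gc}, \eqref{eq-def-hgc}: integration over $\langle\tilde U\rangle$ against $u=\tilde U^{-n}$ becomes integration over pairs $(u,v)\in\langle U_c,U_{c'}\rangle$, and the scalar $1/(cc')$ drops out of $\tilde h$. The marginal covariance formulas \eqref{eq-variances} are recovered as the diagonal blocks ($c=c'$, $u=v$) and match Theorem~\ref{theo-SDE} applied to $\Tt_{\lambda,n}/c$ alone, so the joint limit has the right marginals and Gaussian joint structure; in particular the limits $\Lambda^{(c)}_t$ solve the individual SDEs with driving Brownian motions correlated as prescribed.

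The only subtle point is the bookkeeping around the Schur complement used in Theorem~\ref{theo-SDE}: one must check that forming the Schur complement for $\tilde\Tt_{\lambda,n}\cdots\tilde\Tt_{\lambda,1}$ with respect to \emph{its} growing directions (which are the direct sum over $j$ of the growing directions of $\Tt_{\lambda,n}/c_j$) is compatible with projecting onto the $j$-th summand to recover the Schur complement used for $\Tt_{\lambda,n}/c_j$ alone. Because $\tilde\Tt_{\lambda,n}$ is genuinely block-diagonal, this compatibility is automatic: the Schur-complement operations, the conjugation by $\Rr^n$ in \eqref{eq-def-Xx}, and the projection \eqref{eq-def-X-2} all respect the direct sum decomposition. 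This is really the only place where one might get lost in notation; once it is observed, the theorem follows from Theorem~\ref{theo-SDE} with essentially no further analytic work.
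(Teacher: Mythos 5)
Your proposal is correct and follows essentially the same route as the paper: the paper's proof also forms the direct sum $\Tt_{\lambda,n}/c\oplus\Tt_{\lambda,n}/c'$, applies Theorem~\ref{theo-SDE} with $U$ replaced by $U_c\oplus U_{c'}$ and $h$ replaced by the block function carrying the $1/(c c')$ factors, and reads off the covariances. Your additional remarks on the compatibility of the Schur complement with the direct-sum structure and on $\langle\tilde U\rangle$ being the closed group generated by the single block-diagonal unitary are accurate and only make explicit what the paper leaves implicit.
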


\begin{proof}
Consider the products of the direct sums 
$\widetilde \Tt_{\lambda,n}:=\Tt_{\lambda,n}/c \oplus \Tt_{\lambda,n}/c' = \pmat{\Tt_{\lambda,n} / c & \\ & \Tt_{\lambda,n} / c'}$.
In an adequate basis we can apply Theorem~\ref{theo-SDE} directly with $U$ being replaced by $\widetilde U = U_c \oplus U_{c'}$
and function $h$ being replaced by $\widetilde h \smat{M_0 & M_1 \\ M_2 & M_3}= \smat{h_{cc}(M_0)/c^2 & h_{cc'}(M_1) / (cc') \\ h_{cc'}(M_2) / (cc')  & h_{c'c'}(M_3) / c'^2}$\,.
A similar equation holds for the replacement of $\hat h$.
Then Theorem~\ref{theo-SDE} leads directly to the given statement.
\end{proof}



If the eigenvalues of $\Tt_0$ are of different absolute value, then the matrix product process grows at
different directions at different exponential rates. Hence there is no hope to get a matrix limit of the process
that captures all the directions and all the different SDE limits $\Lambda^{(c)}_t$ with their covariances at the same time.

First consider powers of the matrix $\Tt_0$ in the case it is diagonalizable and all the eigenvalues are of different positive absolute value.
Then high powers of $\Tt_0$ take most vectors close to the direction of the top eigenspace. A natural way to understand the second eigenvector through
typical behavior is through the action of $\Tt_0$ on two-dimensional subspaces. A high power of $\Tt_0$ will take two-dimensional eigenspaces into a two-dimensional space spanned by the top two eigenvectors of $\Tt_0$.

A flag is a nested sequence of subspaces of all dimensions up to $d$. The set of all such flags forms a compact manifold.
By the above argument, a high power of $\Tt_0$ takes most flags close to the flag given by
the nesting of the subspaces spanned by the top $k$ eigenvectors.

The above picture still holds when we add perturbations and consider the products
$\Tt_{\lambda,n} \cdots \Tt_{\lambda,1}$. So nothing interesting happens in this case. Things become more interesting when there are more than one eigenvalue of $\Tt_0$ for a given absolute value. If this holds for the top one, then the direction of the action of a typical vector becomes dependent on the randomness, even in the limit. The deterministic dynamics only gives that the vector will be in the subspace spanned by the eigenvectors corresponding to the top absolute value.
In this sense, the different exponential rates will still determine certain subspaces of the flag in the limit so that the limiting process will be in a specific submanifold that is invariant and attracting under the action of $\Tt_0$.

Our next theorem shows how this happens. More precisely, we will consider a flag which is typical for the behavior of powers of $\Tt_0$. This happens if the  $k$-dimensional spaces of the flag do not include directions that are spanned by subsets of eigenvectors of $\Tt_0$ corresponding to eigenvalues of lower order.  The matrix products applied to this flag will give a flag-valued process. This is described in Theorem~\ref{th-flag}.

As only invertible matrices act on a flag, suppose that for small $\lambda$ all $\Tt_{\lambda,n}$
are invertible with probability one, i.e., there is $\lambda_0$ such that for all $0\leq \lambda<\lambda_0$,
$\PP(\Tt_{\lambda,n}\; \text{is invertible for all $n$})=1$. Suppose further
that $\Tt_0$ is diagonalizable and that we chose a basis such that
\begin{equation}\label{eq-T0-diag}
 \Tt_0=\pmat{c_1 U_{c_1} & & \nul \\ & \ddots & \\ \nul & & c_k U_{c_k}}\,,\quad\text{where}
 \quad 0<c_1<c_2<  \ldots < c_k\;,
\end{equation}
with the $U_{c_j}$ being unitary $d(c_j) \times d(c_j)$ matrices.

A flag can be represented by an invertible $d\times d$ matrix $\Ff$ where the {\it last} $p$ column vectors, denoted by $\Ff^{(p)}$, span the $p$-dimensional subspace.
$\Ff_1$ and $\Ff_2$ represent the same flag if and only if $\Ff_1=\Ff_2 M$ for an invertible lower triangular matrix $M$. This forms an equivalence relation and we denote the equivalence class of $\Ff$ by $[\Ff]$.
Denoting the group of invertible, lower triangular $d\times d$ matrices by $\Delta(d)$ the flag manifold has
$$\FF=\GL(d,\CC)\,/\,\Delta(d).$$
The stable submanifold $\FF^s$ is the set of all flags such that the $d(c_1)+\ldots+d(c_j)$ dimensional subspace is spanned by the last $d(c_1)+\ldots+d(c_j)$ vectors in the standard basis, i.e.
\begin{equation*}
 \FF^s=\left\{ \left[ \smat{a_1 & & \nul \\ & \ddots & \\ \nul & & a_k } \right]\,:\, \text{for all $j$,}\;
a_j \in \GL(d(c_j))\,\right\}\,\subset\,\FF.
\end{equation*}
This is an attractor by the deterministic dynamics given by the action of $\Tt_0$ and the set of points in $\FF$ that is attracted is given by
\begin{align}
\FF^a = \left\{ \left[ \smat{a_1 & & * \\ & \ddots & \\ \nul & & a_k } \right]\,:\, \text{for all $j$,}\;
a_j \in \GL(d(c_j))\,, * \mbox{ arbitrary}\right\} \label{eq-Fa}\,.
\end{align}

To counteract all the rotations let
\begin{equation*}
 \widehat \Rr= \pmat{U_{c_1} & & \nul \\ & \ddots & \\ \nul & & U_{c_k}}\,\in\,{\rm U}(d).
\end{equation*}

\begin{theo} \label{th-flag}
Let $\Tt_0$ be as in \eqref{eq-T0-diag} and let $[\Ff_0] \in \FF^a$ be represented in the form
as described in \eqref{eq-Fa}.
Furthermore let $\Ff_{\lambda,n} = \widehat\Rr^{-n} \Tt_{\lambda,n} \cdots \Tt_{1,\lambda} \Ff_0$.

Then, for fixed $t>0$ and $n\to\infty$ we have $[\Ff_{1/\sqrt{n},\lfloor tn \rfloor}] \Longrightarrow [\Ff_t]$ in law
with
\begin{equation*}
 \Ff_t = \pmat{\Lambda^{(c_1)}_t a_1 & & \nul \\ & \ddots & \\ \nul & & \Lambda^{(c_k)}_t a_k}\,.
\end{equation*}
Here, $\Lambda^{(c_j)}_t$ are the correlated processes for the different magnitudes $c_j$ of eigenvalues of $\Tt_0$
whose correlations are described in Theorem~\ref{theo-SDE2}.
Note that $[\Ff_t] \in \FF^s$.
\end{theo}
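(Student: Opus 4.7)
The strategy is to exhibit an explicit representative of $[\Ff_{\lambda_n,n}]$ that converges in $\GL(d,\CC)$ to $\Ff_t$; flag convergence then follows from continuity of the quotient $\GL(d,\CC)\to\FF$ at invertible matrices.

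The first step is a column reduction of $\Ff_{\lambda,n}$ by right multiplication with a lower triangular matrix $M_n$, which preserves the flag class. Choosing $M_n^{(p,p)}=\one$ and $M_n^{(>p,p)}=-(\Ff_{\lambda,n}^{(>p,>p)})^{-1}\Ff_{\lambda,n}^{(>p,p)}$ (with zero above the block diagonal) makes $B_{\lambda,n}:=\Ff_{\lambda,n}M_n$ block upper triangular, and a direct computation yields the key identity
$$B_{\lambda,n}^{(i,p)}=\Ff_{\lambda,n}^{(i,p)}-\Ff_{\lambda,n}^{(i,>p)}\bigl(\Ff_{\lambda,n}^{(>p,>p)}\bigr)^{-1}\Ff_{\lambda,n}^{(>p,p)}\qquad(i\leq p).$$
The right-hand side is precisely the $(i,p)$-block of the Schur complement of $\Ff_{\lambda,n}$ with respect to the $(\leq p)\times(>p)$ decomposition, which is exactly the Schur complement appearing in Theorem~\ref{theo-SDE} applied to the rescaled matrices $\Tt_{\lambda,n}/c_p$ with initial condition $\Ff_0$ and the alternate rotation $\widehat\Rr$ (whose diagonal sub-blocks play the roles of $U_0,U,U_2$ in Remark~\ref{rem-SDE}(iv)). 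The required invertibility of $\Ff_{\lambda,n}^{(>p,>p)}$ holds with probability tending to one by arguments analogous to those in the remark following Proposition~\ref{prop-Z-small}.

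Next I rescale columns using $D_n:=\diag(c_1^n\one,\ldots,c_k^n\one)$, which is diagonal and therefore lower triangular, so $B_{\lambda,n}D_n^{-1}=\Ff_{\lambda,n}(M_nD_n^{-1})$ remains a representative of $[\Ff_{\lambda,n}]$. By Theorem~\ref{theo-SDE},
$$c_p^{-n}X^{(c_p)}_{\Ff_{\lambda_n,n}}\;\Longrightarrow\;\pmat{\nul & \\ & \Lambda^{(c_p)}_t}X_0^{(c_p)},$$
where $X_0^{(c_p)}=\Ff_0^{(\leq p,\leq p)}$ since $\Ff_0^{(>p,\leq p)}=\nul$ by the block upper triangular form of $\Ff_0$. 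Because $\Ff_0^{(\leq p,\leq p)}$ is itself block upper triangular with $a_p$ in its bottom-right block, the limit's $(p,p)$-block equals $\Lambda^{(c_p)}_t a_p$ while its $(i,p)$-blocks vanish for $i<p$. Combined with $B_{\lambda,n}^{(i,p)}=0$ for $i>p$, this shows that the $(i,p)$-block of $B_{\lambda_n,\lfloor tn\rfloor}D_n^{-1}$ converges in law to $\delta_{ip}\Lambda^{(c_p)}_t a_p$.

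By Theorem~\ref{theo-SDE2} this convergence holds jointly over all $p\in\{1,\ldots,k\}$, so $B_{\lambda_n,\lfloor tn\rfloor}D_n^{-1}$ converges in distribution to $\diag(\Lambda^{(c_1)}_t a_1,\ldots,\Lambda^{(c_k)}_t a_k)=\Ff_t$, which is almost surely invertible. Continuity of the quotient map at invertible matrices then gives $[\Ff_{\lambda_n,\lfloor tn\rfloor}]\Longrightarrow[\Ff_t]$. The main technical obstacle is the bookkeeping across the $k$ different Schur complements: one must line up the column-reduced blocks $B^{(i,p)}$ with the corresponding entries of $X^{(c_p)}_\Ff$, marry the joint limits from Theorem~\ref{theo-SDE2} with the rescaling by $D_n^{-1}$ and the alternate rotation $\widehat\Rr$, and ensure that the various invertibility events hold simultaneously with asymptotic probability one.
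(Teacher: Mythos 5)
Your proposal is correct and follows essentially the same route as the paper: both identify the flag-relevant data with the Schur complements of Theorem~\ref{theo-SDE} applied to the rescaled products $\Tt_{\lambda,n}/c_p$ (with initial condition $\Ff_0$, whose block upper triangular form makes the relevant starting Schur complement equal to $\Ff_0^{(\le p,\le p)}$, and with the rotation $\widehat\Rr$ handled as in Remark~\ref{rem-SDE}(iv)), and both invoke the joint convergence of Theorem~\ref{theo-SDE2}. The only organizational difference is that the paper embeds $\FF$ into $\prod_{p=1}^d G(p,d)$ and verifies convergence of each $p$-dimensional subspace separately with a $p$-dependent rescaling, whereas you assemble a single column-reduced, diagonally rescaled representative converging in $\GL(d,\CC)$ and conclude by continuity of the quotient map --- a legitimate and slightly tidier packaging of the same computation.
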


\begin{rem}
  If $\Tt_0$ can not be brought into the structure as in \eqref{eq-T0-diag} in general then one still obtains the SDE
  limits on the Grassmannians $G(p,d)$ for $d_2<p\leq d_2+d_1$ as in the proof. $G(p,d)$ denotes the space of $p$-dimensional subspaces of $\CC^d$.
\end{rem}

\begin{coro} \label{cor-flag} Let $\GG\subset \GL(d,\CC)$ be an algebraic group such that the quotient $\GL(d,\CC) / \GG$ is compact. Further more, let $\Tt_0$ be as above.
Then, for $\Xx_0$ in some stable manifold,
$$
\widehat\Rr^{-n} \Tt_{1/\sqrt{n},} \cdots \Tt_{1/\sqrt{n},1}\,\Xx_0\,/\,\GG\;\Longrightarrow\;\Xx_t\,/\,\GG
$$
where $\Xx_t$ satisfies some SDE.
\end{coro}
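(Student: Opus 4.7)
The plan is to reduce to Theorem~\ref{th-flag} by classical algebraic group theory combined with the continuous mapping theorem. Since $\GG$ is a closed algebraic subgroup of $\GL(d,\CC)$ and $\GL(d,\CC)/\GG$ is compact in the analytic topology, this quotient is a complete quasi-projective variety and hence projective in the algebraic sense. Borel's fixed-point theorem then forces $\GG$ to contain a Borel subgroup, i.e.\ $\GG$ is parabolic. Since all Borels of $\GL(d,\CC)$ are conjugate to $\Delta(d)$, there exists $g\in\GL(d,\CC)$ with $\Delta(d)\subseteq P:=g^{-1}\GG g$, so $P$ is a standard parabolic.

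Right multiplication by $g$ gives a homeomorphism $\GL(d,\CC)/\GG\to\GL(d,\CC)/P$ via $[X]_\GG\mapsto[Xg]_P$, and the inclusion $\Delta(d)\subseteq P$ gives a continuous surjection $\pi\colon\FF=\GL(d,\CC)/\Delta(d)\to\GL(d,\CC)/P$. Composing yields a continuous map $\Phi\colon\FF\to\GL(d,\CC)/\GG$ with $\Phi([\Ff]_\Delta)=[\Ff g^{-1}]_\GG$, through which any limit on the full flag variety pushes forward. The appropriate stable manifold for $\GG$ is the image in $\GL(d,\CC)/\GG$ of $\{\Xx_0\in\GL(d,\CC):[\Xx_0 g]_\Delta\in\FF^a\}$.

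For such $\Xx_0$, set $\Ff_0:=\Xx_0 g$, so $[\Ff_0]_\Delta\in\FF^a$. By Theorem~\ref{th-flag}, $[\widehat\Rr^{-n}\Tt_{1/\sqrt n,n}\cdots\Tt_{1/\sqrt n,1}\Xx_0 g]_\Delta$ converges in law to $[\Ff_t]_\Delta$, where $\Ff_t$ is block diagonal with blocks $\Lambda_t^{(c_j)}a_j$ driven by the correlated SDE system of Theorem~\ref{theo-SDE2}. Applying the continuous mapping theorem to $\Phi$ gives
\[
[\widehat\Rr^{-n}\Tt_{1/\sqrt n,n}\cdots\Tt_{1/\sqrt n,1}\Xx_0]_\GG\;\Longrightarrow\;[\Ff_t g^{-1}]_\GG,
\]
so we take $\Xx_t:=\Ff_t g^{-1}$, which satisfies the SDE of Theorem~\ref{theo-SDE2} right-multiplied by the constant matrix $g^{-1}$.

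The main obstacle I foresee is the algebraic-group characterization step: one must verify that ``algebraic with analytically cocompact quotient'' genuinely implies parabolicity, which rests on the GAGA-type fact that a quasi-projective complex variety is analytically compact iff it is algebraically complete, together with Borel's fixed-point theorem. Once this identification is in hand, the remainder is formal: push Theorem~\ref{th-flag} through the continuous quotient $\Phi$, and absorb the conjugation by $g$ into the initial condition and the limit process.
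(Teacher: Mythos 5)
Your proposal is correct and follows essentially the same route as the paper: identify $\GG$ as parabolic via completeness of the compact quotient, conjugate so that it contains the standard Borel $\Delta(d)$, realize $\GL(d,\CC)/\GG$ (up to the conjugating element) as a continuous quotient of the flag manifold, and push the convergence of Theorem~\ref{th-flag} forward, absorbing the conjugation into the initial condition and the limit process. The only differences are cosmetic (your conjugation convention $g^{-1}\GG g$ versus the paper's $\Gg\,\GG\,\Gg^{-1}$, and your more explicit citation of Borel's fixed-point theorem and the continuous mapping theorem).
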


\begin{proof}
 Under the conditions, $\GL(d,\CC)\,/\,\GG$ is a complex compact algebraic variety and hence a complete algebraic variety. 
 Thus, $\GG$ is a parabolic subgroup which contains a Borel subgroup. We therefore find some $\Gg \in\GL(d,\CC)$ such that the isomorphic conjugate group $\GG'=\Gg\,\GG\,\Gg^{-1}$ 
 contains the Borel subgroup $\Delta(d)$,\,i.e.
 $\Delta(d)\subset \GG'$.
 Therefore, $\GL(d,\CC)\,/\,\GG'$ is itself a quotient of the flag manifold and with $\Ff_0$ and $\Ff_{\lambda,n}$ as above we find
 $\Ff_{1/\sqrt{n},\lfloor nt \rfloor}\,/\,\GG'\,\Rightarrow\, \Ff_t\,/\,\GG'$.
 On easily notices that $M\sim_{\GG'} M'$ if and only if $ M \Gg\,\sim_\GG\, M' \Gg$, and therefore,
 $\Ff_{1/\sqrt{n},\lfloor nt \rfloor}\Gg\,/\,\GG\,\Rightarrow\, \Ff_t \Gg \,/\,\GG$. Hence, we obtain the limiting process when choosing $\Xx_0=\Ff_0 \Gg$ with $\Ff_0$ in the form as above.
\end{proof}

\begin{proof}[Proof of Theorem~\ref{th-flag}.]
As $[\Ff_0]\in\FF^a$ we can represent it by
$$\Ff_0= \smat{a_1 & & * \\ & \ddots & \\ \nul & & a_k }\,,\quad \text{for all $j$,}\;
a_j \in \GL(d(c_j))\,, * \mbox{ arbitrary}\,.$$

Let $G(p,d)$ denote the Grassmannian manifold of $p$-dimensional subspaces of $\CC^d$.
Note that $\Ff^{(p)}\in G(p,d)$.
As $\FF$ can be seen as a submanifold of $\prod_{p=1}^d G(p,d)$ it will be sufficient to
prove $\Ff^{(p)} \Rightarrow \Ff_t^{(p)}$ in $G(p,d)$ jointly for any (fixed) $p$.

As the action of $\Tt$ and $c\Tt$ on $\FF$ or $G(p,d)$ is the same, we may for fixed $p$ scale
the matrices such that $d_2<p\leq d_2+d_1$ in the sense of the definitions of $d_1, d_2$ in the Section~\ref{sub-results}
(Note that this basically means $c_j=1$ for some $j$, $d_2=d(c_1)+d(c_2)+\ldots +d(c_{j-1})$ and $d_1=d(c_j)$.)
Now for $\Ff_1, \Ff_2\in\GL(d,\CC)$ one finds that
\begin{equation}
\Ff_1^{(p)}=\Ff_2^{(p)}\,\quad \text{if and only if} \quad
\Ff_1 = \Ff_2 \pmat{M_1 & \nul \\ * & M_2}\;,\quad M_2\in\GL(p),\;M_1\in\GL(d-p)\;.
\end{equation}
Using blocks of size $d_0+d_1$ and $d_2$ and representing $[\Ff_0]\in\FF^a$ as above we find
\begin{equation}
 \Ff_0=\pmat{ A_0 & B_0 \\ \nul & D_0}\qtx{with} A_0=\pmat{ a_{00} & a_{01} \\ \nul & a_{11}}
\end{equation}
where $D_0$,  $a_{00}$ and $a_{11}$ are invertible.
Note that in fact $a_{11}=a_j$ for some $j$ as in the notations above and that
$a_{00}$ contains the $a_k$ for $k>j$ and $D_0$ contains the $a_k$ for $k< j$.
So we can choose $\Xx_0=\Ff_0$ and consider the processes $\Xx_{\lambda,n}$ as above.
Then clearly $\Ff_{\lambda,n}^{(p)} = \Xx_{\lambda,n}^{(p)}$ and in terms of representatives in $G(p,d)$
they are equivalent to
\begin{equation}\label{eq-p-conj1}
 \pmat{A_{\lambda,n} & B_{\lambda,n} \\ C_{\lambda,n} & D_{\lambda,n}}
 \pmat{\one & \nul \\ -D_{\lambda,n}^{-1} C_{\lambda,n} & D_{\lambda,n}^{-1}} =
 \pmat{X_{\lambda,n} & Z_{\lambda,n} \\ \nul & \one_{d_2}}\,.
\end{equation}
Note that from the proof of Theorem~\ref{theo-SDE} the inverse $D_{\lambda,n}^{-1}$ exists for small $\lambda$ (with sufficiently high probability) 
and therefore, as we consider invertible matrices here, we also find that $X_{\lambda,n}$ is invertible.
As $X_{\frac1{\sqrt{n}},\lfloor tn \rfloor}\smat{\nul \\ \one_{d_1}}\Rightarrow \smat{\nul \\ \Lambda_t D_0}$ with $\Lambda_t$ invertible, we find for $n\sim \lambda^{-2}$ and large $n$ 
that $\smat{ \begin{smallmatrix} \one_{d_0} \\ \nul \end{smallmatrix} & X_{\lambda,n} \smat{\nul \\ \one_{d_1} }}$ is invertible.
Hence, the right hand side of \eqref{eq-p-conj1} represents the same $p$-dimensional subspace as
\begin{equation}
 \pmat{X_{\lambda,n} & Z_{\lambda,n} \\ \nul & \one_{d_2}}
 \pmat{ X_{\lambda_n}^{-1} \smat{ \one \\ \nul} & \begin{smallmatrix} \nul \\ \one \end{smallmatrix} & \nul
\\ \nul & \nul &  \one} =
\pmat{\begin{smallmatrix} \one \\ \nul \end{smallmatrix} & X_{\lambda,n}\smat{\nul \\ \one } & Z_{\lambda,n} \\ \nul & \nul & \one_{d_2}}
\end{equation}
Therefore by Theorem~\ref{theo-SDE} we find
\begin{equation}
 \Ff_{\frac1{\sqrt{n}},\lfloor tn \rfloor}^{(p)} \quad\Longrightarrow\quad
 \pmat{\one_{d_0} & &  \\  & \Lambda_t a_{11} &  \\  &  & \one_{d_2}}^{(p)}\;=\;
 \Ff_t^{(p)}\;.
\end{equation}
The last equation is easy to see if one realizes that the last $p$ column vectors end somewhere inside the $a_{11}$ term and therefore span indeed the same $p$-dimensional subspace as $\Ff_t$.
Clearly, looking at this convergence jointly in $p$ we obtain the correlations as in Theorem~\ref{theo-SDE2}.
\end{proof}


\section{Jordan blocks, critical scalings\label{sub-Jordan} and application at band edges}

Without loss of generality we will focus on the eigenvalues of size $1$ of $\Tt_0$.
Let us introduce the notation $J_k$ for the standard $k\times k$ Jordan block with eigenvalue $1$,
and $N_k$ for the standard Jordan block with eigenvalue $0$, i.e.
\begin{equation*}
 J_k= \pmat{1 & 1 & & \nul\;\; \\  & \ddots & \ddots & \\  & & \ddots & 1 \\ \nul
 & & & 1} = \one + N_k
\end{equation*}
If a Jordan block of the form $e^{i\theta} J_k$ appears in (a possible conjugation of) $\Tt_0$
then we will do a $\lambda$-dependent conjugation.
This trick was already used in \cite{SS1} to analyze the Lyapunov exponent and density of states at a bandedge for a
one-dimensional Schr\"odinger operator. The main point is the following observation.
Define the $\lambda$-dependent, diagonal $k\times k$ matrices
\begin{equation*}
 S_{\lambda,\alpha,k}=\diag(1,\lambda^\alpha,\ldots,\lambda^{(k-1)\alpha})\,
\end{equation*}
then
\begin{equation}\label{eq-conj1}
 S_{\lambda,\alpha,k}^{-1} J_k S_{\lambda,\alpha,k} =
 \one_k + \lambda^\alpha N_k\,.
\end{equation}

Now using blocks of sizes $d_0, d_1, d_2$ as before let
\begin{equation}\label{eq-1jordan}
\Tt_0=\pmat{\Gamma_0 \\ & e^{i\theta} J_{d_1} \\ & & \Gamma_2^{-1}}\;,\quad
\Rr=\pmat{\one \\ & e^{i\theta} \one \\ & & \one}\;,\quad
\Ss_{\lambda,\alpha} = \pmat{\one_{d_0} & \\ & S_{\lambda,\alpha,d_1} \\ & & \one_{d_2}}
\end{equation}
with $\Gamma_1$ and $\Gamma_2$ having spectral radius smaller than one as before.
Conjugating $\Tt_{\lambda,n}$ by $\Ss_{\lambda,\alpha}$ will give a new drift term of order
$\lambda^{\alpha}$ coming from \eqref{eq-conj1}, but it also brings a diffusion term of order
$\lambda^{1-(d_1-1)\alpha}$ from conjugating
$\lambda  \Vv_{\lambda,n}$.
The diffusion has thus order $\lambda^{2-2(d_1-1)\alpha}$ and the most interesting SDE limit arises from balancing
the new drift term and the diffusion term, i.e. $\alpha=2-2(d_1-1)\alpha$, leading to
$\alpha=\alpha(d_1)= 2/(2d_1-1)$. For smaller $\alpha$, the drift term dominates and for larger $\alpha$, the diffusion term dominates.

In fact, only the lower left corner entry
\footnote{If the variance of that entry
happens to be identically zero (no randomness) or of lower order in $\lambda$,
then the diffusion term is of order
$\lambda^{1-(k-2)\alpha}$ (or lower again). This may lead to other interesting scalings as for smaller Jordan blocks.}
of the middle $d_1\times d_1$ block of
$\Ss_{\lambda,\alpha}^{-1} \Vv_{\lambda,n} \Ss_{\lambda,\alpha}$ will be of order
$\lambda^{\alpha/2}$, all other terms from the conjugation will be at least of order $\lambda^{3\alpha/2}$.
Hence, for the case as in \eqref{eq-1jordan} we find
\begin{equation*}
 \Ss_{\lambda,\alpha(d_1)}^{-1} \Tt_{\lambda,n} \Ss_{\lambda,\alpha(d_1)} =
 \widehat \Tt_0 + \lambda^{1/(2d_1-1)} \widehat \Vv_{n} + \lambda^{2/(2d_1-1)} \Nn
 + \lambda^{3/(2d_1-1)} \wt \Vv_{\lambda,n}
\end{equation*}
where
\begin{equation}\label{eq-2jordan}
\wh \Tt_0 = \pmat{\Gamma_0 \\ & e^{i\theta} \one \\ & & \Gamma_2^{-1}}\,,\quad
\Nn=\pmat{\nul \\ & e^{i\theta} N_{d_1} \\ & & \nul}\,,\quad
 \wh\Vv_{n} = \pmat{\nul \\ & V_{11,n} \\ & & \nul}\,.
\end{equation}
Furthermore,  $V_{11,n}$ has only one entry $v_n$ in the lower left corner, and the $v_n$ are i.i.d.
random variables with mean zero,
\begin{equation*}
 V_{11,n}=\pmat{ \nul & \nul_{(d_1-1)\times (d_1-1)}\\ v_n & \nul}\,.
\end{equation*}

Therefore, application of Theorem~\ref{theo-SDE} gives an SDE limit in the scaling
$\lambda^{\alpha_k} n = \lambda^{2/(2k-1)} n=t$:

\begin{theo}
 Let $\Tt_{\lambda,n}$ be given as in \eqref{eq-def-Tt} and let the assumptions as on page \pageref{assumption}
 and \eqref{eq-1jordan} be satisfied. Moreover let $\Ss_{\lambda,\alpha}$ be defined as above with
 $\alpha=\alpha(d_1)=2/(2d_1-1)$. Let
 \begin{equation*}
 \Xx_{\lambda,n} = \Rr^{-n} \Ss_{\lambda,\alpha}^{-1} \Tt_{\lambda,n} \cdots \Tt_{\lambda_1} \Ss_{\lambda,\alpha}
 \Xx_0
 \end{equation*}
 with $\Xx_0$ as before and let $X_{\lambda,n}$ be the corresponding Schur complement as before.
 Then
 \begin{equation*}
  X_{\frac1{\sqrt{n^{2d_1-1}}},\lfloor nt \rfloor}\quad\Longrightarrow\quad X_t =
  \pmat{\nul \\ & \Lambda_t} X_0
 \end{equation*}
\begin{equation}\label{e:jordan}
 d\Lambda_t = N_{d_1} \Lambda_t dt \;+\; \pmat{ \nul & \nul \\ dB_t & \nul} \Lambda_t\;,\qquad \Lambda_0=\one
\end{equation}
where $B_t$ is a complex Brownian motion with covariances
\begin{equation*}
 \EE(B_t^2) = e^{-2i\theta} \EE(v_n^2) \;,\qquad
 \EE(|B_t|^2) = \EE (|v_n|^2)\;.
\end{equation*}
\end{theo}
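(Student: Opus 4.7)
The plan is to reduce the statement to Theorem~\ref{theo-SDE} via the $\lambda$-dependent conjugation by $\Ss_{\lambda,\alpha}$ with $\alpha=\alpha(d_1)=2/(2d_1-1)$, setting the new scale $\mu:=\lambda^{1/(2d_1-1)}$. Under this change the scaling $\lambda=n^{-(2d_1-1)/2}$ becomes $\mu=n^{-1/2}$, and the time $\lfloor nt\rfloor$ corresponds to $\lfloor t/\mu^2\rfloor$, which matches the critical scaling of Theorem~\ref{theo-SDE}. A direct entry-wise computation based on \eqref{eq-conj1} and the identity $(\Ss^{-1}M\Ss)_{ij}=M_{ij}\,\Ss_{jj}/\Ss_{ii}$ yields
\begin{equation*}
\Ss_{\lambda,\alpha}^{-1}\Tt_{\lambda,n}\Ss_{\lambda,\alpha}\;=\;\wh\Tt_0\,+\,\mu^2\Nn\,+\,\mu\,\wt\Vv_{\mu,n}\,+\,\mu^2 R_\mu,
\end{equation*}
where $\wt\Vv_{\mu,n}:=\mu^{2d_1-2}\,\Ss^{-1}\Vv_{\lambda,n}\Ss$ is centered and $R_\mu:=\mu^{4d_1-4}\,\Ss^{-1}\Ww_\lambda\Ss=o(1)$. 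Setting $\wt\Ww_\mu:=\Nn+R_\mu\to\Nn$ puts the conjugated product in the form required by Theorem~\ref{theo-SDE}.

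Next I verify the hypotheses. The matrix $\wh\Tt_0$ is block diagonal with the original $\Gamma_0,\,\Gamma_2^{-1}$ and middle unitary block $U=e^{i\theta}\one_{d_1}$. The entry-wise scaling factor $\Ss_{jj}/\Ss_{ii}$ attains its maximal value $\mu^{-2(d_1-1)}$ only at positions where the chosen prefactor $\mu^{2d_1-2}$ exactly compensates, so $\|\wt\Vv_{\mu,n}\|\le c(d)\,\|\Vv_{\lambda,n}\|$ uniformly. This transfers the moment bound \eqref{eq-cond-Vv-m} and the existence of the limits in \eqref{eq-cond-Vv-lim} from $\Vv$ to $\wt\Vv$. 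Theorem~\ref{theo-SDE} therefore applies at scale $\mu$ and produces the SDE limit for the Schur complement, namely $X_{\lambda,\lfloor nt\rfloor}\Rightarrow\smat{\nul&\\&\Lambda_t}X_0$, where $\Lambda_t$ solves an SDE whose coefficients I now identify.

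Because $U=e^{i\theta}\one_{d_1}$ is scalar, every $u\in\langle U\rangle$ commutes with every matrix, so $u^*Mu=M$ and $u^\top Mu=u^2 M$. The combination $W=\lim\wt{W}_{11,\mu}-\lim\EE(\wt V_{12}\Gamma_2\wt V_{21})$ reduces to $W=e^{i\theta}N_{d_1}$: the middle block of $\Nn$ contributes $e^{i\theta}N_{d_1}$, and $\wt V_{21}$ is $O(\mu^{2d_1-2})=o(1)$ because the outer-to-middle scaling suppresses it. Then \eqref{eq-def-V} gives $V=uWU^*u^*=e^{-i\theta}W=N_{d_1}$, uniformly in $u\in\langle U\rangle$. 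The only entry of $\wt V_{11}$ that survives in the limit is the $(d_1,1)$-entry $v_n$, so
\begin{equation*}
h(M)=\EE(v_n^2)\,M_{d_1,d_1}\,e_1 e_1^\top,\qquad \wh h(M)=\EE(|v_n|^2)\,M_{d_1,d_1}\,e_1 e_1^\top,
\end{equation*}
and the scalar character of $u$ collapses \eqref{eq-def-g} and \eqref{eq-def-hg} to $g(M)=e^{-2i\theta}h(M)$ and $\wh g(M)=\wh h(M)$. These covariances force the driving matrix Brownian motion $\Bb_t$ to have a single nonzero scalar entry $B_t$ at position $(d_1,1)$ satisfying $\EE(B_t^2)=e^{-2i\theta}\EE(v_n^2)\,t$ and $\EE(|B_t|^2)=\EE(|v_n|^2)\,t$, producing exactly the SDE \eqref{e:jordan}.

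The main obstacle I anticipate is a book-keeping point in the middle of the argument: after the $\Ss$-conjugation the outer blocks $\wt V_{10}$ and $\wt V_{12}$ retain $O(1)$ entries in their last row (not only the $(d_1,1)$-entry of $\wt V_{11}$ survives). I need to check that these extra surviving entries do not corrupt the limit for $\Lambda_t$. Fortunately Theorem~\ref{theo-SDE} isolates the dependence of $\Lambda_t$ on $V$ (which only involves $W_{11}$ together with the vanishing $\EE(V_{12}\Gamma_2 V_{21})$) and on $h,\wh h$ (which only see the $V_{11}$-block); the extra off-block noise feeds only into $Z_{\lambda,n}$ and $P_0X_{\lambda,n}$, which are already driven to zero by Propositions~\ref{prop-Z-small} and \ref{prop-P0X}. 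Once this decoupling is documented, the rest of the proof follows directly from the application of Theorem~\ref{theo-SDE}.
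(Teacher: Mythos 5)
Your proposal is correct and follows essentially the same route as the paper: conjugate by $\Ss_{\lambda,\alpha}$, recognize the result as an instance of \eqref{eq-def-Tt} at the new scale $\mu=\lambda^{1/(2d_1-1)}$ with perturbation $\wt\Vv_{\mu,n}$ and drift $\Nn+o(1)$, and read off $W$, $h$, $\wh h$, $g$, $\wh g$ using that $U=e^{i\theta}\one_{d_1}$ is scalar. Your explicit treatment of the $O(1)$ entries surviving in the last row of $\wt V_{10}$ and $\wt V_{12}$ is in fact slightly more careful than the paper's own bookkeeping, and your verification that they enter neither $h,\wh h$ nor the limit of $\EE(V_{12}\Gamma_2 V_{21})$ is exactly the right justification.
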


Note that for a vector $x(t)= \Lambda_t x(0)$ equation \eqref{e:jordan} is equivalent to
\begin{equation}\label{e:jordan2}
 x_1^{(d_1)}= x_1B'
 \end{equation}
and $x_{j+1}=x_1^{(j)}$, the  $j$th derivative of $x_1$, and $B'$ is the (distributional) derivative of the Brownian motion term.

\begin{rem} 
The original drift term coming from $\lambda^2 \Ww_\lambda$ is of too low order after the conjugation with $\Ss_{\lambda,\alpha}$ to matter in the limit.
If one wants an additional drift term in \eqref{e:jordan2} on the right hand side coming from an added term $\lambda^\beta \Ww$ then 
the conjugation $\Ss_{\lambda,\alpha}^{-1} \lambda^\beta \Ww \Ss_{\lambda,\alpha}$ needs to produce a
term of order $\lambda^{\frac{2}{2d_1-1}}=\lambda^\alpha$. If $\Ww$ is not zero in the lower left corner of the corresponding $d_1\times d_1$ block for the SDE limit, then one needs
$\beta-(d_1-1)\alpha=\alpha$, i.e. $\beta=d_1\alpha= 2d_1 / (2d_1-1)$.
\end{rem}

Jordan blocks do appear at so-called band-edges for transfer matrices of one-dimensional random Schr\"odinger operators with some finite range hopping.
Similar as in Section~\ref{sub-RS} consider the random family of random real symmetric matrices $H^{(d)}_{\lambda,n}$ acting on 
$\CC^n\,\ni \psi=(\psi_1,\ldots,\psi_n)$ given by
\begin{equation}
 (H^{(d)}_{\lambda,n} \psi)_k = \sum_{j=0}^{2d} (-1)^{j} \binom{2d}{j} \psi_{k-d+j} \,+\,\lambda v_k\,\psi_k\;,
\end{equation}
where $\psi_j=0$ for $j<1$ and $j>n$. We may sometimes drop the index $n$.
The $v_k$ are independent, identically distributed real random variables with variance $\EE(v_k^2)=1$. Note that for $d=1$ this operator corresponds to \eqref{eq-def-H} with $A=-2$.
The eigenvalue equation $H^{(d)}_\lambda \psi = E\psi$ can be rewritten as
\begin{equation*}
\vec{\psi}_k = (T+(E-\lambda v_k)\,S)\vec{\psi}_{k-1},\qtx{where}{\vec{\psi}_k}=(\psi_{k+d},\psi_{k+d-1},\ldots,\psi_{k-d+1})^\top
\end{equation*}
 and $S$ and $T$ are $2d\times 2d$ matrices given by:
$S_{1,d}=1$ and all other entries of $S$ are zero;
 $T_{1,k}=(-1)^{k+1} \binom{2d}{2d-k}$, $T_{j,j-1} =1$ for $j\geq 2$ and all other entries of $T$ are zero, i.e.
\begin{equation*}
 T = \pmat{\binom{2d}{2d-1} &  \cdots & \binom{2d}{1} & - \binom{2d}{0} \\ 1 & & & 0 \\ 
 & \ddots & & \vdots\\ & & 1 & 0}\;.
\end{equation*}
For $E=0$ and $\lambda=0$ the transfer matrix $T$ is equivalent to a 
Jordan block\footnote{In fact $E=0$ is at the edge of the spectrum of the operator $H^{(d)}_0$ in the limit $n\to \infty$; it is the upper edge for $d$ odd and the lower edge for $d$ even.}  of maximum size for the eigenvalue $1$. 
In order to bring it into the Jordan form, let us define the Pascal-triangle type
matrix $M$ by $M_{jk}=\binom{2d-j}{k-1}$ for $k+j\leq 2d+1$ and zero for all other entries, then one has $M^{-1}_{jk}=(-1)^{j+k} \binom{j-1}{2d-k} $ for $k+j\geq 2d+1$ and all other entries zero, i.e.
\begin{equation*}
 M=\pmat{1 & 2d-1 & \binom{2d-1}{2} & \cdots &  1\\ 1 & 2d-2 & \cdots & 1 \\ 1 & \cdots & 1 \\ \vdots & \iddots \\ 1}\;,\quad
 M^{-1}=\pmat{ & & & & 1 \\ & & & 1 & -1 \\ & & 1 & -2 & 1 \\ & \iddots & \cdots & \cdots & \vdots \\ 1 & \cdots & -\binom{2d-1}{2} & 2d-1 & -1}\;.
\end{equation*}
Then some calculation shows $M^{-1} T M = J_{2d}$ where $J_{2d}$ is the Jordan matrix as defined above. For the conjugation of the whole transfer matrix $T+E-\lambda v_k)S$ we also need to calculate
$M^{-1} S M$. Its entries are given by
$(M^{-1} S M)_{j,k}=M^{-1}_{j,1} S_{1,d} M_{d,k}$ which is only not zero if $j=2d$ and $k\leq d+1$ in which case $(M^{-1} S M)_{2d,k}=\binom{d}{k-1}$, i.e.
\begin{equation*}
 M^{-1} S M = \pmat{0 &  & \cdots & & & \cdots & & 0 \\ \vdots & & & & & & & \vdots \\ 0 & & \cdots & & & \cdots & & 0 \\ 1 & d & \binom{d}{2} & \cdots & 1 & 0 & \cdots & 0}\;.
\end{equation*}
In particular the lower left corner has the entry $1$.
As above let $\alpha=\frac{2}{4d-1}$ 
and as in the remark scale energy differences by $E=\epsilon \lambda^{2d\alpha}$ to obtain
\begin{equation*}
 \Tt_{\lambda,k} := \Ss_{\lambda,\alpha}^{-1} M^{-1} (T+(\lambda^{2d\alpha}\epsilon-\lambda v_k)S)M\,\Ss_{\lambda,\alpha}\,=\,
 \one-\lambda^{\frac{\alpha}{2}} v_k\,Q  + \lambda^{\alpha} \left[ \Nn_{2d}+\epsilon Q\right]+\Oo(\lambda^{\frac{3\alpha}{2}})\;.
\end{equation*}
Then, for any vector $x\in\CC^{2d}$ we find
\begin{equation*}
 \Tt_{n^{-1/\alpha},\lfloor nt \rfloor}\,x\;\Longrightarrow\; x(t)\qtx{with}
x(0)=x,\quad x_1^{(2d)} = x_1  B' + \epsilon x_1\;,\quad x_{j+1}=x_1^{(j)}
\end{equation*}
where $B'$ is the distributional derivative of a standard, real, one-dimensional Brownian motion.

Following the arguments of \cite{KVV} or the arguments of the proof of Theorem~\ref{theo-EV} one could show that
(along suitable subsequences so that the boundary conditions converge) the eigenvalue process of $n^{2d} H^{(d)}_{n^{-1/\alpha},n}$ with $\alpha=2/(4d-1)$ converges to the process of eigenvalues of the random operator
$$
\partial_x^{2d} - B'
$$
acting on the interval $[0,1]$ with appropriate boundary conditions.
For periodic boundary conditions this is a generalization of the random Hill operator (at $d=1$).


\begin{thebibliography}{99}

\bibitem[Aiz]{Aiz}  M. Aizenman, {\sl Localization at weak disorder:  some elementary bounds},
Rev. Math. Phys. {\bf 6}, 1163-1182 (1994)

\bibitem[AM]{AM}  M. Aizenman and S. Molchanov, {\sl Localization at large disorder and extreme
energies:  an elementary derivation},  Commun. Math. Phys. {\bf 157}, 245-278 (1993)

\bibitem[ASW]{ASW} M. Aizenman, R. Sims and S. Warzel, {\sl Stability of the absolutely continuous spectrum of random
Schr\"odinger operators on tree graphs}, Prob. Theor. Rel. Fields, {\bf 136}, 363-394 (2006)

\bibitem[AW]{AW} M. Aizenman and S. Warzel,
{\sl Resonant delocalization for random Schr\"odinger operators on tree graphs}, preprint 
arXiv:1104.0969 (2011)

\bibitem[And]{And} P. W. Anderson,
{\sl Absence of diffusion in certain random lattices},
Phys. Rev. {\bf 109}, 1492-1505 (1958)



\bibitem[BBR]{BR} S. Bachmann, M. Butz, W. de Roeck,
{\sl Disordered Quantum Wires: Microscopic Origins of the DMPK Theory and Ohm's Law},
J. Stat. Phys. {\bf 148}, 164–189 (2012)




\bibitem[CKM]{CKM}  R. Carmona, A. Klein and F. Martinelli,  
{\sl Anderson localization for Bernoulli and other singular potentials}, 
Commun. Math. Phys. {\bf 108}, 41-66 (1987)

\bibitem[DG]{DG} P. Deift, D. Gioev, {\sl Random Matrix Theory: Invariant Ensembles and Universality}, Courant Lect. Notes Math.,
vol. 18, American Mathematical Society, Providence, RI, 2009.


\bibitem[DLS]{DLS}  F. Delyon, Y. Levy and B. Souillard,  {\sl Anderson
localization for multidimensional systems at large disorder or low
energy},  Commun. Math. Phys. {\bf 100}, 463-470 (1985)

\bibitem[ESY]{ESY} L.~Erd\"os, B.~Schlein and H.-T. Yau,
{\sl Universality of random matrices and local relaxation flow.}, Inventiones Math., {\bf 185}, 75-119 (2011)

\bibitem[ESYY]{ESYY} L.~Erd\"os, B.~Schlein, H.-T. Yau and J. Yin,
{\sl The local relaxation flow approach to universality of the local statistics for random matrices},
An. Inst. Henri Poincare Prob. Stat. {\bf 48}, 1-46 (2012) 


\bibitem[EYY]{EYY} L.~Erd\"os, H.-T. Yau and J. Yin,
{\sl Rigidity of eigenvalues of generalized Wigner matrices}, 
Adv. Math. {\bf 229}, 1435-1515 (2012)

\bibitem[EK]{EK} S. N. Ethier and T. G. Kurtz, {\sl Markov processes}, John Wiley \& Sons Inc., New York, 1986.


\bibitem[FHH]{FHH} R. Froese, F. Halasan and D. Hasler, 
{\sl Absolutely continuous spectrum for the Anderson model on a product
of a tree with a finite graph}, J. Funct. Analysis {\bf 262}, 1011-1042 (2012)

\bibitem[FHS]{FHS} R. Froese, D. Hasler and W. Spitzer, 
{\sl Absolutely continuous spectrum for a random potential on a tree with strong transverse correlations and large weighted loops}, Rev. Math. Phys. {\bf 21}, 709-733 (2009)

\bibitem[FS]{FS}  J. Fr\"ohlich and T. Spencer,  {\sl Absence of diffusion in the
Anderson tight binding model for large disorder or low energy},
Commun. Math. Phys. {\bf 88}, 151-184 (1983)

\bibitem[GK]{GK} F. Germinet and F. Klopp,
 {\sl Spectral statistics for random Schr\"odinger operators in the localized regime},
 preprint, arXiv:1011.1832 (2010)

 
\bibitem[GMP]{GMP}  Ya. Gol'dsheid, S. Molchanov and L. Pastur,  
{\sl Pure point spectrum of stochastic one dimensional Schr\"odinger operators},
Funct. Anal. Appl. {\bf 11}, 1-10 (1977)


\bibitem[Joh]{Joh} K. Johansson, {\sl Universality of the local spacing distribution in certain ensembles of Hermitian Wigner matrices},
Comm. Math. Phys. {\bf 215}, 683–705 (2001)


\bibitem[KLW]{KLW} M. Keller, D. Lenz and S. Warzel, 
{\sl Absolutely continuous spectrum for random operators on trees of finite cone type}, J. D' Analyse Math. {\bf 118}, 363-396


\bibitem[Kl]{K1}  A. Klein, {\sl Absolutely continuous spectrum in the Anderson model on the Bethe lattice},
  Math. Res. Lett. {\bf 1}, 399-407 (1994)

\bibitem[KLS]{KLS}  A. Klein, J. Lacroix and A. Speis,  {\sl Localization for the
Anderson model on a strip with singular potentials},  J. Funct. Anal.
 {\bf  94}, 135-155 (1990)

\bibitem[KS]{KS} A. Klein and C. Sadel, {\sl Absolutely Continuous Spectrum for Random Schr\"odinger Operators
on the Bethe Strip}, Math. Nachr. {\bf 285}, 5-26 (2012)

\bibitem[Klo]{Klo} F. Klopp,  {\sl Weak disorder localization and Lifshitz tails},  
Commun. Math. Phys. \textbf{232}, 125-155 (2002)

\bibitem[KuS]{KuS}  H. Kunz and B. Souillard,  {\sl Sur le spectre des operateurs
aux differences finies aleatoires},  Commun. Math. Phys. {\bf 78},
201-246 (1980)

\bibitem[KVV]{KVV} E.~Kritchevski, B.~Valk\'o,  B.~Vir\'ag,
{\sl The scaling limit of the critical one-dimensional random Schrodinger operator},
preprint 2011, arXiv:1107.3058.

\bibitem[Lac]{Lac}  J. Lacroix, {\sl Localisation pour l'op\'erateur de
Schr\"odinger al\'eatoire dans un ruban},  Ann. Inst. H. Poincar\'e ser
{\bf A40}, 97-116 (1984)

\bibitem[Me]{Me} M. Mehta, {\sl Random matrices}, Elsevier/Academic Press, Amsterdam, 2004

\bibitem[Mi]{Mi} N. Minami, {\sl Local fluctuation of the spectrum of a multidimensional Anderson
tight binding model}, Commun. Math. Phys. {\bf 177}, 709-725 (1996)

\bibitem[RS]{RS} R. R\"omer and H. Schulz-Baldes,
{\sl The random phase property and the Lyapunov spectrum for disordered multi-channel systems}, J. Stat. Phys. {\bf 140}, 122-153 (2010) 

\bibitem[Sa1]{Sa1} C. Sadel, 
{\sl Relations between transfer and scattering matrices in the presence of hyperbolic channels},
J. Math. Phys. {\bf 52}, 123511 (2011)

\bibitem[Sa2]{Sa2} C. Sadel,
{\sl Absolutely continuous spectrum for random Schr\"odinger operators on tree-strips of finite cone type},
Annales Henri Poincar\'e, {\bf 14}, 737-773 (2013)

\bibitem[Sa3]{Sa3} C. Sadel,
{\sl Absolutely continuous spectrum for random Schr\"odinger operators on the Fibonacci and similar
tree-strips}, Math. Phys. Anal. Geom. published online, DOI 10.1007/s11040-014-9163-4 (2014)

\bibitem[Sa4]{Sa4} C. Sadel, 
{\sl Anderson transition at 2 dimensional growth rate on antitrees and spectral theory for operators with one propagating channel},
arXiv:1501.04287, to appear in Annales Henri Poincar\'e


\bibitem[SS1]{SS1} C. Sadel and H. Schulz-Baldes,
{\sl Scaling diagram for the localization length at a band edge},
Annales Henri Poincare {\bf 8}, 1595-1621 (2007)

\bibitem[SS2]{SS2}
C. Sadel and H. Schulz-Baldes,
{\sl Random Lie group actions on compact manifolds: A perturbative analysis},
Annals of Prob. {\bf 38}, 2224-2257 (2010)

\bibitem[SS3]{SS3} C. Sadel and H. Schulz-Baldes, 
{\sl Random Dirac Operators with time reversal symmetry},
Commun. Math. Phys. {\bf 295}, 209-242 (2010)

\bibitem[Sha]{Sha} M. Shamis,
{\sl Resonant delocalization on the Bethe strip},
 Annales Henri Poincare, published online, DOI 10.1007/s00023-013-0280-6

\bibitem[Shc]{Shc} T. Shcherbina, {\sl Universality of the local regime for the block band
matrices with a finite number of blocks}, preprint, arXiv:1309.2120
 

\bibitem[SW]{SW}  B. Simon and T. Wolff, {\sl Singular continuum spectrum under
rank one perturbations and localization for random Hamiltonians},
Commun. Pure. Appl. Math. {\bf 39}, 75-90 (1986)

\bibitem[SV]{SV} D. W. Stroock and S. R. S. Varadhan, {\sl Multidimensional diffusion processes}, 
Classics in Mathematics. Springer-Verlag, Berlin, 1979.



\bibitem[TV]{TV} T. Tao and V. Vu,
{\sl Random matrices: Universality of local eigenvalue statistics}, Annals of Prob. {\bf 40}, 1285-1315 (2012)

\bibitem[VV1]{VV1} B. Valk\'o, B. Vir\'ag {\sl Random Schr\"odinger operators on long boxes,
noise explosion and the GOE}, Trans. Amer. Math. Soc. {\bf 366}, 3709-3728 (2014)


\bibitem[VV2]{VV2} B. Valk\'o, B. Vir\'ag
{\sl Continuum limits of random matrices and the Brownian carousel},
Inventiones Math., {\bf 177}, 463--508

\bibitem[Wa]{Wa} W.-M Wang, {\sl Localization and universality of Poisson statistics
 for the multidimensional Anderson model at weak disorder}, 
 Invent. Math. \textbf{146}, 365-398 (2001)

\bibitem[Wi]{Wi} E. P. Wigner,
Gatlinberg Conference on Neutron Physics, Oak Ridge National Laboratory Report, ORNL 2309:59


\end{thebibliography}
\end{document}